\renewcommand{\paragraph}{%
  \@startsection{paragraph}{4}%
  {\z@}{-.1\baselineskip \@plus -2\p@ \@minus -.2\p@}{-3.5\p@}%
  {\bfseries\@parfont\@adddotafter}%
}
\newcommand\notsotiny{\@setfontsize\notsotiny\@viipt\@viiipt}
\definecolor{Gray}{gray}{0.6}
\definecolor{Emerald}{RGB}{0, 128, 0}
\definecolor{NavyBlue}{RGB}{0, 0, 128}
\definecolor{Magenta}{RGB}{128, 0, 0}
\definecolor{Brown}{RGB}{128, 128, 0}
\definecolor{Darkgreen}{RGB}{0, 100, 0}
\newcommand{\iql}[0]{GenSQL}
\newcommand{\tableModel}{rowModel}
\newcommand{\TableModel}{RowModel}
\newcommand{\tableModels}{rowModels}
\newcommand{\ami}{AMI}
\newcommand{\RR}{\mathbb{R}}
\newcommand{\BB}{\mathbb{B}}
\newcommand{\NN}{\mathbb{N}}
\newcommand{\ZZ}{\mathbb{Z}}
\newcommand{\syntaxcolor}[0]{\color{blue!65!black}}
\newcommand{\syntaxcolorgreen}[0]{\color{Darkgreen}}
\newcommand{\colforsyntax}[1]{\mathbf{\syntaxcolor{#1}}}
\newcommand{\revision}[1]{#1}
\newcommand{\oldcons}[1]{#1}
\newcommand{\newcons}[1]{}
\newcommand{\true}{\colforsyntax{true}}
\newcommand{\false}{\colforsyntax{false}}
\newcommand{\colorbasetype}[1]{\textbf{#1}}
\newcommand\real{\colorbasetype{Real}}
\newcommand\posreal{\colorbasetype{PosReal}}
\newcommand\ranged{\colorbasetype{Ranged}}
\newcommand\bool{\colorbasetype{Bool}}
\newcommand\nat{\colorbasetype{Nat}}
\newcommand\inte{\colorbasetype{Int}}
\newcommand\str{\colorbasetype{Str}}
\newcommand\strsem{\mathbf{Str}}
\newcommand\categorical{\colorbasetype{Cat}}
\newcommand\bag{\colorbasetype{Bag}}
\newcommand\vars{\mathbf{vars}}
\newcommand\condvars{\mathbf{condvars}}
\newcommand\Null{\colforsyntax{Null}}
\newcommand\gor{\mathrel{\lvert}}
\newcommand{\tabty}[1]{\{#1\}}
\newcommand{\col}{\textsc{col}}
\newcommand{\cols}{\textsc{cols}}
\newcommand{\id}{\textsc{id}}
\newcommand{\colorsyntaxiql}[1]{\textbf{\notsotiny \syntaxcolor{#1}}}
\newcommand{\union}{\ \colorsyntaxiql{UNION}\ }
\newcommand{\join}{\ \colorsyntaxiql{JOIN}\ }
\newcommand{\where}{\ \colorsyntaxiql{WHERE}\ }
\newcommand{\rename}{\ \colorsyntaxiql{RENAME}\ }
\newcommand{\dedup}{\ \colorsyntaxiql{DEDUP}\ }
\newcommand{\as}{\ \colorsyntaxiql{AS}\ }
\newcommand{\select}{\ \colorsyntaxiql{SELECT}\ }
\newcommand{\from}{\ \colorsyntaxiql{FROM}\ }
\newcommand{\group}{\ \colorsyntaxiql{GROUP}\ }
\newcommand{\by}{\ \colorsyntaxiql{BY}\ }
\newcommand{\exceptt}{\ \colorsyntaxiql{EXCEPT}\ }
\newcommand{\aggr}{\ \colorsyntaxiql{AGGREGATING}\ }
\newcommand{\generate}{\ \colorsyntaxiql{GENERATE}\ \colorsyntaxiql{UNDER}\ }
\newcommand{\probof}{\ \colorsyntaxiql{PROBABILITY}\ \colorsyntaxiql{OF}\ }
\newcommand{\mi}{\ \colorsyntaxiql{MUTUAL}\ \colorsyntaxiql{INFO}\ }
\newcommand{\limit}{\ \colorsyntaxiql{LIMIT}\ }
\newcommand{\under}{\ \colorsyntaxiql{UNDER}\ }
\newcommand{\generative}{\colorsyntaxiql{GENERATIVE}}
\newcommand{\modjoin}{\ \generative \join}
\newcommand{\given}{\ \colorsyntaxiql{GIVEN}\ }
\newcommand{\duplicate}{\ \colorsyntaxiql{DUPLICATE}\ }
\newcommand{\timess}{\ \colorsyntaxiql{TIMES}\ }
\newcommand{\with}{\ \colorsyntaxiql{WITH}\ }
\newcommand{\ttable}{\ \colorsyntaxiql{TABLE}\ }
\newcommand{\model}{\ \colorsyntaxiql{MODEL}\ }
\newcommand{\coloraggr}[1]{\colforsyntax{#1}}
\newcommand{\Sum}{\coloraggr{Sum}}
\newcommand{\Avg}{\coloraggr{Avg}}
\newcommand{\Max}{\coloraggr{Max}}
\newcommand{\Min}{\coloraggr{Min}}
\newcommand{\Count}{\coloraggr{Count}}
\newcommand{\Concat}{\coloraggr{Concat}}
\newcommand{\Countdis}{\coloraggr{CountDistinct}}
\newcommand{\ctx}{\ensuremath{\Gamma; \Delta}}
\newcommand{\dotmeas}{\text{.meas}}
\newcommand{\dotpdf}{\text{.pdf}}
\newcommand{\sem}[1]{\llbracket #1\rrbracket}
\newcommand{\semex}[1]{\llbracket #1\rrbracket_{\text{exact}}}
\newcommand{\semap}[1]{\llbracket #1\rrbracket_{\text{approx}}}
\newcommand{\tupsem}[1]{\mathsf{Tup}\llbracket #1\rrbracket}
\newcommand{\tabsem}[1]{\mathsf{Tab}\llbracket #1\rrbracket}
\newcommand{\measm}{\mathcal{P}}
\newcommand{\admissible}{\measm_{\text{adm}}}
\newcommand{\aadmissible}{\measm_{\text{adm}}^{\text{approx}}}
\newcommand{\dens}{\measm_{\text{dens}}}
\newcommand{\opset}{\textbf{Op}}
\newcommand{\op}{\textit{op}}
\newcommand{\bind}[0]{\mathop{\gg\mkern-10mu\scalebox{1}[1]{=}}}
\newcommand{\mathopcolor}[1]{\mathbf{#1}}
\newcommand{\return}[0]{\mathopcolor{return}\ }
\newcommand{\filter}[0]{\mathopcolor{filter}\ }
\newcommand{\map}[0]{\mathopcolor{map}\ }
\newcommand{\maptwo}[0]{\mathopcolor{map2}\ }
\newcommand{\splat}[0]{\mathopcolor{splat}\ }
\newcommand{\lowerjoin}[0]{\mathopcolor{join}}
\newcommand{\fold}[0]{\mathopcolor{fold}\ }
\newcommand{\llet}[0]{\mathopcolor{let}\ }
\newcommand{\iin}[0]{\mathopcolor{in}\ }
\newcommand{\iif}[0]{\mathopcolor{if}\ }
\newcommand{\tthen}[0]{\mathopcolor{then}\ }
\newcommand{\eelse}[0]{\mathopcolor{else}\ }
\newcommand{\val}[0]{\mathopcolor{val}\ }
\newcommand{\eventtype}{C^1}
\newcommand{\densitytype}{C^0}
\newcommand{\evtype}{\mathcal{E}}
\newcommand{\tabtype}{\mathcal{T}}
\newcommand{\modtype}{\mathcal{M}}
\newcommand{\Dis}{\mathbf{Dis}}
\newcommand{\Cond}{\mathbf{Cond}}
\newcommand{\mapreduce}[0]{\mathopcolor{mapreduce}\ }
\newcommand{\replicate}[0]{\mathopcolor{replicate}\ }
\newcommand{\colset}{\mathcal{C}}
\newcommand{\idset}{\mathcal{I}}
\newcommand{\interfacecolor}[0]{\color{green!55!black}}
\newcommand{\colforatmi}[1]{\mathbf{\interfacecolor{#1}}}
\newcommand{\simulate}{\colforatmi{simulate}}
\newcommand{\logpdf}{\colforatmi{logpdf}}
\newcommand{\prob}{\colforatmi{prob}}
\newcommand{\transymbol}[0]{\mathcal{T}}
\newcommand{\tran}[2]{\transymbol_{#1}\left\{#2\right\}}
\newcommand{\tranty}[1]{\transymbol\left\{#1\right\}}
\newcommand{\hostexp}{\mathbf{exp}}
\newcommand{\singleton}{\mathbf{singleton}}
\newcommand{\stseq}{{\mathrm{StochSeq}}}
\newcommand{\toas}{\xrightarrow{\mathrm{a.s.}}}
\newcommand{\safe}{\mathbf{safe?}}
\newcommand{\exact}{\mathbf{exact?}}
\newcommand{\continuous}{\mathbf{continuous?}}
\newcommand{\Model}{\mathrm{Model}}
\newcommand{\SPE}{\mathrm{SPE}}
\newcommand{\TMVG}{\mathrm{TMVG}}
\newcommand{\ancestral}{\mathbf{ancestral}}
\newcommand{\cond}{\mathbf{cond}}
\newcommand{\truncate}{\mathbf{truncate}}
\newcommand{\marginalize}{\mathbf{marginalize}}
\lstdefinestyle{iql}{
    language=SQL,
    basicstyle=\tiny\ttfamily,
    columns=fullflexible,
    keepspaces=true,
    upquote=true,
    showstringspaces=false,
    comment=[l]{\#},
    commentstyle=\color{Gray}\ttfamily\bfseries,
    keywordstyle=\tiny\bfseries\syntaxcolor,
    stringstyle=\color{Magenta},
    escapechar=\@,
    breaklines=true,
    captionpos=b,
    frame=single,
    numbers=none,
    numbersep=10pt,
    showtabs=false,
    deletekeywords={column,value},
    literate=
        {=}{{{{\bfseries\syntaxcolor=}}}}1
        {>}{{{{\bfseries\syntaxcolor>}}}}1
        {<}{{{{\bfseries\syntaxcolor<}}}}1
        {+}{{{{\bfseries\syntaxcolor+}}}}1
        {-}{{{{\bfseries\syntaxcolor-}}}}1
        {*}{{{{\bfseries\syntaxcolor*}}}}1
        {/}{{{{\bfseries\syntaxcolor/}}}}1
        {,}{{{{\bfseries\syntaxcolor,}}}}1
        {:}{{{{\bfseries\syntaxcolor:}}}}1
        {0}{{{{\color{Emerald}0}}}}1
        {1}{{{{\color{Emerald}1}}}}1
        {2}{{{{\color{Emerald}2}}}}1
        {3}{{{{\color{Emerald}3}}}}1
        {4}{{{{\color{Emerald}4}}}}1
        {5}{{{{\color{Emerald}5}}}}1
        {6}{{{{\color{Emerald}6}}}}1
        {7}{{{{\color{Emerald}7}}}}1
        {8}{{{{\color{Emerald}8}}}}1
        {9}{{{{\color{Emerald}9}}}}1,
    morekeywords={OF,PROBABILITY,VARIABLE,UNDER,CONDITIONED,EVERYTHING,IS,LOG,LOAD,
        DENSITY,CONSTRAINED,BUILD,WITH,STD,STDEV,FOR,GENERATE,MUTUAL,INFORMATION,COLUMNS,GENERATIVE,DUPLICATE,TIMES,
        PROBABILITY_DENSITY, PREDICT,CONDITION,CONSTRAIN,MUTUAL_INFO,MUTUAL_INFORMATION,
        MODEL,AVG,MEDIAN,SUM,COUNT,GIVEN,INFER,import},
  keywordstyle=[2]\textcolor{Magenta},
  morekeywords=[2]{Mil},
  sensitive=true,
}
\lstdefinestyle{iql-normalsize}{
    language=SQL,
    basicstyle=\small\ttfamily,
    columns=fullflexible,
    keepspaces=true,
    upquote=true,
    showstringspaces=false,
    comment=[l]{\#},
    commentstyle=\color{Gray}\ttfamily\bfseries,
    keywordstyle=\small\bfseries\syntaxcolor,
    stringstyle=\color{Magenta},
    escapechar=\@,
    breaklines=true,
    captionpos=b,
    frame=single,
    numbers=none,
    numbersep=10pt,
    showtabs=false,
    deletekeywords={column,value},
    literate=
        {=}{{{{\bfseries\syntaxcolor=}}}}1
        {>}{{{{\bfseries\syntaxcolor>}}}}1
        {<}{{{{\bfseries\syntaxcolor<}}}}1
        {+}{{{{\bfseries\syntaxcolor+}}}}1
        {-}{{{{\bfseries\syntaxcolor-}}}}1
        {*}{{{{\bfseries\syntaxcolor*}}}}1
        {/}{{{{\bfseries\syntaxcolor/}}}}1
        {,}{{{{\bfseries\syntaxcolor,}}}}1
        {:}{{{{\bfseries\syntaxcolor:}}}}1
        {0}{{{{\color{Emerald}0}}}}1
        {1}{{{{\color{Emerald}1}}}}1
        {2}{{{{\color{Emerald}2}}}}1
        {3}{{{{\color{Emerald}3}}}}1
        {4}{{{{\color{Emerald}4}}}}1
        {5}{{{{\color{Emerald}5}}}}1
        {6}{{{{\color{Emerald}6}}}}1
        {7}{{{{\color{Emerald}7}}}}1
        {8}{{{{\color{Emerald}8}}}}1
        {9}{{{{\color{Emerald}9}}}}1,
    morekeywords={OF,PROBABILITY,VARIABLE,UNDER,CONDITIONED,EVERYTHING,IS,LOG,LOAD,
        DENSITY,CONSTRAINED,BUILD,WITH,STD,STDEV,FOR,GENERATE,MUTUAL,INFORMATION,COLUMNS,GENERATIVE,DUPLICATE,TIMES,
        PROBABILITY_DENSITY, PREDICT,CONDITION,CONSTRAIN,MUTUAL_INFO,MUTUAL_INFORMATION,
        MODEL,AVG,MEDIAN,SUM,COUNT,GIVEN,INFER,import},
  keywordstyle=[2]\textcolor{Magenta},
  morekeywords=[2]{Mil},
  sensitive=true,
}
\lstdefinestyle{iql-small}{
    language=SQL,
    basicstyle=\footnotesize\ttfamily,
    columns=fullflexible,
    keepspaces=true,
    upquote=true,
    showstringspaces=false,
    comment=[l]{\#},
    commentstyle=\color{Gray}\ttfamily\bfseries,
    keywordstyle=\small\bfseries\syntaxcolor,
    stringstyle=\color{Magenta},
    escapechar=\@,
    breaklines=true,
    captionpos=b,
    frame=single,
    numbers=none,
    numbersep=10pt,
    showtabs=false,
    deletekeywords={column,value},
    literate=
        {=}{{{{\bfseries\syntaxcolor=}}}}1
        {>}{{{{\bfseries\syntaxcolor>}}}}1
        {<}{{{{\bfseries\syntaxcolor<}}}}1
        {+}{{{{\bfseries\syntaxcolor+}}}}1
        {-}{{{{\bfseries\syntaxcolor-}}}}1
        {*}{{{{\bfseries\syntaxcolor*}}}}1
        {/}{{{{\bfseries\syntaxcolor/}}}}1
        {,}{{{{\bfseries\syntaxcolor,}}}}1
        {:}{{{{\bfseries\syntaxcolor:}}}}1
        {0}{{{{\color{Emerald}0}}}}1
        {1}{{{{\color{Emerald}1}}}}1
        {2}{{{{\color{Emerald}2}}}}1
        {3}{{{{\color{Emerald}3}}}}1
        {4}{{{{\color{Emerald}4}}}}1
        {5}{{{{\color{Emerald}5}}}}1
        {6}{{{{\color{Emerald}6}}}}1
        {7}{{{{\color{Emerald}7}}}}1
        {8}{{{{\color{Emerald}8}}}}1
        {9}{{{{\color{Emerald}9}}}}1,
    morekeywords={OF,PROBABILITY,VARIABLE,UNDER,CONDITIONED,EVERYTHING,IS,LOG,LOAD,
        DENSITY,CONSTRAINED,BUILD,WITH,STD,STDEV,FOR,GENERATE,MUTUAL,INFORMATION,COLUMNS,GENERATIVE,DUPLICATE,TIMES,
        PROBABILITY_DENSITY, PREDICT,CONDITION,CONSTRAIN,MUTUAL_INFO,MUTUAL_INFORMATION,
        MODEL,AVG,MEDIAN,SUM,COUNT,GIVEN,INFER,import},
  keywordstyle=[2]\textcolor{Magenta},
  morekeywords=[2]{Mil},
  sensitive=true,
}
\lstdefinestyle{sqlite3}{
    language=SQL,
    basicstyle=\tiny\ttfamily,
    columns=fullflexible,
    keepspaces=true,
    upquote=true,
    showstringspaces=false,
    comment=[l]{\#},
    commentstyle=\color{Gray}\ttfamily\bfseries,
    keywordstyle=\tiny\bfseries\syntaxcolorgreen,
    stringstyle=\color{Magenta},
    escapechar=\@,
    breaklines=true,
    captionpos=b,
    frame=single,
    numbers=none,
    numbersep=10pt,
    showtabs=false,
    deletekeywords={column,value},
    literate=
        {=}{{{{\bfseries\syntaxcolorgreen=}}}}1
        {>}{{{{\bfseries\syntaxcolorgreen>}}}}1
        {<}{{{{\bfseries\syntaxcolorgreen<}}}}1
        {+}{{{{\bfseries\syntaxcolorgreen+}}}}1
        {-}{{{{\bfseries\syntaxcolorgreen-}}}}1
        {*}{{{{\bfseries\syntaxcolorgreen*}}}}1
        {/}{{{{\bfseries\syntaxcolorgreen/}}}}1
        {,}{{{{\bfseries\syntaxcolorgreen,}}}}1
        {:}{{{{\bfseries\syntaxcolorgreen:}}}}1
        {0}{{{{\color{Emerald}0}}}}1
        {1}{{{{\color{Emerald}1}}}}1
        {2}{{{{\color{Emerald}2}}}}1
        {3}{{{{\color{Emerald}3}}}}1
        {4}{{{{\color{Emerald}4}}}}1
        {5}{{{{\color{Emerald}5}}}}1
        {6}{{{{\color{Emerald}6}}}}1
        {7}{{{{\color{Emerald}7}}}}1
        {8}{{{{\color{Emerald}8}}}}1
        {9}{{{{\color{Emerald}9}}}}1,
    morekeywords={},
  keywordstyle=[2]\textcolor{Magenta},
  morekeywords=[2]{Mil},
  sensitive=true,
}
\lstdefinestyle{python-sppl}{
    language=Python,
    basicstyle=\tiny\ttfamily,
    columns=fullflexible,
    keepspaces=true,
    upquote=true,
    showstringspaces=false,
    comment=[l]{\#},
    commentstyle=\color{Gray}\ttfamily\bfseries,
    keywordstyle=\tiny\bfseries\syntaxcolor,
    stringstyle=\color{Magenta},
    escapechar=\@,
    breaklines=true,
    captionpos=b,
    frame=single,
    numbers=none,
    numbersep=10pt,
    showtabs=false,
    deletekeywords={column,value},
    literate=
        {=}{{{{\bfseries\syntaxcolor=}}}}1
        {>}{{{{\bfseries\syntaxcolor>}}}}1
        {<}{{{{\bfseries\syntaxcolor<}}}}1
        {+}{{{{\bfseries\syntaxcolor+}}}}1
        {-}{{{{\bfseries\syntaxcolor-}}}}1
        {*}{{{{\bfseries\syntaxcolor*}}}}1
        {/}{{{{\bfseries\syntaxcolor/}}}}1
        {,}{{{{\bfseries\syntaxcolor,}}}}1
        {:}{{{{\bfseries\syntaxcolor:}}}}1
        {0}{{{{\color{Emerald}0}}}}1
        {1}{{{{\color{Emerald}1}}}}1
        {2}{{{{\color{Emerald}2}}}}1
        {3}{{{{\color{Emerald}3}}}}1
        {4}{{{{\color{Emerald}4}}}}1
        {5}{{{{\color{Emerald}5}}}}1
        {6}{{{{\color{Emerald}6}}}}1
        {7}{{{{\color{Emerald}7}}}}1
        {8}{{{{\color{Emerald}8}}}}1
        {9}{{{{\color{Emerald}9}}}}1,
  keywordstyle=[2]\textcolor{Magenta},
  morekeywords=[2]{Mil},
  sensitive=true,
}
\lstdefinestyle{clojure-cat}{
    language=Lisp,
    basicstyle=\tiny\ttfamily,
    columns=fullflexible,
    keepspaces=true,
    upquote=true,
    showstringspaces=false,
    commentstyle=\color{Gray}\ttfamily\bfseries,
    keywordstyle=\tiny\bfseries\syntaxcolor,
    stringstyle=\color{Magenta},
    escapechar=\@,
    breaklines=true,
    captionpos=b,
    frame=single,
    numbers=none,
    numbersep=10pt,
    showtabs=false,
    deletekeywords={map,value},
    literate=
        {=}{{{{\bfseries\syntaxcolor=}}}}1
        {>}{{{{\bfseries\syntaxcolor>}}}}1
        {<}{{{{\bfseries\syntaxcolor<}}}}1
        {+}{{{{\bfseries\syntaxcolor+}}}}1
        {-}{{{{\bfseries\syntaxcolor-}}}}1
        {*}{{{{\bfseries\syntaxcolor*}}}}1
        {/}{{{{\bfseries\syntaxcolor/}}}}1
        {,}{{{{\bfseries\syntaxcolor,}}}}1
        {:}{{{{\bfseries\syntaxcolor:}}}}1
        {0}{{{{\color{Emerald}0}}}}1
        {1}{{{{\color{Emerald}1}}}}1
        {2}{{{{\color{Emerald}2}}}}1
        {3}{{{{\color{Emerald}3}}}}1
        {4}{{{{\color{Emerald}4}}}}1
        {5}{{{{\color{Emerald}5}}}}1
        {6}{{{{\color{Emerald}6}}}}1
        {7}{{{{\color{Emerald}7}}}}1
        {8}{{{{\color{Emerald}8}}}}1
        {9}{{{{\color{Emerald}9}}}}1,
  morekeywords={let,take,map,mapv,select,keys},
  keywordstyle=[2]\textcolor{Magenta},
  morekeywords=[2]{Mil},
  sensitive=true,
}
\lstdefinestyle{clojure}{
    language=Lisp,
    basicstyle=\footnotesize\ttfamily,
    columns=fullflexible,
    keepspaces=true,
    upquote=true,
    showstringspaces=false,
    commentstyle=\color{Gray}\ttfamily\bfseries,
    keywordstyle=\footnotesize\bfseries\syntaxcolor,
    stringstyle=\color{Magenta},
    escapechar=\@,
    breaklines=true,
    captionpos=b,
    frame=single,
    numbers=none,
    numbersep=10pt,
    showtabs=false,
    deletekeywords={map,value},
    literate=
        {=}{{{{\bfseries\syntaxcolor=}}}}1
        {>}{{{{\bfseries\syntaxcolor>}}}}1
        {<}{{{{\bfseries\syntaxcolor<}}}}1
        {+}{{{{\bfseries\syntaxcolor+}}}}1
        {-}{{{{\bfseries\syntaxcolor-}}}}1
        {*}{{{{\bfseries\syntaxcolor*}}}}1
        {/}{{{{\bfseries\syntaxcolor/}}}}1
        {,}{{{{\bfseries\syntaxcolor,}}}}1
        {:}{{{{\bfseries\syntaxcolor:}}}}1
        {0}{{{{\color{Emerald}0}}}}1
        {1}{{{{\color{Emerald}1}}}}1
        {2}{{{{\color{Emerald}2}}}}1
        {3}{{{{\color{Emerald}3}}}}1
        {4}{{{{\color{Emerald}4}}}}1
        {5}{{{{\color{Emerald}5}}}}1
        {6}{{{{\color{Emerald}6}}}}1
        {7}{{{{\color{Emerald}7}}}}1
        {8}{{{{\color{Emerald}8}}}}1
        {9}{{{{\color{Emerald}9}}}}1,
  morekeywords={let,take,map,mapv,select,keys},
  keywordstyle=[2]\textcolor{Magenta},
  morekeywords=[2]{Mil},
  sensitive=true,
}
\newcommand\YAMLcolonstyle{\color{red}\bfseries}
\newcommand\YAMLvaluestyle{\color{black}\mdseries}
\lstdefinelanguage{yaml}
{%
  keywords={true,false,null,y,n},
  keywordstyle=\color{darkgray}\bfseries,
  basicstyle=\tiny\ttfamily\bfseries\syntaxcolor,
  frame=single,
  rulecolor=\color{black},
  sensitive=false,
  comment=[l]{\#},
  morecomment=[s]{/*}{*/},
  commentstyle=\color{Gray}\ttfamily\bfseries,
  stringstyle=\YAMLvaluestyle\ttfamily,
  moredelim=[l][\color{orange}]{\&},
  moredelim=[l][\color{magenta}]{*},
  moredelim=**[il][\YAMLcolonstyle{:}\YAMLvaluestyle]{:},   
  morestring=[b]',
  morestring=[b]",
  columns=fullflexible,
  keepspaces=true,
}
\lstdefinelanguage{Julia}{%
    basicstyle       = \tiny\ttfamily,
    frame=single,
    keywordstyle     = \bfseries\syntaxcolor,
    stringstyle      = \color{magenta},
    commentstyle     = \color{ForestGreen},
    numberstyle=\color{Emerald},
    escapechar=\%,
    showstringspaces = false,
    morekeywords={abstract,break,case,catch,const,continue,do,else,elseif,%
      end,export,false,for,function,immutable,import,importall,if,in,%
      macro,module,otherwise,quote,return,switch,true,try,type,typealias,%
      using,while},
    morekeywords=[2]{@gen},
    keywordstyle=[2]\bfseries\textcolor{Brown},
    morekeywords=[3]{:period},
    keywordstyle=[3]\bfseries\textcolor{Magenta},
    sensitive=true,
    literate=
        {0}{{{{\color{Emerald}0}}}}1
        {1}{{{{\color{Emerald}1}}}}1
        {2}{{{{\color{Emerald}2}}}}1
        {3}{{{{\color{Emerald}3}}}}1
        {4}{{{{\color{Emerald}4}}}}1
        {5}{{{{\color{Emerald}5}}}}1
        {6}{{{{\color{Emerald}6}}}}1
        {7}{{{{\color{Emerald}7}}}}1
        {8}{{{{\color{Emerald}8}}}}1
        {9}{{{{\color{Emerald}9}}}}1,
}
  \providecommand\BibTeX{{%
    \normalfont B\kern-0.5em{\scshape i\kern-0.25em b}\kern-0.8em\TeX}}}
\begin{document}

\title{\iql{}: A Probabilistic Programming System for Querying Generative Models of Database Tables}
\author{Mathieu Huot}
\email{mhuot@mit.edu}
\affiliation{%
  \institution{Massachusetts Institute of Technology}
  \city{Cambridge}
  \state{MA}
  \country{USA}}
  \orcid{0000-0002-5294-9088}

\author{Matin Ghavami}
\email{mghavami@mit.edu}
\affiliation{%
  \institution{Massachusetts Institute of Technology}
  \city{Cambridge}
  \state{MA}
  \country{USA}}
  \orcid{0000-0003-3052-7412}

\author{Alexander K. Lew}
\email{alexlew@mit.edu}
\affiliation{%
  \institution{Massachusetts Institute of Technology}
  \city{Cambridge}
  \state{MA}
  \country{USA}}
  \orcid{0000-0002-9262-4392}

\author{Ulrich Schaechtle}
\email{ulli-schaechtle@garage.co.jp}
\affiliation{%
  \institution{Digital Garage}
  \city{Tokyo}
  \country{Japan}}
\orcid{0009-0005-8897-6394}

\author{Cameron E. Freer}
\email{freer@mit.edu}
\affiliation{%
  \institution{Massachusetts Institute of Technology}
  \city{Cambridge}
  \state{MA}
  \country{USA}}
\orcid{0000-0003-1791-6843}

\author{Zane Shelby}
\email{zane-shelby@garage.co.jp}
\affiliation{%
  \institution{Digital Garage}
  \city{Tokyo}
  \country{Japan}}
\orcid{0009-0003-2976-4581}

\author{Martin C. Rinard}
\email{rinard@csail.mit.edu}
\affiliation{%
  \institution{Massachusetts Institute of Technology}
  \city{Cambridge}
  \state{MA}
  \country{USA}}
  \orcid{0000-0001-8095-8523}

\author{Feras A. Saad}
\email{fsaad@cmu.edu}
\affiliation{%
  \institution{Carnegie Mellon University}
  \city{Pittsburgh}
  \state{PA}
  \country{USA}}
\orcid{0000-0002-0505-795X}

\author{Vikash K. Mansinghka}
\email{vkm@mit.edu}
\affiliation{%
  \institution{Massachusetts Institute of Technology}
  \city{Cambridge}
  \state{MA}
  \country{USA}}
  \orcid{0000-0003-2507-0833}


\begin{abstract}
This article presents \iql{}, a probabilistic programming system for
querying probabilistic generative models of database tables.
By augmenting SQL with only a few key primitives for querying
probabilistic models, \iql{} enables complex Bayesian inference
workflows to be concisely implemented.
\iql{}'s query planner rests on a unified programmatic interface for
interacting with probabilistic models of tabular data, which
makes it possible to use models written in a variety of probabilistic
programming languages that are tailored to specific workflows.
Probabilistic models may be automatically learned via probabilistic
program synthesis, hand-designed, or a combination of both.
\iql{} is formalized using a novel type system and denotational
semantics, which together enable us to establish proofs that precisely
characterize its soundness guarantees.
We evaluate our system on two case real-world studies---an anomaly
detection in clinical trials and conditional synthetic data generation
for a virtual wet lab---and show that \iql{} more accurately captures
the complexity of the data as compared to common baselines.
We also show that the declarative syntax in \iql{} is more concise and
less error-prone as compared to several alternatives.
Finally, \iql{} delivers a $1.7$-$6.8\mathrm{x}$ speedup compared to
its closest competitor on a representative benchmark set and runs in
comparable time to hand-written code, in part due to its reusable
optimizations and code specialization.
\end{abstract}

\begin{CCSXML}
    <ccs2012>
       <concept>
           <concept_id>10011007.10011006.10011039.10011311</concept_id>
           <concept_desc>Software and its engineering~Semantics</concept_desc>
           <concept_significance>500</concept_significance>
           </concept>
       <concept>
           <concept_id>10002950.10003648.10003662.10003664</concept_id>
           <concept_desc>Mathematics of computing~Bayesian computation</concept_desc>
           <concept_significance>500</concept_significance>
           </concept>
       <concept>
           <concept_id>10002950.10003705.10003708</concept_id>
           <concept_desc>Mathematics of computing~Statistical software</concept_desc>
           <concept_significance>300</concept_significance>
           </concept>
     </ccs2012>
\end{CCSXML}

\ccsdesc[500]{Mathematics of computing~Bayesian computation}
\ccsdesc[300]{Mathematics of computing~Statistical software}
\ccsdesc[500]{Software and its engineering~Semantics}

\keywords{generative modeling, Bayesian data science, probabilistic query language}

\maketitle
\renewcommand{\shortauthors}{Huot, Ghavami, Lew, Schaechtle, Freer, Shelby, Rinard, Saad, Mansinghka}

\section{Introduction}

Building generative models of tabular data is a central focus in Bayesian
data analysis \citep{gelman1995bayesian}, probabilistic machine
learning \citep{murphy2022probabilistic} and in applications such as
econometrics \citep{bauwens2000bayesian}, healthcare
\citep{hernan2006estimating} and systems biology
\citep{wilkinson2018stochastic}.
Motivated by these applications, researchers have developed
techniques for automatically learning rich probabilistic models of
tabular
data~\citep{mansinghka2016crosscat,saad2019bayesian,gens2013learning,grosse2012exploiting,adams2010learning}.
To fully exploit these models for solving complex tasks, users
must be able to easily interleave operations that access both
tabular data records and probabilistic models.
Examples computations include
\begin{enumerate*}[label=(\roman*)]
\item generating synthetic data records that satisfy user constraints;
\item conditioning distributions specified by probabilistic models given
observed data records; and
\item using database operations to aggregate the results of combined
queries against tabular and model data.
\end{enumerate*}
However, the majority of existing probabilistic programming systems
are designed for specifying generative models and estimating
parameters given observations.
They do not support complex database queries that combine tabular data
with generative models specified by probabilistic programs.

\paragraph{\iql}
This article introduces \iql{}, a novel probabilistic programming
system for querying generative models of database tables.
\iql{} is structured as a declarative extension to SQL which seamlessly
enables queries that integrate access to the tabular data with
operations against the probabilistic model.
Examples include predicting new data, detecting anomalies, imputing
missing values, cleaning noisy entries, and generating synthetic
observations~\citep{gelman2020,gabry2019visualization,saad2021sppl,lew2021pclean}.
\iql{} introduces a novel interface and soundness guarantees that
decouple user-level specification of high-level queries against
probabilistic models from low-level details of probabilistic
programming, such as probabilistic modelling, inference algorithm
design, and high-performance machine implementations.
\iql{} extends SQL with several constructs:

\begin{itemize}
    \item To complement$\select$clauses that \emph{retrieve} existing
    records from a table, \iql{} includes the clause$\generate m$ to
    \textit{generate} synthetic records from a probabilistic
    model $m$.

    \item To complement$\where$clauses that \emph{filter} data via
    constraints, \iql{} introduces the clause $m \given e$
    to \emph{condition} a probabilistic model $m$ on an event (i.e., a
    set of constraints) $e$.

    \item To complement \textit{joins} between tables, \iql{} introduces
    a new \emph{mixed join} clause $t\ \generative$ $\join m$ to join each row
    of a data table $t$ with a synthetic row generated from a
    probabilistic model $m$, whose generation can be conditioned in a
    per-row fashion on the values of $t$.

    \item To complement arithmetic expressions, \iql{} introduces
    $\probof e$ $\under m$ expressions, which
    compute the probability (density) of an event
    $e$ under a probabilistic model $m$.
\end{itemize}

\begin{figure}[t]
\begin{adjustbox}{max width=\linewidth}
\begin{tikzpicture}
\tikzstyle{question} = [inner sep=0pt, rectangle, font=\bfseries\footnotesize]
\tikzstyle{result} = [rectangle, draw=none, inner sep=1pt, font=\scriptsize]
\tikzstyle{query} = [draw=none, rectangle, text width=3cm, align=center, font=\scriptsize]
\tikzstyle{iql} = [draw=none,outer sep=0pt, inner sep=0pt, rectangle, text width=3cm]

\node[name=planner,rectangle, draw=black,fill=lightgray, minimum width=11cm,]{\iql{} Planner};

\node[question,name=imputation,above = 3.75 of planner.north west,anchor=north west,]{\begin{tabular}{@{}c@{}}Imputation\\\phantom{x} \end{tabular}};
\node[query,name=imputation-query, below = 0 of imputation]{
  Guess all missing values \\ of the ``Experience'' field \\ with confidences. \phantom{p}};
\node[iql,name=imputation-iql,draw=none,outer sep=0pt, inner sep=0pt, rectangle, text width=2.8cm,
  at={($(imputation-query)!0.5!(planner-|imputation)$)}
  ]{
\begin{lstlisting}[style=iql,aboveskip=0pt,belowskip=0pt]
SELECT exp AS Experience, 
  PROBABILITY OF exp UNDER 
  model GIVEN * AS Pr_Exp
FROM data GENERATIVE JOIN 
model GIVEN *
\end{lstlisting}};
\node[result,name=imputation-result, below = 3 of imputation-query]{
\begin{tabular}{|c|c|}
\hline
Experience & Pr\_Exp \\ \hline\hline
5 years & 0.6 \\
3 years & 0.3 \\
1 year  & 0.1 \\\hline
\end{tabular}};

\node[question,name=anomaly, above = 3.75 of planner.north east,anchor=north east]{\begin{tabular}{@{}c@{}}Anomaly\\ Detection \end{tabular}};
\node[query,name=anomaly-query, below = 0 of anomaly]{
  Who are the 3 \\ most anomalous \\ respondents?};
\node[iql,name=anomaly-iql,
  at={($(anomaly-query)!0.5!(planner-|anomaly)$)}
  ]{
\begin{lstlisting}[style=iql,aboveskip=0pt,belowskip=0pt]
SELECT *, PROBABILITY OF 
  * AS Pr_Data
UNDER model FROM data
ORDER BY ASC Pr_Data 
LIMIT 3
\end{lstlisting}};
\node[result,name=anomaly-result, below = 3 of anomaly-query]{
\begin{tabular}{|c|c|c|}
\hline
name & \ldots & Pr\_Data \\ \hline\hline
Tom & \dots & 0.11 \\
Lisa & \ldots & 0.15\\
John & \ldots & 0.27\\
\hline
\end{tabular}};

\node[question,name=prediction,
  at={($(imputation)!0.325!(anomaly)$)}
  ]{\begin{tabular}{@{}c@{}}Prediction \phantom{y}\\\phantom{x} \end{tabular}};
\node[query,name=prediction-query, below = 0 of prediction]{
  How likely is it that \\ a developer from \\ Seattle knows Rust? \phantom{p}};
\node[iql,name=prediction-iql, text width=2.6cm,
  at={($(prediction-query)!0.5!(planner-|prediction)$)}
  ]{
\begin{lstlisting}[style=iql,aboveskip=0pt,belowskip=0pt]
SELECT 
  PROBABILITY OF Rust
UNDER model
GIVEN city = "Seattle"
\end{lstlisting}};
\node[result,name=prediction-result,below = 3 of prediction-query]{
\begin{tabular}{|c|c|}
\hline
value & probability \\ \hline\hline
Yes  & 0.78 \\
No   & 0.22 \\ \hline
\end{tabular}};

\node[question,name=synthetic,
  at={($(imputation)!0.65!(anomaly)$)}
    ]{\begin{tabular}{@{}c@{}}Synthetic Data\\ Generation \end{tabular}};
\node[query,name=synthetic-query, below = 0 of synthetic]{
  Generate 1,000 \\\ synthetic rows for \\ women in MA.\phantom{p}};
\node[iql,name=synthetic-iql,text width=2.6cm,
  at={($(synthetic-query)!0.5!(planner-|synthetic)$)}
  ]{
\begin{lstlisting}[style=iql,aboveskip=0pt,belowskip=0pt]
SELECT age, salary FROM
GENERATE UNDER model
GIVEN gender = "Woman"
AND state = "MA"
LIMIT 1000
\end{lstlisting}};
\node[result,name=synthetic-result,below = 3 of synthetic-query]{
\begin{tabular}{|c|c|}
\hline
age & salary \\ \hline\hline
26  & 61k \\
31   & 77k \\
\dots & \dots \\\hline
\end{tabular}};

\begin{scope}[on background layer]
\foreach \x in {imputation, prediction, synthetic, anomaly} {
  \draw[thick] (\x-query) -- (\x-iql.north);
  \draw[thick,-latex] (\x-iql.south) -- (\x|-planner.north);
  \draw[thick,-latex] (\x|-planner.south) -- (\x-result);
}
\end{scope}

\node[name=data, right=.95 of planner, yshift=9em, font=\bfseries\footnotesize]{
\begin{tabular}{@{}c@{}}    
Data\\Table
\end{tabular}};

\node[name=model, right=.5 of planner, yshift=-1.1em ,font=\bfseries\footnotesize]{\begin{tabular}{@{}c@{}}Probabilistic\\Program\end{tabular}};
\begin{scope}[transform canvas={yshift=-.1em}]
\draw[-latex,thick]
  ([yshift=-2pt]model.west |- planner.east) -- ([yshift=-2pt]planner.east);
\end{scope}
\begin{scope}[transform canvas={xshift=.5em}]
\draw[-latex,dashed,thick] (data) -- node[pos=.5,rotate=-90,anchor=south]{\tiny program synthesis (optional)} (model.north);
\end{scope}
\draw[-latex,thick] (data.south) |- (planner.1);
\tikzstyle{heading} = [fill=lightgray,text=black, minimum height=1cm]
\node[name=heading-question,heading, anchor=north,at={([xshift=-1.75cm]planner.west |- anomaly.east), }]{
  \begin{tabular}{@{}c@{}}User\\Question\end{tabular}};

\node[name=heading-iql,heading, at={([xshift=-1.75cm]planner.west |- anomaly-iql)}]{
  \begin{tabular}{@{}c@{}}\iql{}\\ \hspace{.5em}Query\hspace{.5em}  \end{tabular}};

\node[name=heading-result,heading, at={([xshift=-1.75cm]planner.west |- anomaly-result)}]{\begin{tabular}{@{}c@{}}  \hspace{.2em}Answer\hspace{.2em} \end{tabular}};

\draw[line width=2pt,-latex] (heading-question) -- (heading-iql);
\draw[line width=2pt,-latex] (heading-iql) -- (heading-result);

\end{tikzpicture}
\end{adjustbox}
\captionsetup{skip=4pt}
\caption{Overview of \iql{}.}
\vspace{-1em}
\label{fig:schematic}
\end{figure}

In this work, we assume that an existing probabilistic program
synthesis tool has been used to automatically generate a probabilistic
model of the user's data satisfying a certain formal interface.
The user then uploads the data and model to \iql{} which
automatically integrates them.
The user can then issue queries for a variety of tasks, as illustrated
in \cref{fig:schematic}.
Although we envision most users using automatically discovered models
on their data, the \iql{} implementation also supports
hand-implemented or partly-learned probabilistic models.
For instance, a user can develop custom models for
harmonization across different sources, as shown in
\cref{sub:gen-clj-code-for-emission-functions}.

The core of \iql{} is formalized as a simply-typed extension of SQL
(\cref{sub:language}).
This extension includes standard SQL scalar expressions and tables as
well as \emph{\tableModels} (probabilistic models of tables) and
\emph{events} (a set of constructs that allow users to issue
probabilistic queries that leverage Bayesian conditioning).
Together, \tableModels{} and events enable a seamless integration of
standard SQL databases with probabilistic models, which include
queries that interleave accesses to the database records and
probabilistic models.

The \iql{} \emph{query planner} (\cref{sec:impl-table-model}) lowers
queries into plans that execute against a new model interface for
probabilistic models of tabular data.
This \emph{Abstract Model Interface} (\ami{}) (\cref{sub:\ami{}})
provides a unifying specification of probabilistic models that are
compatible with \iql{}.
To implement the \ami{}, the model must be able to:
\begin{enumerate*}[label=(\roman*)]
\item generative samples from a (potentially approximate) conditional
distribution;
\item compute probability densities for specified points;
\item compute probabilities of sets in the support of the
conditional distribution.
\end{enumerate*}

The open source \iql{} system includes a number of implementations of
the \ami{}, including
\begin{itemize}
\item a Clojure implementation~\citep{GenCLJ} of Gen~\citep{towner2019gen}, a general
purpose probabilistic programming language; see \cref{sub:gen-clj-code-for-emission-functions} for an example.

\item models produced by CrossCat~\citep{mansinghka2016crosscat}, a probabilistic program synthesis tool;

\item SPPL~\citep{saad2021sppl}, a probabilistic programming language
for exact inference.
\end{itemize}

We provide a measure-theoretic denotational semantics for the language
(\cref{sub:semantics}).
This semantics captures the interaction between deterministic SQL
operations and probabilistic operations on the probabilistic model,
enabling us to prove several correctness guarantees that query results
satisfy.
Specifically, we prove guarantees for \begin{enumerate*}[label=(\roman*)]
\item the \emph{exact} case, where exact inference about
marginal and conditional distributions of the probabilistic model
is available (\cref{thm:exact-\ami{}-guarantee}); and
\item a range of \emph{approximate} cases, where answers to marginal
and conditional queries are obtained via approximate inference
algorithms (\cref{thm:consistent-ami-guarantee}).
\end{enumerate*}

We benchmark \iql{} on a set of representative queries, testing
the runtime performance, overhead of the query planner, and effect of
our optimizations.
The results show that all queries execute in milliseconds against data
tables of sizes up to 10,000 rows, with a speedup in the range
$1.7$--$6.8\mathrm{x}$ against the most closely related baseline, and
that the query planner's overhead as compared to hand-written code is
small.
We evaluate our system on two case studies to test its applicability
to solving real-world problems (conditional synthetic data generation
for a virtual wet lab and an anomaly detection in clinical trials),
comparing against a generalized linear model (GLM) and a conditional
tabular generative adversarial network (CTGAN~\citep{xu2020modeling})
baseline.

\paragraph{Contributions}
This paper makes the following contributions:
\begin{enumerate}
\item The \textbf{\iql{} language} (\cref{sub:language}), an extension
of SQL with probabilistic models of tabular data as first-class
constructs and probabilistic constructs to allow the integration of
queries on these models with queries on the data.

\item A \textbf{unifying abstract interface for models of tabular
data} (\cref{sub:\ami{}}), which bridges the query language and
probabilistic models of database tables, to which all models must
conform. The query planner lowers \iql{} queries on models to
queries on this interface.

\item \textbf{Soundness theorems}, which fall into two classes:

\begin{itemize}

\item {\bf Exact:} We show that if models satisfy the exact interface,
all deterministic computations will be exact
(\cref{thm:exact-\ami{}-guarantee}). This theorem works with an exact
denotational semantics (\cref{sub:exact-backend}) that precisely
characterize the behavior of exact models.

\item {\bf Approximate:} If approximate models implement
consistent estimators (i.e., estimators that converge to the true
value), we prove that all queries return consistent results
(\cref{thm:consistent-ami-guarantee}). This theorem works with a novel
denotational semantics that combines measure-theoretic aspects with
sequences of random variables.
\end{itemize}

Together, these guarantees highlight some of the tradeoffs between
using an exact model (which deliver stronger guarantees but may be
difficult to obtain in some use cases) and an approximate model (which
deliver weaker guarantees but are more easily available).

\item An \textbf{open-source implementation} of \iql{} in
 Clojure (\url{https://github.com/OpenGen/GenSQL.query}), which can be compiled into JavaScript and
 run natively in the browser.

\item A \textbf{performance evaluation} of our approach
(\cref{sec:perf-eval}) which establishes that \iql{} is competitive
with hand-coded implementations and gives improved performance over a
competitive baseline. Two case studies further demonstrate the utility
of \iql{}.
\end{enumerate}

\section{Example}
\label{sec:example}

\begin{figure}[b]
\small
\begin{minipage}{0.9\textwidth}
\begin{lstlisting}[style=iql-small, firstnumber=1,numbers=left,stepnumber=1]
SELECT weight, AVG(log_pxy_div_px_py) AS mutual_information
FROM (
  SELECT weight, LOG(pxy) - (LOG(px) + LOG(py)) AS log_pxy_div_px_py
  FROM (
    SELECT weight,
      PROBABILITY OF h_model.age = table.age AND h_model.bmi = table.bmi
        UNDER h_model GIVEN h_model.weight = table.weight AS pxy,
      PROBABILITY OF h_model.age = table.age
        UNDER h_model GIVEN h_model.weight = table.weight AS px,
      PROBABILITY OF h_model.bmi = table.bmi
        UNDER h_model GIVEN h_model.weight = table.weight AS py
    FROM (
      SELECT table.weight, table.age, table.bmi
      FROM (
        health_data DUPLICATE 1000 TIMES
        GENERATIVE JOIN h_model
        GIVEN h_model.weight = health_data.weight) AS table)))
GROUP BY weight\end{lstlisting}
\end{minipage}
\captionsetup{skip=4pt}
\caption{Estimating the conditional mutual information between \texttt{age} and \texttt{bmi} given patient weights.}
\label{fig:example-cmi-section2}
\end{figure}

\Cref{fig:example-cmi-section2} presents an example \iql{} query. 
In this example, we work with a probabilistic model
(\texttt{health\_model}) derived from a national database of patient
information, as well as a data table (\texttt{health\_data}) from a
set of local hospitals.
The query uses the probabilistic model to estimate the mutual
information---an information-theoretic measure used in data
analysis--- between the \texttt{age} and \texttt{bmi} columns (from
the probabilistic model) for specific valueso f patient weights
(selected from the data table).
The mutual information is a statistical measure of the strength of the
association between these two columns, defined as a sum or integral,
over the joint distribution of \texttt{age} and \texttt{bmi}, of the
logarithm of the ratio of the joint density and the product of the
marginal density.

The query estimates the mutual information by Monte Carlo integration,
i.e., it approximates the integral by sampling.
We first generate 1000 copies of each row in the \texttt{health\_data}
table (line 15) and then use the \iql{} {\em generative join}
construct (line 16) to complete each row as follows.
For each such row $r$:
\begin{enumerate}
  \item a row $r'$ is sampled from a version of the model conditioned
  on the \texttt{weight} value of row $r$;
  \item the rows $r$ and $r'$ are concatenated.
\end{enumerate}
The resulting intermediate table is called \texttt{table} (line 17). 
Each synthetic row $r'$ is used as a sample for the Monte Carlo
integration of the conditional mutual information for the
corresponding \texttt{weight} value.
From this intermediate table, we select the \texttt{weight},
\texttt{age}, and \texttt{bmi} columns (line 13).
Note that the \texttt{weight} column comes from the patient data while
the \texttt{age} and \texttt{bmi} columns come from the rows sampled
from the probabilistic model.

For each \texttt{weight} in the patient data, we compute the Monte
Carlo approximation of the mutual information between \texttt{age} and
\texttt{bmi} for that weight as
\begin{equation*}
\textstyle\frac{1}{1000k}\sum_{i=1}^{1000k}
  \log \frac{p(\texttt{age}_i,\texttt{bmi}_i)}{p(\texttt{age}_i)p(\texttt{bmi}_i)},
  \end{equation*}
where $k$ is the number of patients with that specific weight, and
$(\texttt{age}_i,\texttt{bmi}_i)$ is a sample from the model for that
weight.
To do so, lines 6--11 compute the probability densities
$p(\texttt{age}_i,\texttt{bmi}_i)$ (lines 6--7), $p(\texttt{age}_i)$
(lines 8-9), and $p(\texttt{bmi}_i)$ (lines 10--11).
For instance, the \texttt{GIVEN} clause conditions the model on the
\texttt{weight} column of the model being equal to the \texttt{weight}
column of \texttt{table} (line 11).
Line 10 then computes the probability density that the \texttt{bmi}
column of the conditioned model is equal to the corresponding column
in \texttt{table}.
\iql{} computes these probability densities by invoking the $\logpdf$
function in the probabilistic model interface (\cref{sub:\ami{}}).

A traditional SQL select statement (line 5) propagates the patient weights
and corresponding probabilities \texttt{pxy}, \texttt{px}, and
\texttt{py} to generate a table with four columns: the \texttt{weight}
and the corresponding probability densities for that weight.
Line 3 computes $\log
{p(\texttt{age}_i,\texttt{bmi}_i)}-\log{p(\texttt{age}_i)p(\texttt{bmi}_i)}$
for each of the rows, naming this ratio
\texttt{log\_pxy\_div\_px\_py}.
Note that there are $1000k$ \texttt{log\_pxy\_div\_px\_py} values for
each weight in the local patient data, where $k$ is the number of
patients with that specific weight.
Finally, line 1 computes the mutual information estimate between
\texttt{age} and \texttt{bmi}, for each weight, as the average of the
\texttt{log\_pxy\_div\_px\_py} values for that weight. 
This example illustrates the expressivity of \iql{}, but we note that our implementation has a primitive which directly estimates conditional mutual information without the need to materialize intermediate tables.

\section{Syntax and Semantics}
\label{sec:language}

\subsection{Language}
\label{sub:language}

The core calculus extending SQL for querying from probabilistic models
of tabular data is given in \cref{fig:syntax-iql}, and the
type-system is given in \cref{fig:type-system-iql}.\footnote{
The core SQL-part of the language is minimal for expository purposes.
\Cref{sec:full-language} presents the formalization for a
richer language, including the operations$\group\by$and$\duplicate$used in \cref{fig:example-cmi-section2}.
}
As SQL is a subset of \iql{}, this calculus also includes a
simply-typed formalization of SQL where terms are given in a pair of
context: a local and a global one.
We found this formalization interesting in its own right, as we could
not find an equivalent formalization in the programming languages
literature.

There are several noteworthy differences with variables and contexts
from traditional simply-typed languages based on the
lambda-calculus, which are explained below.
We note early that we distinguish two types of conditioning through constructs called events and events-0. 
Events-0 come from a technical difficulty well-known in the PPL and measure theory literature \citep{shan2017,wu2018discrete} when conditioning on a continuous variable taking a specific value. This creates a \emph{possible event of probability 0}, and requires special treatment.

\begin{figure*}[t]
    \captionsetup{skip=4pt}
\setlength{\tabcolsep}{0.4em}
\centering
\begin{adjustbox}{max width=\textwidth}
\begin{tabular}{|l|l c l|l c l|}
\hline
\multicolumn{1}{|c|}{\textbf{Description}} & \multicolumn{3}{c|}{\textbf{SQL}} & \multicolumn{3}{c|}{\textbf{Probabilistic Extension}} \\\hline \hline
Base/Event Type & $\sigma$ & $::=$ & $\sigma_c \gor
\sigma_d$
& $\evtype$ & $::=$ & $\eventtype\tabty{\col_1: \sigma_1,\ldots,\col_n: \sigma_n}$ \\
&&&
&&& $\gor\densitytype\tabty{\col_1: \sigma_1,\ldots,\col_n: \sigma_n}$  \\
Table/\TableModel{} Type & $\tabtype$ & $::=$ & $T[\id]\tabty{\col_1: \sigma_1,\ldots,\col_n: \sigma_n}$ & $\modtype$ & $::=$ & $M[\id]\tabty{\col_1: \sigma_1,\ldots,\col_n: \sigma_n}$ \\ \hline
Table Expression & $t$ & $\Coloneq$ & $\id \gor \rename t \as \id$ & $t$ & $::=$ & $\ldots$ \\
& && $\gor t_1\ \join\ t_2 \gor t\ \where\ e$ &&& $\gor \generate m \limit e$ \\
&&& $\gor \select \widebar{e} \as \widebar{\col} \from t$&&& $\gor t \modjoin m$\\
\hline
\TableModel{}  Expression &&&& $m$ & $::=$ & $\id \gor m \given c^i
    \gor \rename m \as \id $ \\
     \hline
Scalar Expression  & $e$ & $::=$ & $\id.\col \gor \op(e_1,\ldots,e_n)$ & $e$ & $::=$ & $\ldots \gor \probof c^i \under m $  \\
\hline
Event Expressions &&&& $c^1$ & $::=$ & $c^1_1 \wedge c^1_2 \gor c^1_1 \vee c^1_2 \gor \id.\col\ \op\ e$ \\
Event-0 Expressions &&&&   $c^0$ & $::=$ & $c^0_1 \wedge c^0_2 \gor \id.\col = e$ \\ \hline
\end{tabular}
\end{adjustbox}
\begin{adjustbox}{max width=\textwidth}
\begin{tabular}{|lc|}
    \hline
    Primitive Domains: & $\op \in \opset$, $\id\in\idset$, $\col\in\colset$, $\sigma_c \in \{\real, \posreal, \ranged(a,b), \ldots\}$, $\sigma_d \in \{\inte, \str, \nat, \bool, \ldots\}$
    \\ \hline
    Syntactic Sugar: &  $\protect\widebar{e} \as \protect\widebar{\col}\equiv e_1 \as \col_1,\ldots,e_n\as \col_n$.
    \\ \hline
    \end{tabular}
\end{adjustbox}
\caption{Syntax of \iql{}.}
    %
\label{fig:syntax-iql}
\end{figure*}

\paragraph{Names and Identifiers.} We assume a countable set
$\colset$ of names $\col\in\colset$ for the columns of tables and
\tableModels, as well as a countable set $\idset$ of identifiers $\id\in\idset$
for naming tables and \tableModels.

\paragraph{Base types.}  Cells of tables can have a base type $\sigma$,
which is either a continuous type $\sigma_c$ or a discrete type $\sigma_d$.
Continuous types are $\real$ for reals, $\posreal$ for
non-negative reals, or $\ranged(a,b)$ for reals in the range $[a,b]$.
Discrete types are $\nat$ for natural numbers, $\inte$ for
integers, $\str$ for strings,
$\categorical(\textsc{n}_1,\ldots,\textsc{n}_k)$ for a categorical
type over $k$ attributes, and $\bool$ for Boolean.

\paragraph{Table and \tableModel{} types.} We denote these types by $D[?\id]\{\col_1:\sigma_1,\ldots,\col_n:\sigma_n\}$. 
$D$ is either $T$ for tables or $M$ for \tableModels{}. $?\id$ is an optional identifier, allowing access to columns of a table or \tableModel{} in a query.
For instance, in$\select \id.\text{weight}$, the identifier $\id$ refers to a table and $\text{weight}$ to a column of that table. 
The identifier can be be optional, e.g. there is no default identifier for a table created after a join.
The notation $\{\col_1:\sigma_1,\ldots,\col_n:\sigma_n\}$ indicate that the table has columns $\col_i$ of type $\sigma_i$. 
Therefore, we can think of each row of a table as an element of a record type $\{\col_1:\sigma_1,\ldots,\col_n:\sigma_n\}$, a bag of rows as a table, and a \tableModel{} as a row generator. 

\begin{figure}[t]
    \captionsetup{skip=4pt}
\footnotesize
\FrameSep=4pt
\begin{framed}
\begin{subfigure}[t]{.5\linewidth}
\centering
\caption{\textbf{Type System for Table Expressions}}
\begin{tabular}{c}
   \hline
   $\Gamma, T[\id]\{\cols\}; \Delta \vdash \id: T[\id]\{\cols\}$
\end{tabular}
\smallskip

\begin{tabular}{c}
   $\ctx \vdash t:T[?\id]\{\cols\}$\quad    $\id'$ fresh
   \\ \hline
   $\ctx \vdash \rename t\as \id':T[\id']\{\cols\}$
\end{tabular}
\smallskip

\begin{tabular}{c}
     $\ctx \vdash t_1:T[?\id_1]\{\cols\}$ \\ $\ctx \vdash t_2:T[?\id_2]\{\cols'\}$ \quad
     $\cols\cap\cols'=\emptyset$
   \\ \hline
   $\ctx \vdash t_1 \join t_2:T[]\{\cols,\cols'\}$
\end{tabular}
\smallskip

\begin{tabular}{c}
   $\ctx \vdash t:T[\id]\{\cols\}$ \quad  $\ctx, T[\id]\{\cols\} \vdash e:\bool$
   \\ \hline
   $\ctx \vdash t \where e:T[\id]\{\cols\}$
\end{tabular}
\smallskip

\begin{tabular}{c}
   $\ctx \vdash m:M[\id]\{\cols\}$    \quad
   $\ctx \vdash e:\nat$
   \\ \hline
   $\ctx \vdash  \generate m$ \\
   $\quad\qquad \limit e :T[]\{\cols\}$
\end{tabular}
\smallskip

\hspace{-2mm}
\begin{tabular}{c}
   $\ctx \vdash t:T[\id]\{\cols\}$ \quad  $\cols\cap\cols'=\emptyset$ \\
   $\ctx, T[\id]\{\cols\} \vdash m:M[\id']\{\cols'\}$
   \\ \hline
   \hspace{-1mm}$\ctx \vdash  t \modjoin m:T[]\{\cols,\cols'\}$
\end{tabular}
\smallskip

\oldcons{
\begin{tabular}{c}
    $\ctx \vdash t: T[\id]\{\cols\}$ \\
    $\ctx, T[\id]\{\cols\} \vdash e_i: \sigma_i$ for $1\leq i\leq n$ \\
    $\widebar{e} \as \widebar{\col} := e_1 \as \col_1,\ldots,e_n\as \col_n $
    \\ \hline
    $\ctx\vdash \select \widebar{e} \as \widebar{\col} $\\
    $\quad\from t: T[]\{\col_1:\sigma_1,\ldots,\col_n: \sigma_n\}$
\end{tabular}
\smallskip
}

\newcons{
\begin{tabular}{c}
    $\ctx \vdash t: T[\id]\{\cols\}$ \quad
    $\ctx \vdash m: M[\id']\{\cols'\}$ \quad $\cols,\cols'$ disjoint names\\
    $\ctx, T[\id]\{\cols\}, M[\id']\{\cols'\} \vdash e_i: \sigma_i$ for $1\leq i\leq n$
    \\ \hline
    $\ctx\vdash \select e_1 \as \col_1,\ldots,e_n \as \col_n \from \ttable t, \model m: T[]\{\col_1:\sigma_1,\ldots,\col_n: \sigma_n\}$
\end{tabular}
\smallskip
}

\caption{\textbf{Type System for Event Expressions}}

\begin{tabular}{c}
     $\ctx \vdash e:\sigma_i$
     \quad $i\in\{1,\ldots,n\}$\\
      $\op \in \{<,>,=\}$
     \quad  $\forall \sigma_c. (\sigma_i,\op) \neq (\sigma_c,=)$ \\
$\cols=\col_1:\sigma_1,\ldots,\col_i:\sigma_i,\ldots,\col_n:\sigma_n$
     \\ \hline
      $\ctx, M[\id]\{\cols\} \vdash \id.\col_i\ \op\ e:\eventtype\{\cols\}$
\end{tabular}
\smallskip

\begin{tabular}{c}
     $\ctx \vdash c^1_1:\eventtype\{\cols\}$
     \quad
     $\ctx \vdash c^1_2:\eventtype\{\cols\}$
     \\ \hline
      $\ctx \vdash c^1_1 \wedge c^1_2:\eventtype\{\cols\}$
\end{tabular}
\smallskip

\begin{tabular}{c}
     $\ctx \vdash c^1_1:\eventtype\{\cols\}$
     \quad
     $\ctx \vdash c^1_2:\eventtype\{\cols\}$
     \\ \hline
      $\ctx \vdash c^1_1 \vee c^1_2:\eventtype\{\cols\}$
\end{tabular}

\end{subfigure}%
\begin{subfigure}[t]{.5\linewidth}
\centering
\caption{\textbf{Type System for \tableModel{}  Expressions}}
\begin{tabular}{c}
   \hline
   $\Gamma, M[\id]\{\cols\}; \Delta \vdash \id: M[\id]\{\cols\}$
\end{tabular}
\smallskip

\begin{tabular}{c}
   $\ctx \vdash m:M[\id]\{\cols\}$ \\
   $\ctx, M[\id]\{\cols\} \vdash c^1:\eventtype\{\cols\}$
   \\ \hline
   $\ctx \vdash m \given c^1 :M[\id]\{\cols\}$
\end{tabular}
\smallskip

\begin{tabular}{c}
   $\ctx, M[\id]\{\cols\} \vdash c^0:\densitytype\{\cols'\}$
   \\ \hline
   $\ctx \vdash \id \given c^0 :M[\id]\{\cols\}$
\end{tabular}
\smallskip

\begin{tabular}{c}
   $\ctx \vdash m:M[\id]\{\cols\}$  \quad $\id'$ fresh
   \\ \hline
   $\ctx \vdash \rename m \as \id':M[\id']\{\cols\}$
\end{tabular}
\smallskip

\caption{\textbf{Type System for Scalar Expressions}}
\begin{tabular}{c}
  $i \in \{1,\ldots,n\}$ \\ \hline
$\ctx, T[\id]\{\col_1:\sigma_1,\ldots,\col_n:\sigma_n\} \vdash \id.\col_i:\sigma_i$
\end{tabular}
\smallskip

\newcons{
\begin{tabular}{c}
  $i \in \{1,\ldots,n\}$ \\ \hline
$\ctx, M[\id]\{\col_1:\sigma_1,\ldots,\col_n:\sigma_n\} \vdash \id.\col_i:\sigma_i$
\end{tabular}
\smallskip
}

\begin{tabular}{c}
   $\ctx \vdash m:M[\id]\{\cols\}$    \\
   $\ctx, M[\id]\{\cols\} \vdash c^1:\eventtype\{\cols\}$
   \\ \hline
   $\ctx \vdash  \probof c^1 \under m: \ranged(0,1)$
\end{tabular}
\smallskip

\begin{tabular}{c}
    $\ctx \vdash m:M[\id]\{\cols\}$\quad $\vars(c^0)\cap \condvars(m)=\emptyset$    \\ 
    $\ctx, M[\id]\{\cols\} \vdash c^0:\densitytype\{\cols'\}$
    \\ \hline
    $\ctx \vdash  \probof c^0 \under m: \posreal$
 \end{tabular}
 \smallskip

\begin{tabular}{c}
  $\ctx \vdash e_i:\sigma_i$ for $1\leq i\leq n$ \quad $\op:\sigma_1,\ldots,\sigma_n\to \sigma$
   \\ \hline
     $\ctx \vdash \op(e_1,\ldots,e_n):\sigma$
\end{tabular}
\smallskip

\caption{\textbf{Type System for Event-0 Expressions}}
\begin{tabular}{c}
     $\ctx \vdash e:\sigma$\quad $i\in \{1,\ldots,n\}$\\
     $\cols = \ldots,\col_i:\sigma,\ldots$
     \\ \hline
      $\ctx, M[\id]\{\cols\} \vdash
      \id.\col_i\ =\ e:\densitytype \{\col_i:\sigma\} $
\end{tabular}
\smallskip

\begin{tabular}{c}
     $\ctx \vdash c^0_1:\densitytype\{\cols\} \quad \ctx \vdash c^0_2:\densitytype\{\cols'\}$\\
     $\cols\cap\cols'=\emptyset$ \\ \hline
      $\ctx \vdash c^0_1 \wedge c^0_2:\densitytype\{\cols,\cols'\}$
\end{tabular}

\end{subfigure}
\end{framed}
\caption{Type system of \iql{}.}
\label{fig:type-system-iql}
\end{figure}

\paragraph{Scalar Expressions.}
$\opset$ is a set of primitive operations on base
types including standard operations such as $+,*,<,>,=$ on integers and reals, $\wedge,\vee$ on Booleans, as well as constants for every value of a
base type.
For any $\op\in\opset$, we write $\op:\sigma_1,\ldots,\sigma_n\to\sigma$
if $\op$ has arity $n$, takes arguments of base types
$\sigma_1,\ldots,\sigma_n$ and returns a value of type $\sigma$. 
In particular, operations with no arguments are constants of the appropriate type such as $\true$ and $\false$ at the boolean type.
All base types have an additional constant $\Null$ representing a
missing value.
This constant is preserved by primitive operations (e.g.
$\Null+ 3 \mapsto \Null$, $\Null * 4.1 \mapsto \Null$).
By convention, $\where \Null$ clauses act as $\where \false$.

\paragraph{Table expressions.}
Apart from typical SQL operations, we have two ways to generate synthetic data.
$\generate$returns a synthetic table with a given number of rows specified by the$\limit$clause, where each row is generated by sampling from a given \tableModel.
$\modjoin$takes a \tableModel{} and a table, and returns a synthetic
table with the same number of rows, where each row is generated by
concatenating the current row of the table with a sample from the
\tableModel{}. 
The model generating the samples can be conditioned on the current row of the table. 
$\rename$renames a table or \tableModel{} with a new identifier, therefore changing the identifier in its type and the way to access their column in a select of event clause. 

\paragraph{Event and event-0 expressions.}
Events are Boolean expressions on tables and \tableModels, which include equality on discrete values but not on continuous values, which is reserved for events-0.
The only probability 0 events are impossible under a given model, e.g. $x > 6 \wedge x < 3$, and those do not require a separate treatment.
Events and events-0 are used in the$\probof$and$\given$constructs.
$\probof$ takes an event (or event-0) expression and a \tableModel{} to query and returns the probability (or probability density) of the
event under the model.\footnote{
It may be confusing for people familiar with probabilistic programming languages (PPLs) to use$\probof$for both a probability mass and a probability density. 
Our implementation has two versions of the syntax: a strict one and a permissive one.
The strict syntax distinguishes between the two, and in particular on events-0 one the primitive is $\colorsyntaxiql{PROBABILITY}\ \colorsyntaxiql{DENSITY}\ \colorsyntaxiql{OF}$.
The permissive syntax allows to use$\probof$for both, and the system will automatically choose the right version based on the type of the event.  
}

\paragraph{RowModel expressions.}
$\given$takes a \tableModel{} and an event (or event-0) expression, and returns a new \tableModel{}, the conditional distribution of the original
\tableModel{} on the event.
The event expression can be given by a list of inequalities on \emph{arbitrary variables} and equalities on discrete variables, in which case$\given$ acts as a set of constraints on the possible returned values of the model. 
Otherwise, the event expression can be a set of equalities on possibly continuous values and is understood as conditioning the model on the given values.

\paragraph{Contexts.} Expressions are typed in a pair of contexts
$\ctx$ containing table and \tableModel{} types.
As these types include identifiers, there is no need for the more
classical notation $x:\tau$ pairing a variable with its type.
$\Gamma$ is a \emph{set} of types, while $\Delta$ is an
\emph{ordered list} of types. In the premise of a typing rule such as$\probof$,
only the last element of $\Delta$ will be accessible to an expression.
We denote the empty context by $[]$.
Intuitively, $\Gamma$ contains the ambient tables in the database
schema and any loaded models, and within $\Gamma$, all identifiers $\id$ are assumed distinct.
$\Delta$ is the \emph{value environment}, and contains only tables
that are ``in scope'' for a particular expression.
Scalar, event, and event-0 expressions all depend on the value
environment.
If an expression has a table in scope, it will be iterated over the
rows of that table and can only access the current row.
If a$\probof$ expression has a \tableModel{} in scope, it will query
the model for the probability of an event under that model.
If it's a$\given$ expression, it will condition that model on an
event.

\paragraph{Typing rules.}
Judgments are of the form $\ctx \vdash \texttt{e}: \texttt{t}$ where
\texttt{e} is an expression ($t$, $e$, $m$, $c^1$, $c^0$ in \Cref{fig:syntax-iql});
\texttt{t} is a type ($\sigma$, $\mathcal{T}$, $\mathcal{E}$, $\mathcal{M}$ in \Cref{fig:syntax-iql}),
and $\ctx$ is a context.
Given some loaded tables and \tableModels{} forming environment
$\Gamma$, the objects of interest are ``closed expressions''
of table type, i.e., expressions of the form
$\Gamma; [] \vdash t: T[?\id]\{\cols\}$.
``Closed'' here refers to $\Delta$ being empty, not $\Gamma$.
Notable rules include those that need the same identifier twice, such as the$\probof$ or the$\where$ rule.
For instance, in the $t \where e$ rule, where $t$ has identifier $\id$, a valid SQL $e$ would be $\id.\col = 3$ where $\col$ is a column of $t$.
This reflects the fact that the expression $e$ should have access to the identifier $\id$ in its local environment, and that the column $\col$ of $t$ will be iterated over by the expression $e$.

\paragraph{Notations used in the type system}
$?\id$ indicates an optional identifier and $\id$.
''$\id'$ fresh'' means that $\id'$ is not in the contexts $\Gamma$, $\Delta$ or in the type of a subterm of the expression.
We will often abbreviate $\{\col_1:\sigma_1,\ldots,\col_n:\sigma_n\}$
as $\{\cols\}$.
We write $\cols\cap\cols'=\emptyset$ when the set of column names in
$\cols$ and in $\cols'$ should be disjoint.
In the first typing rule for events, we write $\forall \sigma_c. (\sigma_i,\op) \neq (\sigma_c,=)$ to mean that $\op$ cannot be an equality on a continuous type.
We recursively define the following two macros:
{ \setlength{\abovedisplayskip}{0pt}
 \setlength{\belowdisplayskip}{0pt}
\begin{align*}
&\vars(\id.\col~\op~t) = \{\col\}\quad
\vars(c \land c') = \vars(c) \cup \vars(c')\quad
\vars(c \lor c') = \vars(c) \cup \vars(c') \\
&\condvars(m \given c^0) =\vars(c^0) \quad
\condvars(m \given c^1) =\condvars(m) \quad
\condvars(\id) = \emptyset
\end{align*}
}
\paragraph{Restrictions imposed by the type system.}
If the same identifier $\id$ appears twice in the
premise of a typing rule, the two identifiers must equal, and two different identifiers $\id$ and $\id'$ must be distinct.
The$\join$ and$\modjoin$ operations require that the columns of the two tables have disjoint names. 
As explained above, events are disallowed to be equalities on continuous types.
A model can only be conditioned once on an event-0, which is enforced by the restriction $\id\given c^0$.
Events-0 follow a linear typing system to avoid contradictory statements such as $\id.\col = 1.0 \wedge \id.\col = 2.0$.
Events-0 in a$\probof$query on a conditioned model cannot refer to the conditioned columns of the model, which is enforced by the restriction $\vars(c^0)\cap\condvars(m)=\emptyset$.\footnote{Our implementation is less restricted. It allows join variants such as SQL's left join where the tables do not have disjoint columns. It also allows multiple conditionings on the same model, which are then normalized to the form above. See \cref{sub:normalization} for details about the normalization.}

\paragraph{Syntactic sugar.} Our implementation includes various syntactic sugars that are not present in the formalization but which are used in
several figures. 
Given $t{:}T[\id]\{\col_1{:}\sigma_1,\ldots,\col_n{:}\sigma_n\}$, $m{:} M[\id']\{\col_1'{:}\sigma_1,\ldots,\col_n'{:}\sigma_n\}$, we have the following equivalences:
\begin{itemize}[noitemsep,leftmargin=*]
\item $\select * \from t$ $\rightsquigarrow$ $\select
\id.\col_1,\ldots,\id.\col_n \from t$
\item
$\probof *$ $\rightsquigarrow$ $\probof e$ for any query of the form $\select\probof * \under m \given c \from t$, where $e \coloneqq \id'.\col_1' =
\id.\col_1 \wedge \ldots \wedge \id'.\col_n' = \id.\col_n$.
\item $m \given *$ $\rightsquigarrow$ $m \given e$ within a
$\select \from t$ query. The event $e \coloneqq
\id'.\col_{i_1}' = \id.\col_{i_1} \wedge
\ldots \wedge \id'.\col_{i_k}' = \id.\col_{i_k}$, where the
$\col_{i_j}$ are columns $t$ that do not appear in the $\select$
clause.
\item $ * \exceptt \id.\col$ removes the column $\col$ from list of columns that $*$ selects.
\end{itemize}

\subsection{Semantics}
\label{sub:semantics}

We define denotational semantics using measure theory, shown in
\cref{fig:semantics-iql}.
Even though the SQL subset of \iql{} is not probabilistic, our
probabilistic semantics ensures compositional reasoning about the
semantics of SQL queries combined with probabilistic \iql{}
expressions, such as synthetic tables generated by \tableModels.
Per usual, the semantics of expressions is defined compositionally on
typing judgement derivations, and $\sem{\texttt{e}}$ is a shorthand
for $\sem{\Gamma; \Delta \vdash \texttt{e}:\texttt{t}}$.

\paragraph{Base types
(\cref{subfig:semantics-base-type}).} We assign to each type $\sigma$
a measure space $\sem{\sigma}:=(X_\sigma,
\Sigma_{X_\sigma},\nu_\sigma)$ consisting of a set $X_\sigma$, a
sigma-algebra $\Sigma_{X_\sigma}$, and reference measure $\nu_\sigma$.
$\ZZ$ denotes the set of integers, $\NN$ natural numbers, and $\BB$ Booleans, which are equipped with the discrete sigma-algebra.
We equip the reals $\RR$ with the Borel sigma-algebra.
We interpret $\Null$ by adding a fresh element $\{\star\}$ to the standard interpretation of each base type, equipped with the discrete sigma-algebra.
The semantics of a base type $\sigma$ is then given by the smallest
sigma-algebra making $\{\star\}$ measurable, as well as ensuring that
every previously measurable set remains measurable.
(This construction is also called the ``direct-sum sigma-algebra''
\citep[214L]{fremlin2001}).

The base measure on discrete types $\sigma_d$ such as
$\inte,\nat,\bool,\str$ is the counting measure.
On continuous types $\sigma_c$ such as $\real$, the base measure is
the Lebesgue measure $\lambda$.
These are extended to base measures $\nu_\sigma$ on $\sem{\sigma}$ by
using the dirac measure $\delta_{\{\star\}}$ on $\{\star\}$, e.g. the
base measure on $\sem{\RR}$ is $\lambda_\RR+\delta_{\{\star\}}$.
We write $\mu \otimes \nu$ for the product of measures. 
We extend the reference measure to the product space $\prod_{1\leq i\leq
n}\sem{\sigma_i}$ by taking the product of the reference measures
$\nu\coloneq\bigotimes_{1\leq i\leq n}\nu_{\sigma_i}$.

\paragraph{Table types (\cref{subfig:semantics-table-type}).}
Our semantics has two modes of interpreting table types, a ``tuple
mode'' $\tupsem{-}$, and a ``table mode'' $\tabsem{-}$.
$\tabsem{-}$ interprets tables as measures on bags of tuples, while $\tupsem{-}$
interprets a table as a tuple, representing the current row of the
table being processed by a scalar, event or event-0 expression.
More precisely, we denote by $\measm(X)$ the measurable space of probability measures on the standard Borel space $X$ \citep{giry2006categorical}.
The table semantics interprets table types as measures on bags of
tuples $\tabsem{T[?\id]\{\cols\}}=\bag(\tupsem{T[?\id]\{\cols\}})$,
where $\bag(X)=\{f :X\to \NN \mid f(x)=0 \text{ except for finitely
many }x\}$.
$\bag(X)$ is equipped with the least sigma-algebra containing the generating sets $\{b\in\bag(X) \mid b\text{ contains exactly }k\text{ elements in }A\}$ for
measurable sets $A$ of $X$~\citep{dash2021monads}.

\paragraph{Contexts (\cref{subfig:semantics-context}).} We
interpret the global context $\Gamma$ with the table semantics
$\tabsem{-}$ and the local context $\Delta$ with the tuple semantics
$\tupsem{-}$.
We write $\gamma$ for an element of $\tabsem{\Gamma}$, and see it as a
finite map from identifiers to values.
Likewise, we write $\delta$ for an element of $\Delta$.
We write $\delta[\id \mapsto v]$ for the extended finite map mapping
$\id$ to $v$.

\begin{figure}
\captionsetup[subfigure]{belowskip=-10pt, aboveskip=2pt}
\captionsetup{skip=4pt}
\setlength{\abovedisplayskip}{0pt}
\setlength{\belowdisplayskip}{0pt}
\setlength{\abovedisplayshortskip}{0pt}
\setlength{\belowdisplayshortskip}{0pt}

\setlength{\FrameSep}{0pt}
\begin{framed}
\footnotesize
\begin{subfigure}[t]{.6\textwidth}
\caption{\textbf{Semantics of Table and \tableModel{} Types}}
\label{subfig:semantics-table-type}
\begin{align*}
   \tupsem{T[?\id]\tabty{\col_1: \sigma_1,\ldots,\col_n: \sigma_n}} &= \textstyle\prod_{1\leq i\leq n} \sem{\sigma_i} \\
   \sem{M[?\id]\tabty{\col_1: \sigma_1,\ldots,\col_n: \sigma_n}} &= \textstyle\dens\Big(\prod_{1\leq i\leq n} \sem{\sigma_i}\Big)\\
   \tabsem{T[?\id]\tabty{\col_1: \sigma_1,\ldots,\col_n: \sigma_n}} &= \textstyle\measm\bag\Big(\prod_{1\leq i\leq n} \sem{\sigma_i}\Big) \\
   \tabsem{M[?\id]\tabty{\col_1: \sigma_1,\ldots,\col_n: \sigma_n}} &= \textstyle\admissible\Big(\prod_{1\leq i\leq n} \sem{\sigma_i}\Big)
\end{align*}
\end{subfigure}
\begin{subfigure}[t]{.32\textwidth}
\caption{\textbf{Semantics of Contexts}}
\label{subfig:semantics-context}
\begin{align*}
    \tupsem{\Delta := \tabtype,\Delta'} &= \tupsem{\tabtype} \times \tupsem{\Delta'} \\
  \tupsem{\Delta := \modtype,\Delta'} &= \sem{\modtype} \times \tupsem{\Delta'} \\
    \tabsem{\Gamma := \tabtype,\Gamma'} &= \tabsem{\tabtype} \times \tabsem{\Gamma'} \\
        \tabsem{\Gamma := \modtype,\Gamma'} &= \tabsem{\modtype} \times \tabsem{\Gamma'}
\end{align*}
\end{subfigure}

\smallskip
\hrule
\smallskip
\begin{subfigure}[t]{.4\textwidth}
\caption{\textbf{Semantics of Base Types}}
\label{subfig:semantics-base-type}
\begin{align*}
  \sem{\bool} &:= (\BB\cup\{\star\},\mathcal{P}(\BB\cup\{\star\}),\nu_\BB) \\
  \sem{\inte} &:= (\ZZ\cup\{\star\},\mathcal{P}(\ZZ\cup\{\star\}),\nu_\ZZ) \\
  \sem{\str} &:= (\strsem\cup\{\star\},\mathcal{P}(\strsem\cup\{\star\}),\nu_\strsem) \\
  \sem{\real} &:= (\RR\cup\{\star\},\mathcal{B}(\RR\cup\{\star\}),\nu_\RR) \\
  \sem{\posreal} &:= (\RR^+\cup\{\star\},\mathcal{B}(\RR^+\cup\{\star\}),\nu_{\RR^+}) 
\end{align*}
\end{subfigure}%
\begin{subfigure}[t]{.6\textwidth}
\caption{\textbf{Semantics of Scalar Expressions}}
\label{subfig:semantics-scalar}
\begin{flalign*}
\sem{\id.\col_i}(\gamma,\delta) &= \begin{array}[t]{@{}l}
  \pi_i(\delta(\id))
  \quad\mbox{where } T[?\id]\{\cols\}\in \Delta
  \end{array}\\
\sem{\op(e_1,\ldots,e_n)} (\gamma,\delta) &=
  \op_s\left(\sem{e_1} (\gamma,\delta),\ldots,
      \sem{e_n} (\gamma,\delta)\right)
\\
\begin{aligned}
&\llbracket
  \probof c^1
  \under m
  \rrbracket(\gamma,\delta)\span= \sem{m}(\gamma,\delta)\dotmeas\big(\sem{c^1}(\gamma,\delta)\big)\\
&\llbracket
  \probof c^0
  \under m \rrbracket(\gamma,\delta) = \llet (\pi,v)=\span \\ 
  &\sem{c^0}(\gamma,\delta[\id\mapsto \sem{m}(\gamma,\delta)])\iin \sem{m}(\gamma,\delta)\dotpdf(v)
\end{aligned}\span
\end{flalign*}
\end{subfigure}
\smallskip
\hrule
\smallskip

\begin{subfigure}{\textwidth}
\caption{\textbf{Semantics of Table Expressions}}
\label{subfig:semantics-table}
\begin{align*}
\sem{\id: T[?\id]\{\cols\}}(\gamma,\delta) &= \gamma(\id) \qquad  \sem{\rename t \as \id'}(\gamma,\delta) = \sem{t}(\gamma,\delta)\\
\sem{t_1 \join t_2} (\gamma,\delta) &=
(\lambda x,y.\maptwo (\lambda r_1,r_2. (r_1,r_2))\ x\ y)_*
(\sem{t_1}(\gamma,\delta) \otimes  \sem{t_2}(\gamma,\delta))\\
\sem{t:T[?\id]\{\cols\}\where e}(\gamma,\delta) &=
 \Big(\lambda x.\filter (\lambda r. \sem{e}(\gamma,\delta[\id\mapsto r]))\ x\Big)_*
 \sem{t}(\gamma,\delta) \\
 \sem{\generate m \limit e}(\gamma,\delta) &=
 \llet n=\sem{e}(\gamma,\delta)\ \iin
\Big(\lambda (x_1,\ldots,x_n). \textstyle\bigcup_{1\leq i\leq n}\{x_i\}\Big)_*
\bigotimes_{1\leq i\leq n}\sem{m}(\gamma,\delta)\dotmeas
\\
\begin{aligned}
&\sem{\select e_1 \as \col_1,\ldots,e_n \as \col_n \from t:T[?\id]\{\cols\}} (\gamma,\delta) = \\
&\Big(\lambda x. \map (\lambda r. \big(\sem{e_1}(\gamma,\delta[\id\mapsto r]),\ldots,\sem{e_n}(\gamma,\delta[\id\mapsto r])\big) )\ x \Big)_*
\sem{t}(\gamma,\delta)\\
&\sem{(t:T[?\id]\{\cols\}) \modjoin m}(\gamma,\delta) = \sem{t}(\gamma,\delta) \bind \Big(\lambda y.\fold (\lambda \mu,r. \mu \bind\\
&\Big(\lambda x. (\lambda \{r'\}. x\cup \{(r,r')\})_*
\sem{\generate m\limit 1}(\gamma,\delta[\id\mapsto r])\dotmeas\Big)\ \delta_{\{\}}\ y\Big)
\end{aligned}\span
\end{align*}
\end{subfigure}
\hrule
\smallskip

\begin{subfigure}{\textwidth}
\caption{\textbf{Semantics of \tableModel{} Expressions}}
\label{subfig:semantics-\tableModel}
\begin{align*}
     \sem{\id: M[?\id]\{\cols\}}(\gamma,\delta) &= (\gamma(\id)\dotmeas,\gamma(\id)\dotpdf) \qquad\qquad\sem{\rename m \as \id'}(\gamma,\delta) = \sem{m}(\gamma,\delta) \\
     \sem{m \given c^1:\eventtype\{\cols\}} (\gamma,\delta)
     &= \Cond(\sem{m}(\gamma,\delta),\sem{c^1}(\gamma,\delta[\id\mapsto \sem{m}(\gamma,\delta)])) \\
\sem{\id \given c^0:\densitytype\{\cols'\}} (\gamma,\delta) &= \llet (\pi,v) = \sem{c^0}((\gamma,\delta[\id\mapsto \sem{\id}(\gamma,\delta)]))\ \iin \Dis(\gamma(\id),\pi,v)
\end{align*}
\end{subfigure}
\hrule
\smallskip

\begin{subfigure}{\linewidth}
\centering
\begin{subfigure}[t]{.48\textwidth}
\caption{\textbf{Semantics of Event Expressions}}
\label{subfig:semantics-event}
\begin{align*}
    &\textstyle\sem{\bigwedge_{1\leq i\leq 2} c^1_i} (\gamma,\delta) = 
    \textstyle\bigcap_{1\leq i\leq 2}\sem{c^1_i}(\gamma,\delta) \\
    &\textstyle\sem{\bigvee_{1\leq i\leq 2} c^1_i} (\gamma,\delta) = 
    \textstyle\bigcup_{1\leq i\leq 2}\sem{c^1_i}(\gamma,\delta) \\
\end{align*}
\end{subfigure}
\hspace{-.4cm}
\begin{subfigure}[t]{.48\textwidth}
\caption{\textbf{Semantics of Event-0 Expressions}}
\label{subfig:semantics-event-0}
\begin{align*}
     \sem{\bigwedge_{1\leq i\leq 2} c^0_i} (\gamma,\delta) &=\left\{ \!\!\Centerstack[l]{\ \llet_{1\leq i\leq 2}(f_i,v_i)= \sem{c^0_i}(\gamma,\delta)\ \iin
     \#
     \ \big(\lambda x.(f_1(x),f_2(x)),(v_1,v_2)\big)
     }\right.
\end{align*}
\end{subfigure}
\vspace{-5mm}
\begin{align*}
\sem{\id.\col_i\ \op\ e : \eventtype\{\cols\} } (\gamma,\delta) =
\{(x_1,\ldots,x_n)\in \sem{\cols} \mid x_i\ \op_l\ \sem{e}(\gamma,\delta) \}
\quad  \sem{\id.\col_i\ =\ e} (\gamma,\delta) = (\pi_i, \sem{e}(\gamma,\delta))\end{align*}

\smallskip
\end{subfigure}
\end{framed}
\caption{Denotational semantics of \iql{}.}
\label{fig:semantics-iql}
\end{figure}

\paragraph{Scalar expressions (\cref{subfig:semantics-scalar}).} 
We then interpret scalar expressions $\ctx \vdash e:\sigma$ as
measurable functions $\tabsem{\Gamma}\times \tupsem{\Delta}\to
\sem{\sigma}$. 
We lift operations $\op$ to interpret $\Null$, and write $\op_s$ for the extended version of $\op$ which sends $\star$ to $\star$.

\paragraph{Event expressions (\cref{subfig:semantics-event}).} 
An event expression $\sem{c^1:\eventtype\{\cols\}}(\gamma,\delta)$ is interpreted as a measurable subset $S$ of $\sem{\cols}$ (disjoint union of
hyper-rectangles~\citep{saad2021sppl}).
Depending on the expression, this set $S$ is used in
different ways.
We interpret the probability clause$\probof~c^1
\under m$ as $\int_{\sem{\cols}} \mathds{1}_S d\mu$, where $\mu$ is the measure denoting the model $m$, i.e. $S$ is used in an indicator function
$\mathds{1}_S$.
When used in a$\given$ clause, we constrain the model to the event
$S$, which is then renormalized.
If the event has probability 0, we instead return a row of $\Null$.
A similar situation to$\where \Null$ arises for$\given$, e.g. in
$\given \id.\col ~\op~ \Null$.
Following the principle of least surprise, $\Null$ acts by convention
as a unit for conditioning, i.e. $\id \given \id.\col ~\op~ \Null$
behaves the same as $\id$.
To ensure this we interpret boolean expressions $\op$ differently in
the semantics of events, and write $\op_l$ for the extended version of $\op$ which sends $\star$ to $\true$.
The denotation of $\id.\col~ \op~ \Null$ will therefore be the entire
space, and conditioning a model on this event will not change its
denotation.

\paragraph{Event-0 expressions (\cref{subfig:semantics-event-0}).}
$\sem{c^0:\densitytype\{\cols\}}(\gamma,\delta)$ is interpreted as a pair of a projection function $\pi$ and a value $v$ in the codomain of the projection. 
$v$ is used to specify the point at which we want to condition or evaluate a density, and $\pi$ is used to project the model to the relevant subspace, which we detail in the paragraph on \tableModel{} expressions.

\paragraph{Table expressions (\cref{subfig:semantics-table}).} 
We interpret closed tables expressions $\vdash t:T[?\id]\{\cols\}$ as measures on their columns, i.e. elements of $\measm\Big(\tabsem{T[?\id]\{\cols\}}\Big)$.
We write $\mu \bind \kappa$ for the composition of a measure $\mu$ on $X$ with a kernel $X\to \measm Y$, defined by $\mu \bind \kappa(dy)=\int \kappa(x,dy) \mu(dx)$.
Given a measurable function $f:X\to Y$, we
denote the pushforward measure by $f_*\mu(A):=\mu(f^{-1}(A))$.

We use functional programming notation for the mathematical
functions $\filter$, $\map$, $\maptwo$, $\fold$.
Given a bag $S$ and a function $f:S\to \BB$, we define the bag
$\filter f\ S:=\{x\in S\mid f(x)\neq 0\}$.
Likewise, we define $\map f\ S:=\{f(x)\mid x\in S\}$ and $\maptwo f\
S\ T:=\{f(x,y)\mid x\in S,y\in T\}$.
A function $f: X\times Y\to Y$
is commutative \citep{dash2021monads} if
$f(x_1,f(x_2,y))=f(x_2,f(x_1,y))$ for all $x_1,x_2,y$.
Given a commutative function $f: X\times Y\to Y$, we further define
$\fold f:Y\times \bag(X)\to Y$ by $\fold f\ y_0 \{x_1,\ldots,x_n\}=
f(x_1,f(x_2,\ldots f(x_n,y_0)\ldots))$.

\paragraph{\TableModel{} expressions (\cref{subfig:semantics-\tableModel}).} The semantics of \tableModels{} is more involved, as conditioning statements$\given c^0$ require conditioning on events of probability 0.
We first review the minimal setting that helps us define conditioning on event-0 expressions.
Given measurable spaces $A,B$ with reference measure $\nu_A,\nu_B$,
a measure $\mu$ on $A\times B$ admits an \emph{$(A,B)$ disintegration}
if we can write $\mu=\nu_A\otimes \kappa$ for some measure kernel
$\kappa$ such that for all $a\in A$, $\kappa(a)$ has a density
$p(-\mid a)$ w.r.t. $\nu_B$.
A \emph{valid decomposition} $(A,B)$ for $\prod_{1\leq i\leq
n}\sem{\sigma_i}$ is given by $A=\prod_{j\in J}\sem{\sigma_j}$ for
some $J\subseteq \{1,\ldots,n\}$ and $B=\prod_{j\in
\{1,\ldots,n\}-J}\sem{\sigma_j}$.
A measure $\mu\in \measm(\prod_{1\leq i\leq n} \sem{\sigma_i})$ is
\emph{admissible} if it admits an $(A,B)$ disintegration for all valid
decompositions $(A,B)$ of $\prod_{1\leq i\leq n}\sem{\sigma_i}$.

We consider measures $\mu$ on spaces $X$ with chosen
disintegrations and (marginal) densities w.r.t. the reference measure.
More precisely, we interpret a \tableModel{} $\id$ from the global context $\Gamma$ as a quadruple $\tabsem{\id}:=(\mu,p,\{\kappa_A\}_A,\{p\}_A)$.
Here, $\mu$ is a measure denoting the unconditioned model, and $p$ a density of $\mu$ w.r.t. the reference measure.
For each valid decomposition $(A,B)$ of the columns of $\id$, the kernel $\kappa_A$ is an $(A,B)$-disintegration of $\mu$.
For all $a\in A$, $p_A(-\mid a)$ is a density for $\kappa_A(a)$ w.r.t. the reference measure $\nu_B$.
If $v$ is a partial assignment of the variables in $B$, we also write $p_A(v \mid a)$ for the marginal density of $\kappa_A(a)$ at $v$ obtained from $p_A(-\mid a)$ by integrating out the missing variables in $v$.
We denote by $\admissible(X)$ the set of such quadruples $(\mu,p,\{\kappa_A\}_A,\{p\}_A)$, where $\mu$ is a measure on $X$.
Given $m\in \admissible(X)$, we write $m\dotmeas$ for its first component $\mu$, $m\dotpdf$ for the density $p$, $m.A$ for the kernel $\kappa_A$, and $m.A\dotpdf$ for the density $p_A$.
Using this notation, given an event-0 $c^0$ denoting a projection $\pi$ and value $v$, the expression $m\dotpdf(v)$ gives a marginal density of the model $m$ at $v$; i.e. $m\dotpdf(v)$ is a version of the density of $\pi_*m\dotmeas$ evaluated at $v$.
We assume that all the models in the context are admissible, which is enforced in the semantics of contexts.

The models used in queries are built from admissible models and will carry chosen densities, which is enforced in the semantics of \tableModel{} expressions. 
We write $\dens(X)$ for the set of pairs $(\mu,p)$ where $\mu$ is a measure on $X:= X_1\times \ldots \times X_n$ and $p$ is either a density of $\mu$ w.r.t. the reference measure, or of the form $\lambda (x_1,\ldots,x_n).q(x_{i_1},\ldots,x_{i_k})$ for some $i_1,\ldots,i_k$, and where $q$ is a marginal density of $\mu$ on $X_{i_1}\times \ldots \times X_{i_k}$ w.r.t. the reference measure. 
The second case is used to represent the density of a model conditioned on an event-0 expression.

Conditioning on events-0 requires access to a disintegration of the model at the point $v$, which is possible thanks to the restriction from the type-system. 
For $m\in \admissible(X)$, $\pi:X\to Y$ a projection function, and $v\in Y$, we define $\Dis(m,\pi,v):= (m.\pi(X)(v)\otimes \delta_v, m.\pi(X)\dotpdf(v))$.

For conditioning on events, given $m\in\dens(X)$ and a measurable $S\subseteq X$,
we define
{\setlength{\abovedisplayskip}{0pt}
\setlength{\belowdisplayskip}{0pt}
\[ \cond(m,S):=\left\{ \begin{aligned} 
& \left(\lambda S'. \frac{m\dotmeas (S\cap S')}{m\dotmeas(S)}
, \lambda x.\frac{\mathds{1}_S(x)m\dotpdf(x)}{m\dotmeas(S)}\right) & \text{if }m\dotmeas(S)>0 \\
& \left(\delta_{\{\star,\ldots,\star\}}, \mathds{1}_{\{(\star,\ldots,\star)\}}\right)  & \text{otherwise} 
    \end{aligned} \right. \]
}


\section{Abstract  Model Interface and Query Planner}
\label{sec:impl-table-model}

This section presents a query planner that automatically lowers \iql{}
queries to programs that operate on tables and
\tableModels{}.
This lowering depends on the Abstract \tableModel{}  Interface
(\ami{}) which we assume all loaded \tableModels{} must satisfy.
The \ami{} is a flexible interface that many \tableModel{} implementations
can easily satisfy.
This flexibility means that model implementations can strike
different expressiveness-speed-accuracy trade-offs, and give
different guarantees. 

\noindent \cref{sub:comparison-against-baselines-using-approximate-inference} compares an exact SPPL backend to an approximate Gen.clj backend on 5 queries.

In what follows, we define the \ami{} and show how to lower
\iql{} queries to programs that access \tableModels{} through the \ami{}
interface.
We showcase the flexibility of the \ami{} by proving formal guarantees for
two different implementations of the \ami{}.
We show in \cref{sub:exact-backend} that if the \ami{} is implemented in a PPL with exact inference, then \iql{} queries can be lowered to programs in a
semantics-preserving way.
We then show in \cref{sub:approximate-backend} that if the \ami{} is
implemented in a PPL with approximate inference, then \iql{} queries
can be lowered to programs that encode asymptotically sound estimators
for$\probof$expressions and asymptotically sound samplers for$\generate$expressions.

\subsection{Abstract Model Interface (\ami{})}
\label{sub:\ami{}}

A \tableModel{} represents a probability distribution on rows with a
fixed set of columns.
The \ami{} captures the intuition that a model should be able to
produce samples and compute probabilities and densities for all
conditioned versions of the distribution it represents.
For each \tableModel{} $M[?\id]\{\cols\}$, the \ami{} requires the
existence of the following three methods:
\begin{align*}
\simulate_\id &: (\densitytype\{\cols'\}, \eventtype\{\cols\})\to T[?\id]\{\cols\} \\
\logpdf_\id &: (\densitytype\{\cols'\}, \eventtype\{\cols\}, \densitytype\{\cols''\}) \to \real \\
\prob_\id &: (\densitytype\{\cols'\}, \eventtype\{\cols\}, \eventtype\{\cols\}) \to  \ranged(0,1).
\end{align*}
where $\cols',\cols''\subseteq \cols$. These methods should behave as follows:
\begin{itemize}
\item $\simulate_\id(c^0,c^1)$
returns a sample from a model with identifier $\id$, conditioned on the event-0
$c^0$ and event $c^1$.
\item $\logpdf_\id(c^0,c^1,c^0_2)$ returns the (marginal if $\cols''\subsetneq \cols$) log-density of the model $\id$ conditioned on the event-0 $c^0$ and event $c^1$, at the point $c^0_2$. 
\item $\prob_\id(c^0,c^1,c^1_2)$ returns the probability of the event $c^1_2$ under the model $\id$, conditioned on the event-0 $c^0$ and event $c^1$.
\end{itemize}

A non-conditioned model is recovered by letting the subset $\cols'$ to be empty.
The precise usage of these methods is given in the next section.
The \ami{} methods can have different formal semantics, capturing
different aspects of the backend probabilistic model it abstracts.
These semantics reflect different implementation strategies
implementing conditional sampling and probability evaluation.
\Cref{sec:implementations-atmi} shows how different model classes can
implement the \ami{}.
In particular, we show that SPPL~\cite{saad2021sppl} and truncated
multivariate Gaussians satisfy the exact \ami{}, and that any PPL
implementing ancestral sampling will satisfy the approximate \ami{}.
We next describe how the \iql{} query planner lowers queries to
programs that rely on the \ami{}, before giving details about the
semantics and correctness guarantees.

\subsection{Lowering \iql{} to Queries on the \ami{}}

The lowering procedure from \iql{} to a lowered language is given in
two steps:
\begin{enumerate*}[label=(\roman*)]
\item a normalization transform for \iql{} queries; and
\item a program transform to the lowered language.
\end{enumerate*}

\paragraph{Normalization of \iql{} Queries.}
The normalization (see \cref{sub:normalization}) simply simplifies$\rename$ statements and aggregates events in a single conditioning statement. It leads to the following normal forms, where$\given$clauses are optional:
\begin{itemize}
    \item Probability queries:$\probof c^i_1 \under (\id \given c^0 \given c^1)$.
    \item Generate queries:$\generate (\id \given c^0 \given c^1) \limit e$ and \\ $t \modjoin (\id \given c^0 \given c^1)$.
\end{itemize}

\paragraph{Lowering Language (\cref{fig:lowered-syntax}).}
It is a first-order simply-typed lambda calculus with second-order operations acting on bags, and primitives for the \ami{}.
It also contains a version of events and events-0 which can be used by \ami{} primitives.
Operations like $\map$, $\filter$ and $\hostexp$ have their
usual meaning, and their typing along with those for constants,
tuples, projections, and arithmetic operations are standard and
recalled in \cref{sec:lowering-appendix}
(\cref{fig:lowered-syntax-full}). 
$\lowerjoin$ takes two bags of tuples and returns their Cartesian product.
$\replicate$ evaluates its bag argument $n$ times and returns the
union of all the resulting bags. 
$\mapreduce$ takes a bag of tuples and a function $f$ from tuples to bags, and returns the union of all the bags obtained by applying $f$ to each tuple in the input bag.

\begin{figure}
    \captionsetup{skip=4pt}
    \setlength{\abovedisplayskip}{0pt}
\setlength{\belowdisplayskip}{0pt}
\setlength{\abovedisplayshortskip}{0pt}
\setlength{\belowdisplayshortskip}{0pt}
\setlength{\FrameSep}{0pt}
    \begin{framed}
            \small      
\begin{align*}
    \text{base type } \sigma &::= \sigma_c \gor  \sigma_d  
    &\hspace{-2mm}\text{ground type } \sigma_g&::= \sigma \gor (\sigma_1, \ldots, \sigma_n) 
    &\text{event type } \evtype &::= \eventtype[\sigma_g] \gor \densitytype[\sigma_g]
    \\
    \text{type } \tau &::= \bag[\sigma_g]
    &\text{operator } \op\ &::= + \gor  - \gor  \times \gor \div \gor \land \gor  \lor \gor\ = 
    &\text{\tableModel{} } \modtype &::= M[\sigma_g]
\end{align*}

\vspace{-3mm}
\begin{minipage}{.68\linewidth}
\begin{align*}
    \text{primitives } f\ &::= \mapreduce \gor \map \gor \filter \gor \replicate \gor \lowerjoin \gor \hostexp\\
    &\quad \gor \singleton \gor \simulate_\id\gor \prob_\id\gor \logpdf_\id\\
    \text{term } t &::= c \gor \id \gor f(t_1, \ldots, t_n) \gor x \gor (t_1, \ldots, t_n) \gor \pi_i\ t \gor t_1\ op\ t_2 \\
    \end{align*}
\end{minipage}
\begin{minipage}{.3\linewidth}
    $$
    \frac{\Gamma\vdash t_1 : \densitytype[\sigma_g^1]\quad \Gamma\vdash t_2 : C^0[\sigma_g^2]}
    {\Gamma \vdash t_1 \land t_2: \densitytype[\sigma_g^1, \sigma_g^2]}
    $$
\end{minipage}
   \vspace{-6mm}
    
    $$
    \frac{\Gamma\vdash t_1 : \eventtype[\sigma_g]\quad \Gamma\vdash t_2 : \eventtype[\sigma_g]\quad \op\in\{\land,\lor \}}
    {\Gamma \vdash t_1\ \op\ t_2: \eventtype[\sigma_g]} 
    \quad
    \frac{\Gamma\vdash t:\sigma_i\quad \op\in\{=,<,>\}\quad (\sigma_i,\op)\neq (\sigma_c,=)}
         {\Gamma, \id: M[(\sigma_1, \ldots,\sigma_n)] \vdash (\id, i)\ \op\ t : \eventtype[(\sigma_1,\ldots,\sigma_n)]}
    $$
    $$
    \frac{\Gamma, \id: M[\sigma_g]\vdash 
    c^i: C^i[\sigma_g]\ \ \Gamma, \id: M[\sigma_g]\vdash c^1_1: \eventtype[\sigma_g]}
         {\Gamma, \id: M[\sigma_g]\vdash \prob_{\id}(c^0,c^1,c^1_1,): \ranged(0,1)}
    \ \
    \frac{\Gamma, \id: M[\sigma_g]\vdash 
    c^i: C^i[(\sigma_1, \ldots,\sigma_n)]}
         {\Gamma, \id: M[\sigma_g]\vdash \simulate_{\id}(c^0,c^1): \bag[\sigma_g]}
    $$
    $$
    \frac{\Gamma \vdash t:\sigma_i}
    {\Gamma,\id: M[(\sigma_1, \ldots,\sigma_n)] \vdash (\id, i) = t : \densitytype[\sigma_i]}
    \ \
    \frac{\Gamma, \id: M[\sigma_g]\vdash 
    c^i: C^i[\sigma_g]\ \ \Gamma, \id: M[\sigma_g]\vdash c^0_1: \densitytype[\sigma_g]}
         {\Gamma, \id: M[\sigma_g]\vdash \logpdf_{\id}(c^0, c^1,c^0_1): \real}
    $$
    \end{framed}
    \caption{A selected subset of the syntax and type system of the lowered language.}
    \label{fig:lowered-syntax}
\end{figure}

\paragraph{Lowering program transform (\cref{fig:lowering-transform}).}
After obtaining a normal form query, the planner applies a program
transformation $\tran{\delta}{\cdot}$ from normalized \iql{} queries
to the lowered language, defined by pattern matching
on the structure of the query.
It carries a local context $\delta$ of variables (a finite map from
identifiers to variable names) which are bound in the surrounding program.
Similarly to the local context $\Delta$ in \iql{}, $\delta$ will start
empty $[]$ at the root of the syntax tree. It is used to
rename variables in the lowered query.
The rationale is that a table identifier $\id$ in $\Delta$ will be
transformed to a variable $r$ representing a tuple being iterated over
by a $\map$ or $\fold$ primitive.
A \tableModel{} identifier $\id$, on the other hand, will be uniquely
accessible and identified from the global context $\Gamma$, thanks to
the normalization procedure which ensures that no \tableModel{} is
renamed in the normalized query.
A simple proof by induction shows that the transformation preserves
typing.

\begin{propositionE}
If $\Gamma,[] \vdash t: T[?\id]\{\cols\}$, then $\tranty{\Gamma} \vdash \tran{[]}{t}: \tranty{T[?\id]\{\cols\}}$.
\end{propositionE}

\begin{figure}[t]
    \captionsetup{skip=4pt}
\footnotesize
\setlength{\abovedisplayskip}{-5pt}
\setlength{\belowdisplayskip}{0pt}
\captionsetup[subfigure]{skip=0pt,belowskip=1pt,aboveskip=1pt}

\setlength{\FrameSep}{1pt}
\begin{framed}
\begin{subfigure}[t]{.5\linewidth}
    \caption{\textbf{Translating Types and Contexts}}
    \centering
    \begin{align*}
        &\tranty{ T[?\id]\{\col_1: \sigma_1,\ldots,\col_n: \sigma_n\}} = \bag[(\sigma_1,\ldots,\sigma_n)] \\
        &\tranty{ M[?\id]\{\col_1: \sigma_1,\ldots,\col_n: \sigma_n\}} = M[(\sigma_1,\ldots,\sigma_n)] \\
        &\tranty{ C^i\{\col_1: \sigma_1,\ldots,\col_n: \sigma_n\} } = C^i[(\sigma_1,\ldots,\sigma_n)] \\
        &\tranty{ \Gamma, T[?\id]\{\cols\}} = \tranty{ \Gamma}, \id: \tranty{T[?\id]\{\cols\}}  \\
         &\tranty{ \Gamma, M[?\id]\{\cols\}} = \tranty{ \Gamma}, \id: \tranty{M[?\id]\{\cols\}} \\
         &\tranty{ \sigma} = \sigma \quad
         \tranty{ []} = []
    \end{align*}
\end{subfigure}%
\begin{subfigure}[t]{.48\linewidth}
    \caption{\textbf{Translating Event and Event-0 Expressions}}
    \begin{align*}
    \tran{\delta}{\id.\col_i = e} &= (\id,i) = \tran{\delta}{ e} \\
    \tran{\delta}{ \id.\col_i\ >\ e} &= (\id,i) > \tran{\delta}{ e} \\
    \tran{\delta}{ \id.\col_i\ <\ e} &= (\id,i) < \tran{\delta}{ e} \\
    \tran{\delta}{ c_1 \land c_2} &= \tran{\delta}{ c_1} \land \tran{\delta}{ c_2} \\
    \tran{\delta}{ c_1 \lor c_2} &= \tran{\delta}{ c_1} \lor \tran{\delta}{ c_2} \\
    \end{align*}
\end{subfigure}

\vspace{-4mm}
\begin{subfigure}[t]{\linewidth}
\caption{\textbf{Translating \TableModel{} Queries}}
\centering
\begin{align*}
&\tran{\delta}{ \probof c_2^0 \under \id \given c^0 \given c^1} =
    \hostexp\big(\logpdf_{\id}(\tran{\delta}{ c^0},\tran{\delta}{ c^1},\tran{\delta}{ c_2^0})\big)\\
&\tran{\delta}{ \probof c_2^1 \under \id \given c^0 \given c^1} =
    \prob_\id(\tran{\delta}{ c^0},\tran{\delta}{ c^1},\tran{\delta}{ c_2^1},)\\
&\tran{\delta}{ \generate \id \given c^0 \given c^1 \limit e} = \replicate(\tran{\delta}{ e}, \simulate_\id(\tran{\delta}{ c^0},\tran{\delta}{ c^1})) \\
&\tran{\delta}{t:T[\id']\{\cols\} \modjoin \id \given c^0 \given c^1} = \\
&\qquad \mapreduce \left(\lambda r.\lowerjoin\left(\singleton(r), \simulate_\id(\tran{\delta[\id' \mapsto r]}{c^0}, \tran{\delta[\id'\mapsto r]}{c^1})\right),
    \tran{\delta}{ t}\right)
\end{align*}
\end{subfigure}

\begin{subfigure}[t]{\linewidth}
\caption{\textbf{Translating Scalar Expressions}}
\begin{minipage}{\linewidth}
\begin{align*} \tran{\delta}{ c} = c\quad
\tran{\delta[\id\mapsto r]}{\id.\col_i} = \pi_i(r) \quad
\tran{\delta}{ \op(e_1,\ldots,e_n)} = \op(\tran{\delta}{ e_1}, \ldots
\tran{\delta}{ e_n})
\end{align*}
\end{minipage}
\end{subfigure}%

\begin{subfigure}[t]{\linewidth}
\caption{\textbf{Translating Table Expressions}}
\begin{minipage}{.47\linewidth}
\begin{align*}
\tran{\delta}{ \rename t \as \id} &= \tran{\delta}{ t};\; \tran{\delta}{ \id} = \id \\
\tran{\delta}{ t_1 \join t_2} &= \lowerjoin(\tran{\delta}{ t_1}, \tran{\delta}{ t_1}) \\
\tran{\delta}{ t:T[\id]\{\cols\} \where e} &= \filter(\lambda r.\tran{\delta[\id\mapsto r]}{ e},
\tran{\delta}{ t})
\end{align*}
\end{minipage}
\begin{minipage}{.5\linewidth}
\begin{align*}
&\tran{\delta}{ \begin{aligned}
    &\select \overline{e} \as \overline{\col}\\
    &\from t: T[\id]\{\cols\}
    \end{aligned}
    }
    =\\
&\map(\lambda r. \tran{\delta[\id\mapsto r]}{\overline{e}}, \tran{\delta}{ t})
\end{align*}
\end{minipage}
\end{subfigure}
\end{framed}

\caption{The lowering transformation $\tranty{\cdot}$.}
\label{fig:lowering-transform}
\end{figure}

\subsection{Lowering Guarantees for Exact Backend}
\label{sub:exact-backend}

A large class of models supports exact inference, e.g. those expressible in SPPL~\citep{saad2021sppl} and truncated multivariate Gaussians.
These models satisfy the exact \ami{} and are able to return exact
samples from $\simulate$, and compute exact marginal $\logpdf$ and $\prob$ queries, even for conditioned models.
We make this precise by giving a measure semantics on the lowered
language (\cref{fig:lower-exact-semantics}) and show that the
program transform $\tranty{\cdot}$ preserves the semantics of the
lowered query (\cref{thm:exact-\ami{}-guarantee}).
In particular, all the scalar computations in
the query are deterministic and that the generated synthetic data comes from exact conditional distributions.

The denotational semantics (\cref{sec:lowering-appendix}, \cref{fig:lower-exact-semantics}) of the lowered language is mostly standard
and resembles the measure-theoretic semantics of \iql{} given in
\cref{fig:semantics-iql}.
Terms $\Gamma \vdash e: \sigma_g$ are interpreted as deterministic
measurable functions $\semex{\Gamma} \to \semex{\sigma_g}$.
Terms $\Gamma \vdash e: \bag[\sigma_g]$ are interpreted as
probability kernels $\semex{\Gamma} \to \measm\bag(\semex{\sigma_g})$,
where substitution for these programs is interpreted using the Kleisli
composition for the point process monad \citep{dash2021monads}.
By induction on the structure of \iql{} programs $t$ in context
$\Gamma;\Delta$, we can show (proof in \cref{sub:proof-exact}):

\begin{theoremE}[Exact \ami{} Guarantee][end, text link=]
\label{thm:exact-\ami{}-guarantee}
Let $\Gamma, [] \vdash t: T[?\id]\{\cols\}$.
Then, for every evaluation of the context $\gamma$,
{\setlength{\abovedisplayskip}{0pt}
\setlength{\belowdisplayskip}{0pt}
\begin{equation*}
\sem{t}(\gamma,[]) = \semex{\tran{[]}{t}}(\gamma).
\end{equation*}}
\end{theoremE}

\subsection{Approximate Backend Guarantee}
\label{sub:approximate-backend}

By relying on approximate probabilistic inference, general-purpose
PPLs can express large classes of models in which exact inference is
intractable.
In addition, programmable inference
\citep{mansinghka2018probabilistic} ensures PPLs can support a diverse class of probabilistic models without sacrificing inference quality. 
We give a new denotational semantics for the lowered language that is appropriate for reasoning in scenarios where the \tableModels{} are
implemented in PPLs with approximate Monte Carlo inference.

Monte Carlo algorithms are typically parameterized by a
positive integer $n$ specifying a compute budget, such as the
number of particles in a sequential Monte Carlo (SMC) algorithm
\citep{chopin2020introduction} or the number of samples in a Markov
Chain Monte Carlo (MCMC) algorithm \citep{robert1999monte}. 
The algorithm specifies a sequence of distributions or estimators that
converge in some sense to a quantity of interest as $n \to \infty$. 
In the case of approximate sampling algorithms, most typically the
distribution of the generated samples converges weakly to the target
distribution, and in the case of parameter estimation the algorithm
produces a strongly consistent estimator of the target parameter
\citep{chopin2020introduction, robert1999monte}.

\paragraph{Random variable semantics.} Our denotational semantics for
approximate \ami{} implementations is motivated by the above
discussion.
We assume the existence of an ambient probability space $(\Omega,
\mathcal{F}, \mathbb{P})$ and associate with each term a sequence of
random variables approximating the term in the exact semantics.
As an example, the approximate semantics of $\semap{\map (x.t_1)\
t_2}$ in the context $\gamma$ and at the ``random seed''
$\omega\in\Omega$ is given at the $n$-th approximation by
{\setlength{\abovedisplayskip}{1pt}
\setlength{\belowdisplayskip}{1pt}
\begin{equation*}
\semap{t_2}(\gamma,\omega)_n
    \bind \left(\lambda S. \return \left\{\lambda
    x'.\semap{t_1}(\gamma[x\mapsto x'], \omega)_n\ y\ \middle|\ y \in
    S\right\}\right).
\end{equation*}}
This means that we first obtain the $n$-th approximation of the input
$t_2$, which is a measure on tables, which we then evaluate to obtain
a concrete table, $S$.
We then apply the function to each row obtained by the $n$-th
approximation of $t_1$. The full semantics is given in
\cref{sub:appendix-approx-guarantee},
\cref{fig:lower-approximate-semantics}.
We assume the following hold:
\begin{itemize}
\item For each \tableModel{} identifier $\id:
M[(\sigma_1,\ldots,\sigma_k)]$ in environment $\gamma$, event $c^1:
C^1[(\sigma_1,\ldots,\sigma_k)]$, and event-0 $c^0 :
C^0[(\sigma_1,\ldots,\sigma_k)]$, there exists a sequence of probability
measures \\$\{\mu^n_{\id; \semap{c^0}(\gamma)_n, \semap{c^1}(\gamma)_n}\}$ on
$\bag\ \prod_{i=1}^k\sem{\sigma_i}$;

\item for $\id$, $\gamma$, $c^1$ and $c^0$ as above, and $c^1_2:
C^1[(\sigma_1,\ldots,\sigma_k)]$, there exists a sequence of real random
variables $\{P^n_{\id;\semap{c^0}(\gamma)_n,
\semap{c^1}(\gamma)_n,\semap{c_2^1}(\gamma)_n}\}$ which takes values in $[0, 1]$ $\mathbb{P}$-almost
surely;

\item for $\id$, $\gamma$, $c^1$ and $c^0$ as above, and $c^0_2:
C^0[(\sigma_1,\ldots,\sigma_k)]$, there exists a sequence of real random
variables $\{L^n_{\id;\semap{c^0}(\gamma)_n,\semap{c^1}(\gamma)_n,
\semap{c^0_2}(\gamma)_n}\}$.
\end{itemize}
These random sequences represent the sequences of approximations
produced by the implementation of the \ami{}.
In general, for a given term $t$ the convergence of sequences
associated with its sub-terms do not imply that the sequence
associated with $t$ converges.
For instance, consider evaluating the following query in an appropriate context $(\gamma,\delta)$:
{\setlength{\abovedisplayskip}{1pt}
\setlength{\belowdisplayskip}{1pt}
\begin{equation*}
\select \id.\col \from \id \where \id.\col \leq (\probof \id'.\col' = 7).
\end{equation*}}
If the value of the term$\probof \id'.\col' = 7$ is approximated, even if we
can make this approximation arbitrarily accurate, the output of the query need
not converge.
For example, if the table $\id$ contains a row in which the value of
$\col$ is exactly $\sem{\probof \id'.\col' = 7}(\gamma,
\delta)$ but the approximation converges to the true
value from below, this row will not be included in the query result no
matter the accuracy of the approximation.
Intuitively, this arises from the fact that the indicator functions of
half intervals are not continuous.

In order for the lowered queries to denote asymptotically sound
estimators for the original queries, we require that the
implementation of the \ami{} methods are asymptotically sound, and write $\lim_n \gamma_n$ to denote an evaluation of the context $\gamma$ in which each random variable is replaced by its limit as $n \to \infty$.
In \cref{sub:appendix-approx-guarantee}, we formalize the notions of
\emph{safe} queries and asymptotically \emph{sound} \ami{}
implementations and details of the proofs.
We then give the following guarantee.

\begin{theorem}[Consistent \ami{} Guarantee]
    \label{thm:consistent-ami-guarantee}
Let $\Gamma, [] \vdash t: T[?\id]\{\cols\}$ be a safe query and
suppose the \ami{} methods have asymptotically sound implementations.
Then, for every evaluation of the context $\gamma$, $\mathbb{P}$-almost surely
\begin{equation*}
\lim_n(\semap{\tran{[]}{t}})(\gamma) = \sem{t}(\lim_n \gamma,[]).
\end{equation*}
\end{theorem}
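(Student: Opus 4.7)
The plan is to proceed by structural induction on the typing derivation of $\Gamma, [] \vdash t : T[?\id]\{\cols\}$, mirroring the proof of \cref{thm:exact-\ami{}-guarantee} but propagating $\mathbb{P}$-almost-sure convergence through each syntactic construct rather than exact equality. Concretely, I would strengthen the induction hypothesis to cover all judgement forms (scalar, event, event-0, table, and \tableModel{} expressions) simultaneously: for each well-typed safe subterm $u$ under context $\Gamma; \Delta$ and each evaluation $(\gamma, \delta)$, the sequence $\semap{\tran{\delta}{u}}(\gamma)_n$ converges $\mathbb{P}$-a.s.\ to $\sem{u}(\lim_n \gamma, \delta)$ in the appropriate topology (pointwise for scalars, weak convergence for table measures, and the natural convergence on events/event-0 pairs).

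The base cases are the AMI primitives and identifier lookups. For table and model identifiers the claim reduces to the definition of $\lim_n \gamma$, while for $\simulate_\id$, $\prob_\id$, and $\logpdf_\id$ it follows directly from the asymptotic soundness hypothesis on $\mu^n$, $P^n$, and $L^n$, whose limits are precisely the exact measures and densities used in \cref{subfig:semantics-\tableModel} and \cref{subfig:semantics-scalar}. The inductive cases for deterministic scalar operators $\op$ and for conjunction/disjunction of events reduce to the continuous mapping theorem, since $\op_s$ and $\op_l$ are continuous on the relevant spaces (modulo the $\Null$-handling conventions, which are continuous by construction).

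For the compound table constructors, I would handle $\rename$, $\join$, $\select$, $\generate$, and $\modjoin$ by using that the bag-valued primitives $\map$, $\filter$, $\fold$, $\lowerjoin$, $\mapreduce$, and $\replicate$ are measurable and, when composed with $\mathbb{P}$-a.s.\ convergent inputs, yield $\mathbb{P}$-a.s.\ convergent outputs provided no continuity failure occurs. This is where the \emph{safety} hypothesis does the real work: safety is designed precisely to forbid the pathological composition sketched just before the theorem statement (a $\where$ clause testing equality against an approximated real value, where the indicator $\mathds{1}_{\{x \leq y\}}$ is discontinuous at $x=y$). Under safety, each $\filter$/$\where$ predicate depends on its approximated argument only through continuity points of the relevant indicator, so the pushforward under $\filter$ commutes with the a.s.\ limit.

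The main obstacle is the $\filter$/$\where$ case and, relatedly, the $\modjoin$ case where the per-row conditioned sample $\simulate_\id(\tran{\delta[\id' \mapsto r]}{c^0}, \tran{\delta[\id'\mapsto r]}{c^1})$ mixes approximated arguments with a row-index varying over a bag. Here I would first establish a small lemma: if $X_n \toas X$ in a space $Y$ and $g_n : Y \to Z$ is a family of measurable maps whose implementations are asymptotically sound and whose dependence on their approximated arguments avoids discontinuity points (as guaranteed by the safety predicate defined in \cref{sub:appendix-approx-guarantee}), then $g_n(X_n) \toas g(X)$. Applying this lemma at each node of the syntax tree and invoking the induction hypothesis on the subterms closes the case. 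All remaining cases (closed terms, constants, $\rename$, $\hostexp$) are immediate since the transformation introduces no new randomness or discontinuity, and composing finitely many a.s.\ events preserves the a.s.\ property, yielding the desired conclusion at the root.
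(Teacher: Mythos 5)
Your proposal follows essentially the same route as the paper: a type-indexed induction over all judgement forms (the paper phrases this as a fundamental lemma for logical relations $R_\sigma$, $R_{C^i}$, $R_{\tabtype}$ capturing a.s.\ pointwise convergence for scalars, a.s.\ equality for events, and a.s.\ weak convergence for bag measures), closed at each node by the continuous mapping theorem over the metrizable topologies on base types, products, and symmetric-product bags, with the \ami{} primitives as base cases discharged by asymptotic soundness. Two small points where your sketch diverges from what the paper actually does. First, the paper's $\safe$ macro simply sets $\safe(t \where e) = \false$, so the $\filter$ case is vacuous for safe queries; your plan instead envisions admitting $\where$ predicates that ``avoid discontinuity points,'' which is a more permissive analysis the paper explicitly defers to future work, and which would require a genuinely harder argument (a.s.\ convergence through indicators needs boundary-measure-zero conditions, not just continuity of the ambient maps). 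Second, for events the paper does not use any ``natural convergence of sets'': the safety condition forces the scalar subterms of events to be \emph{exact}, so event denotations are a.s.\ \emph{equal} for every $n$, and $\mu_n(E_n) \to \mu(E)$ then follows without any portmanteau-style argument; your phrase ``natural convergence on events/event-0 pairs'' glosses over the fact that mere set convergence would not suffice here. Neither point invalidates your proof of the stated theorem, since both are resolved by the paper's (stricter) definition of safety, but you should make the event-exactness device explicit rather than appealing to continuity there.
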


\section{Evaluation}
\label{sec:perf-eval}

The performance of an open-source Clojure implementation of \iql{} is
evaluated against other systems that have similar capabilities.
We test runtime, the effect of optimizations, and runtime
overhead of our system over alternative implementations of the same task.
Experiments were run on an Amazon EC2 C6a instance with Ubuntu 22.04,
4 vCPUs and 8.0 GiB RAM.

The probabilistic models used in the evaluation are obtained using
probabilistic program synthesis~\citep[Chapter 3]{saad2022}.
Each model is an ensemble of ``MultiMixture'' probabilistic
programs~\citep[Section 6]{saad2019bayesian}, which are posterior
samples from the CrossCat model class~\citep{mansinghka2016crosscat},
generated using ClojureCat~\citep{CharchutMEngThesis}.
An ensemble of 10 probabilistic programs is used in \cref{sec:perf-eval-runtime}
and 12 programs in \cref{sec:perf-eval-studies}.
%

\begin{table}[t]
\footnotesize
\begin{minipage}[t]{.525\linewidth}
\centering
\captionsetup{skip=4pt}
\caption{Runtime (sec) comparison of \iql{} and BayesDB~\citep{mansinghka2015bayesdb}
on 10 benchmark queries (\cref{appx:list-of-queries}) for evaluating
probability densities of measure-zero events.}
\label{tab:runtime}
\begin{tabular*}{\linewidth}{|l@{\extracolsep{\fill}}ccc|}
\hline
~ & \iql{} & BayesDB & Speedup \\
~ & (ClojureCat Backend) & (CGPM Backend) & ~ \\\hline\hline
Q1  & 0.24 $\pm$ 0.03 & 0.59 $\pm$ 0.16 & { 2.5x} \\
Q2  & 0.29 $\pm$ 0.03 & 1.15 $\pm$ 0.2  & { 4.0x} \\
Q3  & 0.43 $\pm$ 0.06 & 1.72 $\pm$ 0.28 & { 4.0x} \\
Q4  & 0.48 $\pm$ 0.06 & 2.25 $\pm$ 0.27 & { 4.7x} \\
Q5  & 0.57 $\pm$ 0.07 & 2.68 $\pm$ 0.36 & { 4.7x} \\
Q6  & 0.33 $\pm$ 0.06 & 0.55 $\pm$ 0.23 & { 1.7x} \\
Q7  & 0.49 $\pm$ 0.05 & 1.53 $\pm$ 0.26 & { 3.1x} \\
Q8  & 0.46 $\pm$ 0.03 & 1.81 $\pm$ 0.21 & { 3.9x} \\
Q9  & 0.37 $\pm$ 0.03 & 2.51 $\pm$ 0.32 & { 6.8x} \\
Q10 & 0.45 $\pm$ 0.04 & 2.87 $\pm$ 0.39 & { 6.4x} \\ \hline
{Mean} & 0.41 $\pm$ 0.11 & 1.77 $\pm$ 0.83 & { 4.3x}
\\\hline
\end{tabular*}
\end{minipage}\hfill
\begin{minipage}[t]{.45\linewidth}
\captionsetup{belowskip=0pt,aboveskip=6pt}
\vspace{-0.3cm}
\label{fig:benchmark-variance-rejection-sampling}
\includegraphics[width=\textwidth]{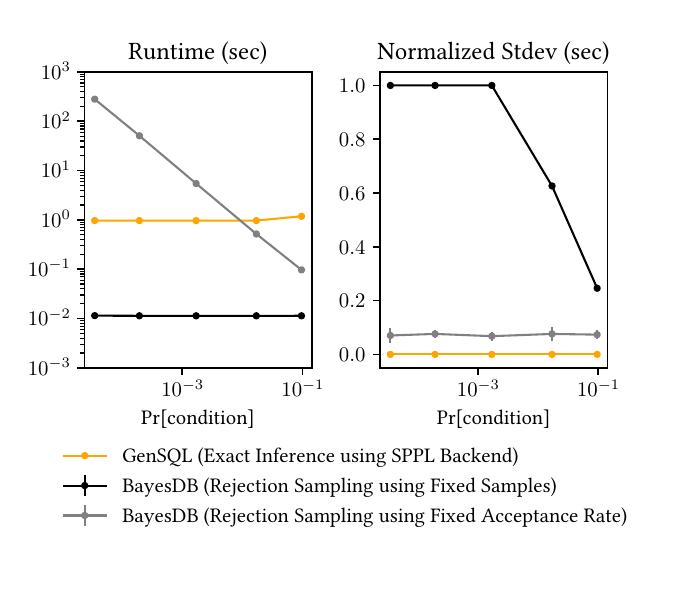}
\vspace{-1cm}
\captionof{figure}{
Runtime/stdev comparison of \iql{} and BayesDB~\citep{mansinghka2015bayesdb}
on 5 benchmark queries for evaluating probabilities of positive measure events.
}
\end{minipage}
\end{table}

\subsection{Performance and Usability}
\label{sec:perf-eval-runtime}

\paragraph{Runtime comparison}
\Cref{tab:runtime} compares the runtime on 10
benchmark queries (\cref{appx:list-of-queries}) adapted from
\citet[Tables 4.2 and 4.3]{CharchutMEngThesis} using \iql{} (with the
ClojureCat backend) and BayesDB (with the CGPM
backend~\citep{saad2016bayesdb}) for evaluating exact probability
densities.
\Cref{fig:benchmark-variance-rejection-sampling} compares the
runtime and standard deviation for computing the probabilities of positive measure events.
\iql{} (with the SPPL backend~\citep{saad2021sppl}) delivers exact
solutions, whereas BayesDB delivers approximate solutions using
rejection sampling.
Two rejection strategies in BayesDB are shown in
\cref{fig:benchmark-variance-rejection-sampling}: a fixed number of
samples (faster but higher variance) or a fixed acceptance rate
(slower but lower variance), which both are inferior to exact
solutions from \iql{}.

The performance gains in \iql{} are due to three main reasons: the
ClojureCat backend is faster than the CGPM backend in BayesDB, \iql{} has
optimizations (discussed below) that exploit repetitive computations,
and \iql{} itself is implemented in Clojure, a performant
language.
%

\paragraph{Optimizations and system overhead}
\iql{} leverages two classes of optimizations: caching (of the
likelihood queries and conditioned models) and exploiting independence
relations between columns.
The latter allows us to simplify a query such as$\probof \id.x>42
\under \id \given$ $\id.y = 17$ to the semantically equivalent query$\probof \id.x > 42 \under \id$ if the columns $x$ and $y$ are
independent.
\cref{sec:correctness-optimizations} gives a detailed account of the
optimizations.

In \cref{fig:effect-optim}, the unoptimized \iql{} queries have a
$1.1$-$1.6\mathrm{x}$ overhead compared to the pure ClojureCat baseline.
The optimizations reduce the overhead and can sometimes drastically
improve performance, while caching significantly reduces the
variance in the runtime of the queries.
In \cref{fig:effect-optim:b}, the effect of the independence
optimization varies between replicates, as these are different
CrossCat model samples, which explains the higher variance in query
runtime.

\paragraph{Code comparison.}
\Cref{fig:code-proba} compares the code required in \iql{}, pure
Python using SPPL~\citep{saad2021sppl}, and pure Clojure using
ClojureCat~\citep{CharchutMEngThesis}, for a conditional probability query.
\Cref{fig:code-proba:a} shows how \iql{} gains clarity by specializing
in data that comes from database tables.
In contrast, both SPPL and ClojureCat require users to hand-write the
looping/mapping over the data, which is error prone.
For instance, the code in \cref{fig:code-proba:c} will crash if the
table has missing values.
%
In \cref{fig:code-proba:b}, ClojureCat requires conditions to be maps.
Users can decide if they encode columns with strings, symbols, or
keywords.
If this choice does not align with the key type returned by the CSV
reader, the query will run but conditioning will result in a null-op.

In \cref{sub:code-comparison-with-scikit-learn}, we compare a single line query on a conditioned model in \iql{} to the equivalent code in Scikit-learn~\citep{pedregosa2011scikit} on the Iris data from the UCI ML repository.
The model querying alone in Scikit-learn is more than 50 lines long and clearly error prone, and we find that \iql{} offers a significant advantage in simplicity over such baselines.

\paragraph{Code comparison with BayesDB}
\Cref{fig:compare-code-bayesdb} shows \iql{} and its closest relative,
BayesDB~\citep{mansinghka2015bayesdb}, on a$\modjoin$query on
synthetic data.
The \iql{} code is more concise and simpler than BayesDB's code,
which is possible due to the language abstractions for manipulating
models.
In BayesDB, the user must exit to SQL and hand-code column
manipulations to fit the expected fixed pattern to query a model.
\Cref{sec:related-work} provides a detailed comparison of \iql{} and BayesDB.

\begin{figure}
    \captionsetup{skip=3pt,belowskip=0pt}
    \captionsetup[subfigure]{skip=0pt,labelfont=footnotesize,textfont=footnotesize}
    \begin{subfigure}[t]{0.31\textwidth}
    \centering
    \includegraphics[width=\textwidth]{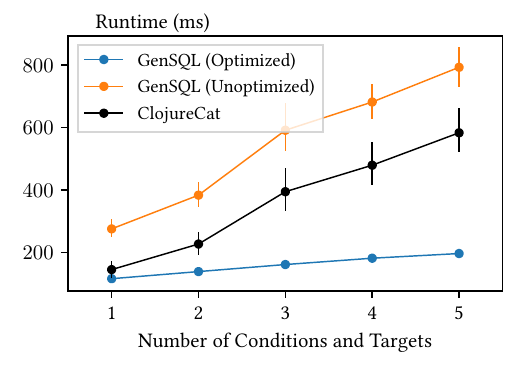}
    \caption{Varying number of conditions and targets in the
    \probof queries shown in \cref{tab:runtime}.}
    \end{subfigure}
    \hfill
    \begin{subfigure}[t]{0.31\textwidth}
    \centering
    \includegraphics[width=\textwidth]{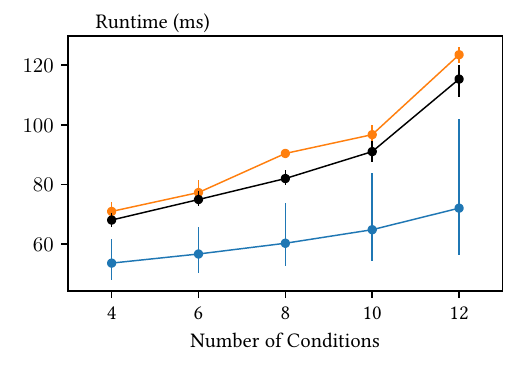}
    \caption{Varying number of conditions in $\given$ clause for
    \generate queries (caching does not apply).}
    \label{fig:effect-optim:b}
    \end{subfigure}
    \hfill
    \begin{subfigure}[t]{0.31\textwidth}
    \centering
    \includegraphics[width=\textwidth]{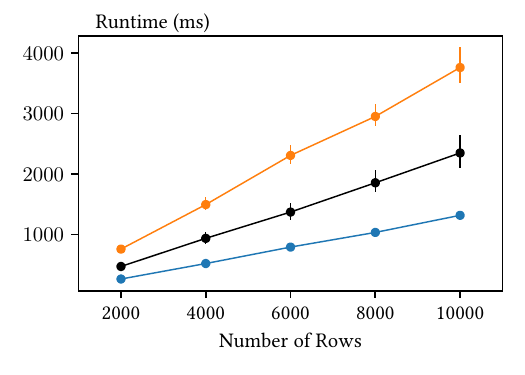}
    \caption{Varying number of rows in a data table
    used in the $\from$ clause of \select with a $\probof$ query.}
    \end{subfigure}
    
    \caption{Runtime comparison between \iql{} (ClojureCat backend) and
    raw ClojureCat \citep{CharchutMEngThesis}.}
    \label{fig:effect-optim}
    \smallskip

    \captionsetup[subfigure]{skip=0pt,belowskip=0pt,aboveskip=0pt}
    \begin{subfigure}[b]{0.42\textwidth}
     \begin{subfigure}[b]{\textwidth}
         \centering
         \input{figures/iql-code/prob-for-comparison}
         \caption{\iql{}}
         \label{fig:code-proba:a}
     \end{subfigure}
     \begin{subfigure}[b]{\textwidth}
         \centering
        \input{figures/other-code/clojure-prob}
         \caption{ClojureCat (Clojure)}
          \label{fig:code-proba:b}
     \end{subfigure}
    \end{subfigure}
    \qquad
    \begin{subfigure}[b]{0.5\textwidth}
         \centering
         \input{figures/other-code/sppl-prob}
         \caption{SPPL (Python)}
         \label{fig:code-proba:c}
    \end{subfigure}
    \caption{Comparison of \iql{}, ClojureCat~\citep{CharchutMEngThesis}, and SPPL~\citep{saad2021sppl} code for a conditional probability query.}
    \label{fig:code-proba}
   \smallskip

    \captionsetup[subfigure]{skip=0pt,belowskip=0pt,aboveskip=0pt}
    \captionsetup{skip=3pt}
    \begin{subfigure}[b]{0.5\textwidth}
    \centering
    \begin{subfigure}[b]{0.3\textwidth}
    \centering
    \scriptsize
    \begin{tabular}{|ccc|}
    \hline
    \textbf{a} & \textbf{b} & \textbf{c} \\\hline\hline
    $a_0$      & $b_0$      & $c_0$ \\
    $a_1$      & $b_1$      & $c_1$ \\
    $a_0$      & $b_0$      & $c_0$ \\
    $a_0$      & $b_1$      & $c_0$ \\
    $\dots$    & $\dots$    & $\dots$\\\hline
    \end{tabular}
    \caption{Table \texttt{foo}}
    \end{subfigure}
    \quad
    \begin{subfigure}[b]{0.5\textwidth}
    \centering
    \scriptsize
    \begin{tabular}{|ccccc|}
    \hline
    \textbf{a}&\textbf{b}&\textbf{x}&\textbf{y}&\textbf{z}\\\hline\hline
    $a_0$ & $b_1$ & 4.2  & 4.1  & 0.6 \\
    $a_1$ & $b_1$ & -4.4 & -5.4 & 0.2\\
    $a_1$ & $b_0$ & -3.7 & -6.2 & 0.5\\
    $a_1$ & $b_1$ & -6.2 & -4.2 & 0.1\\
    \dots&\dots&\dots&\dots&\dots\\\hline
    \end{tabular}
    \caption{Table \texttt{bar} to build model}
    \end{subfigure}
    
    \captionsetup[subfigure]{skip=0pt}
    \begin{subfigure}[b]{.8\textwidth}
    \centering
    \input{figures/iql-code/gen-join}
    \caption{\iql{}}
    \end{subfigure}
    \end{subfigure}
    \quad
    \begin{subfigure}[b]{0.4\textwidth}
    \input{figures/bql-code/gen-join}
    \caption{BayesDB}
    \end{subfigure}
    \caption{Comparison of \iql{} and BayesDB~\citep{mansinghka2015bayesdb} code.
    The latter does not support \modjoin.}
    \label{fig:compare-code-bayesdb}
    \end{figure}

\subsection{Case Studies on Real World Data}
\label{sec:perf-eval-studies}

We present two case studies to demonstrate the application of \iql{}
to real-world problems: one in medicine (clinical trial
data) and one in synthetic biology (wetlab data).
The datasets can be costly to obtain and researchers are interested in
understanding and analyzing their data.

In the first case, we show how anomaly detection in \iql{}
can be used to check for probable mislabelling of the data.
The anomalous rows can also be investigated further to understand the
reasons for the anomaly.
In the second case, we show how \iql{} can be used to generate
accurate synthetic data, capturing the complex relationships between
different host genes and experimental conditions.
Capturing these relationships with the model helps predict whether a
certain experimental condition or modification of the genome has
cascading downstream effects through the interrelations between the
genes. Such effects can render the cell toxic and kill the bacterium,
leading to a failed experiment.
The virtual wet lab allows researchers to check for such effects
before running costly experiments in the real world.

\begin{figure}
\captionsetup{skip=4pt}
\captionsetup[subfigure]{skip=4pt}

\begin{subfigure}[b]{0.49\textwidth}
     \centering
     \input{figures/iql-code/anom-detec-clinical}
     \caption{Find anomalous BMI values, defined by $ P(\textsc{BMI} \mid *) < 0.01$ for all BMI values within the 5th (20.3) and 95th (38.4) percentiles in the US.}
     \label{fig:case-anom:a}
 \end{subfigure}\hfill
 \begin{subfigure}[b]{0.49\textwidth}
     \centering
     \input{tables/anom.tex}
     \caption{Result from the query in \subref{fig:case-anom:a}. Eight anomalous participants are
     returned: all are clinically obese but report above-average health status and exercise.}
     \label{fig:case-anom:b}
\end{subfigure}

\captionsetup[subfigure]{skip=0pt,justification=centering}
 \begin{subfigure}[b]{0.35\textwidth}
     \centering
     \includegraphics[width=\textwidth]{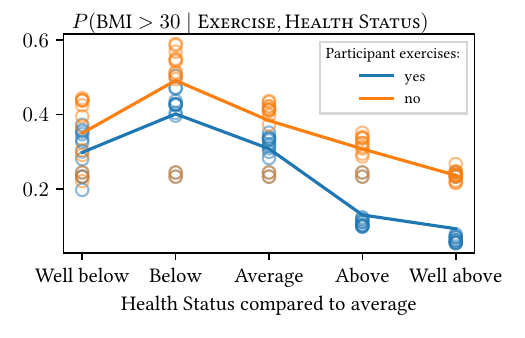}
     \caption{Conditonal probabilites encoded by the underlying model.}
     \label{fig:case-anom:c}
 \end{subfigure}\hfill
 \begin{subfigure}[b]{0.4\textwidth}
     \centering
     \includegraphics[width=\textwidth]{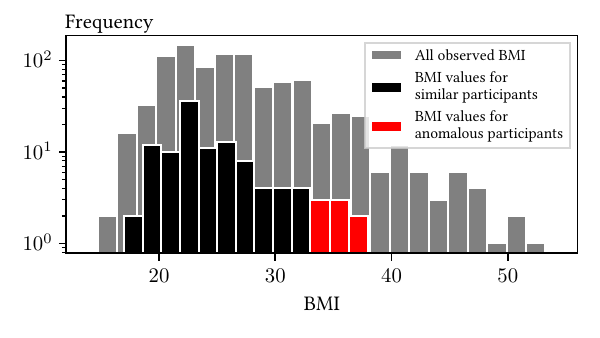}
     \caption{Compare anomalous BMI values to normal ones.}
     \label{fig:case-anom:d}
 \end{subfigure}\hfill
 \begin{subfigure}[b]{0.23\textwidth}
     \centering
     \includegraphics[width=\textwidth]{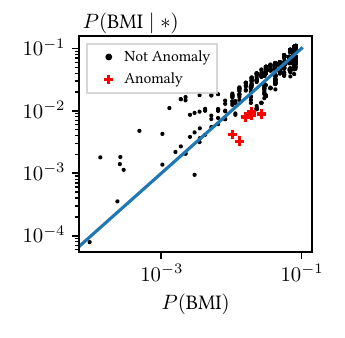}
     \caption{Compare conditional and marginal BMI.}
     \label{fig:case-anom:e}
 \end{subfigure}
\caption{Case study: Anomaly detection in clinical trials.}
\label{fig:case-anom}
\end{figure}

\begin{figure}[t]
\captionsetup[subfigure]{skip=0pt,belowskip=3pt,aboveskip=-3pt}
\captionsetup{skip=0pt}
\begin{subfigure}[b]{0.37\textwidth}
         \centering
         \input{figures/iql-code/gene-0}
         \medskip
         \caption{\footnotesize Overall population}
         \label{fig:case-gene:a}
\end{subfigure}
\hfill
\begin{subfigure}[b]{0.62\textwidth}
         \centering
         \includegraphics[width=\textwidth]{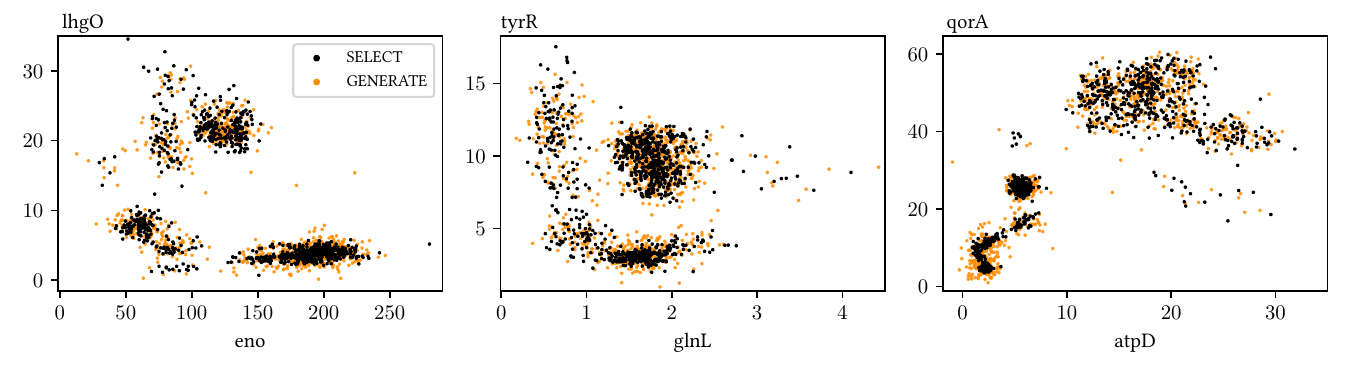}
         \caption{\footnotesize 6 genes from the overall population.}
         \label{fig:case-gene:b}
\end{subfigure}

\begin{subfigure}[b]{0.37\textwidth}
         \centering
         \input{figures/iql-code/gene-1}
         \medskip
         \caption{\footnotesize Add IPTG but not Arabinose}
          \label{fig:case-gene:c}
\end{subfigure}
\hfill
\begin{subfigure}[b]{0.62\textwidth}
         \centering
         \includegraphics[width=\textwidth]{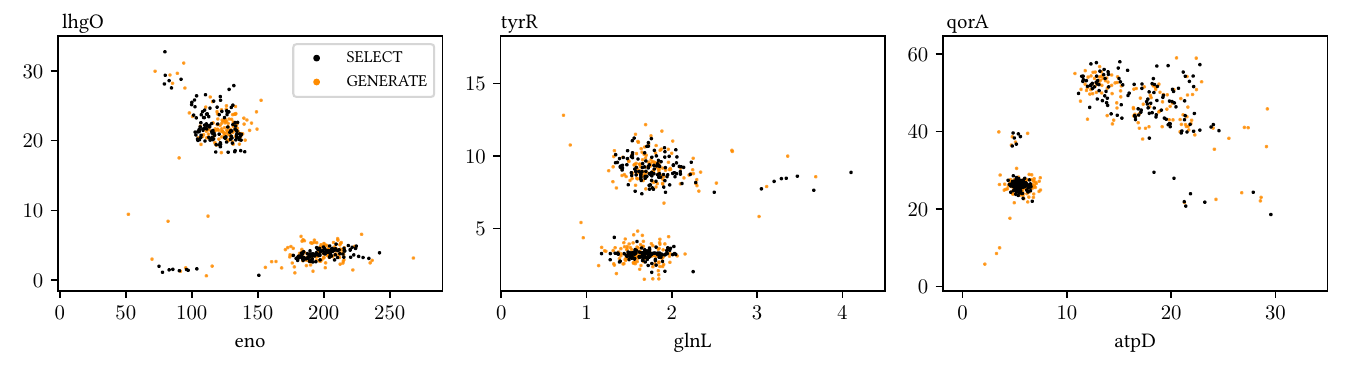}
         \caption{\footnotesize 6 genes from \subref{fig:case-gene:c}.}
         \label{fig:case-gene:d}
\end{subfigure}

\begin{subfigure}[b]{0.37\textwidth}
         \centering
         \input{figures/iql-code/gene-2}
         \medskip
         \caption{\footnotesize Add both IPTG and Arabinose}
         \label{fig:case-gene:e}
\end{subfigure}
\hfill
\begin{subfigure}[b]{0.62\textwidth}
         \centering
         \includegraphics[width=\textwidth]{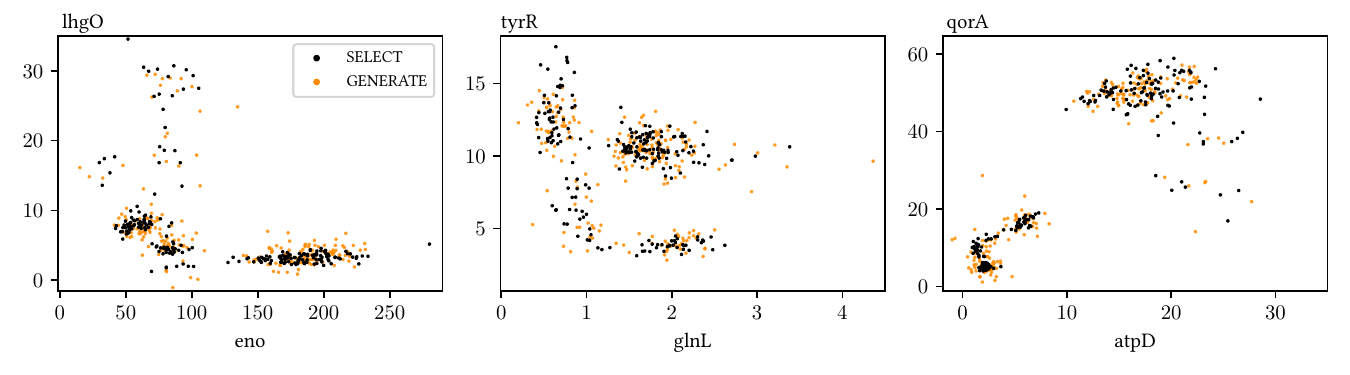}
         \caption{\footnotesize 6 genes from \subref{fig:case-gene:e}.}
         \label{fig:case-gene:f}
\end{subfigure}
\caption{Case study: Conditional synthetic data generation for a virtual wet lab.}
\label{fig:case-gene}

\medskip


\begin{subfigure}[t]{\textwidth}
\centering
\includegraphics[width=.75\textwidth]{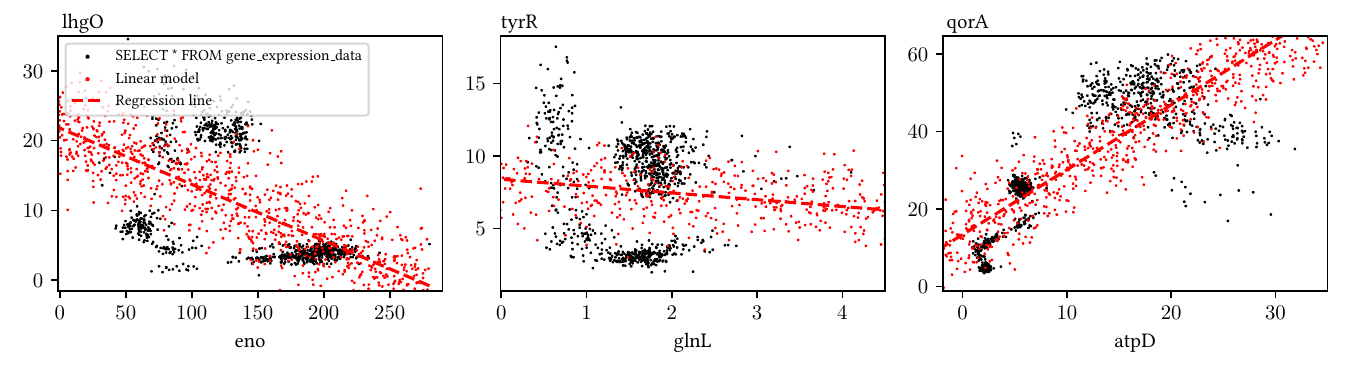}
\caption{\footnotesize Real and generated data from
bivariate linear models (compare to \cref{fig:case-gene:b}).}
\label{fig:case-gene-baseline-comparison:linear}
\end{subfigure}

\begin{subfigure}[t]{\textwidth}
\centering
\includegraphics[width=.75\textwidth]{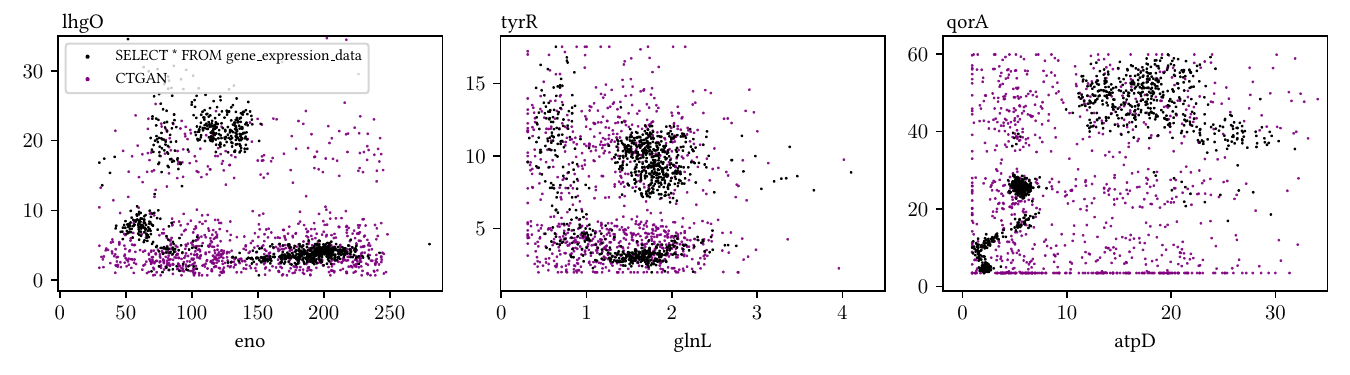}
\caption{\footnotesize Comparison of real and generated data from
CTGAN (compare to \cref{fig:case-gene:b}).}
\label{fig:case-gene-baseline-comparison:gan}
\end{subfigure}

\caption{Linear models and conditional generative adversarial networks (CTGAN~\citep{xu2020modeling})
produce less accurate synthetic virtual wet lab data as compared to
the synthetic data from \iql{} shown in \cref{fig:case-gene:b}.
In \subref{fig:case-gene-baseline-comparison:gan}, the default model
and inference parameters in the open source implementation of CTGAN
is used.}
\label{fig:case-gene-baseline-comparison}
\end{figure}

\paragraph{Anomaly detection in clinical trials.}
The (BEAT19) clinical trial~\citep{beat19} contains data about COVID-19 and records behavior, environment variables, and treatments.
\Cref{fig:case-anom:a} shows a query used for anomaly detection~\citep{BMIcalculator}.
For each row, it computes the model likelihood of the value BMI given the other values of the row. 
The trial participants labeled anomalous (\cref{fig:case-anom:b}) all
report above-average or well-above-average health and that they
exercise, while meeting the World Health Organization's definition of
clinical obesity~\citep{BMIdef}.
\cref{fig:case-anom:d} compares the overall population in the
trial (grey), anomalous individuals (red), and the subset of
the population with the same behavioral covariates (exercise, health status, etc.) as the anomalous individuals (black).
For similar individuals, the data generally suggests a lower BMI.
We can also compare the marginal and the conditional probability of
BMI values in the table of clinical trial records
(\cref{fig:case-anom:d}).
Anomalous data (red) is lower than the diagonal line,
highlighting the ``contextualization'' of BMI values that happens by
conditioning the models: the BMI values are less likely given the
context of the other values in the row, while not necessarily extreme.
To demonstrate this effect, we first apply a$\where$filter that removes
BMI values outside of the 5th and the 95th percentile, excluding one-dimensional extreme values (\cref{fig:case-anom:a}).
We then compute the conditional probabilities of the BMI values in
each row for the remaining data and return anomalies.
\cref{fig:case-anom:c} shows the posterior predictive over the ensemble of models (line) and for each individual model (dots) for a BMI above 30 given exercise and health status.

\paragraph{Conditional synthetic data generation for virtual wet lab.}
\Cref{fig:case-gene} shows synthetic gene expression data generated
using \iql{}, given a dataset from an experiment testing genetic
circuits~\citep{nielsen2016genetic} in \textit{Escherichia coli}.
The synthetic data aligns with the overall population characteristics
(\cref{fig:case-gene:a,fig:case-gene:b}) and accurately reflects the
outcomes of actual experimental interventions
(\crefrange{fig:case-gene:c}{fig:case-gene:f}).
In synthetic biology, the prospect of implementing genetic circuits
has fundamental implications for medical device
engineering~\citep{mansouri2022therapeutic},
bio-sensing~\citep{wang2013modular} and environmental
biotechnology~\citep{xue2022mercury}.
These circuits require input which is typically provided by adding
inducer substances to the culture mediums where the organisms are
grown.
Our figures show the effect of adding two such inducer substances,
Arabinose and IPTG on 6 different host genes.

Producing standard RNA sequencing data can be
costly~\citep{lohman2016evaluation}, especially for new, engineered
organisms that are not mass-produced.
When it is produced, RNA sequencing will yield measurements for gene
expressions for thousands of annotated host
genes~\citep{blattner1997complete}.
These genes are highly interrelated, and knowledge of the relations is
only partially available~\citep{keseler2021ecocyc}.
Thus, the application of generative models to these data presents a
challenging high-dimensional modeling problem, further compounded by
the inherent non-linearity in the data, as illustrated in
\cref{fig:case-gene-baseline-comparison:linear,fig:case-gene-baseline-comparison:gan}.

The most popular approach to modeling gene expression data is linear
regression~\citep{eck2008determining}, as models are easy to interpret and readily available in data analysis libraries.
For non-numerical data, linear regression requires analysis-specific
re-coding of discrete values.
That aside, the low capacity of the model means that it fails to
faithfully reproduce in the actual wet lab data, as shown in
\cref{fig:case-gene-baseline-comparison:linear}.
Conditional generative adversarial networks(CTGAN)~\citep{xu2020modeling}, though more complex, are also an appropriate baseline because they are domain-general and effective at modeling multivariate, heterogeneous data.
However, GANs are hard to interpret and as RNA sequencing data
acquisition is so costly, the number of available training examples
(943) renders it unsuitable for CTGANs.
\cref{fig:case-gene-baseline-comparison:gan} depicts this model class
failing to accurately model the gene expression data.

\section{Related Work}
\label{sec:related-work}

\paragraph{Probabilistic databases.}
Probabilistic databases systems~\citep{suciu2011probabilistic,van2017query} develop efficient algorithms for inference queries on discrete distributions over databases, often based on variants of weighted model counting, for which hardness complexity results were shown and algorithms were developed for tractable cases and efficient approximations.
\citet{cambronero2017} integrate probabilities into a relational
database system to support imputation, while \citet{hilprecht2020} use
probabilistic circuits to improve query performance.
\citet{jampani2011} use probabilistic databases to support random data
generation and simulation.
\citet{cai2013simulation} provides Gibbs sampling support in the space of database tables to a SQL-like language, enabling bayesian machine learning workload such as linear regression or latent Dirichlet allocation.
These languages are typically extensions to SQL or relational
algebra but with limited support for probabilistic models, which they tradeoff for performance.
\citet{schaechtle2022bayesian} presents a preliminary design for
extending SQL to support probabilistic models of tabular data.
Our work differs in that it presents \begin{enumerate*}[label=(\arabic*)]
\item a formalization of the system;
\item a denotational semantics;
\item soundness guarantees for the system;
\item a unified interface that probabilistic models implement;
\item a lowering transform and target lowering language;
\item an extensive performance evaluation; and
\item two new case studies on real-world data.
\end{enumerate*}

\paragraph{Semantics of probabilistic databases.}
\citet{barany2017declarative} and \citet{grohe2022generative} give a semantic
account to probabilistic databases by giving a probabilistic semantics
and guarantees to an extension of Datalog.
\citet{dash2021monads} give a monadic account and denotational
semantics for measurable queries in probabilistic databases.
Their semantics of SQL-like expressions inspired the semantics of our
table expressions.
\citet{grohe2022infinite} established a formal framework for reasoning
about infinite probabilistic databases.
\citet{benzaken2019coq} formalized the semantics of SQL in Coq while
\citet{borya2023formalisation} formalized relational algebra and a
SQL-like syntax using a model checker.

\paragraph{Probabilistic program synthesis.}
\iql{} has been designed with the possibility to leverage powerful
probabilistic program synthesis techniques based on
Bayesian~\citep{mansinghka2016crosscat,saad2019bayesian,adams2010learning}
or
non-Bayesian~\citep{chasins2017data,gens2013learning,grosse2012exploiting,nori2015efficient,hwang2011inducing}
probabilistic model discovery.
The \ami{} provides a unifying approach to expressing powerful Bayesian inference workflows in these probabilistic programs using a high-level SQL-like
language.
Extending the interface to handle synthesized models of time
series~\citep{saad2023sequential,saad2018trcrpm} and/or relational
data~\citep{saad2021hierarchical} is a promising avenue for future
work.

\paragraph{Probabilistic programming systems.}
While we used a Clojure version~\citep{CharchutMEngThesis} of CrossCat
\citep{mansinghka2016crosscat} in our experiments, our system supports
any probabilistic program that satisfies the \tableModel{} interface.
We can thus reuse models written in the variety of PPLs
developed in the literature, such as models written in languages supporting
approximate inference \citep{towner2019gen,
ge2018turing,carpenter2017stan,wood2014new,bingham2019pyro,milch20071,salvatier2016probabilistic}
and exact inference
\citep{saad2021sppl,holtzen2020scaling,gehr2016psi,zaiser2023exact}.
Our model interface is inspired by the SPPL interface
\citep{saad2021sppl} and the CGPM interface
\citep{saad2016probabilistic}.
\citet{gordon2014tabular} propose a probabilistic programming system
using a functional syntax similar to the stochastic lambda calculus,
specialized to inference over relational databases, implemented on top
on Infer.net. It can perform inference tasks such as linear regression
and querying for missing values which enable data imputation,
classification, or clustering.
\citet{borgstrom2016fabular} present a probabilistic DSL and semantics
for regression formulas in the style of the formula DSL in R.
Domain-specific PPLs for tabular data have also been designed to solve
tasks such as data cleaning \citep{lew2021pclean,rekatsinas2017holoclean}.

\paragraph{BayesDB}
Although BayesDB~\citep{mansinghka2015bayesdb} was
motivated by similar goals as \iql{},  \iql{} introduces novel
semantics concepts and soundness theorems that BayesDB did not.
\iql{} also improves upon BayesDB in terms of expressiveness and
performance, as shown in \cref{sec:perf-eval}.
For example, \iql{} queries can be nested and interleaved with SQL,
and also combine results from multiple models.
\iql{} also provides an exact inference engine for a broad class of
sum-product probabilistic programs~\citep{saad2021sppl}.
\revision{ BayesDB, on the other hand, has interesting features that
\iql{} does not yet support such as iterating over model and columns
(e.g. to find pairs of columns with the highest mutual
information)~\citep{saad2017dependencies} and similarity search
between rows~\citep{saad2017search}.
BayesDB also has a ``meta-modeling'' DSL~\citep{saad2016bayesdb} for composing probabilistic
programs from various sources. }

\paragraph{Automated Machine Learning.} Several systems
\citep{agtabular,ledell2020,thornton2013,bergstra2015,olson2016,jin2019}
have been developed to automate the use of discriminative machine
learning methods for analyzing tabular data.
Unlike \iql{}, they do not support the use of generative probabilistic
programs for tabular data satisfying a unified interface (for
sampling, conditioning, and evaluating probabilities or densities)
which enables a single model to be reused across many different tasks.

\section{Contributions}

\iql{} specializes probabilistic programming languages to applications
with tabular data.
It is differentiated from general purpose PPLs in three main ways:
\begin{itemize}[wide]
\item \textbf{Through the \ami{}, \iql{} enables multi-language
workflows.}
Users from different domains and with different expertise should be
able to use probabilistic models for their queries without having to
learn all the details of the PPL in which the model is written.
The \ami{} enables this separation of concerns by providing a
well-specified interface.
It enables the integrating probabilistic models of tabular
data in different languages, as it can be implemented in either a
general-purpose or domain-specific PPL (\cref{sec:implementations-atmi}).
There is no standard way to jointly query models in different PPLs or
use the result of a query in one language against a model in another
language.
As different PPLs focus on different workloads, users of \iql{} can
work with several models written in different PPLs.
\iql{} thus provides a natural multi-language workflow, and our
experiments already use multiple backends (Gen.clj, SPPL, and
ClojureCat).
\item \textbf{\iql{} enables declarative querying.}
No current PPL offers a simple declarative syntax for evaluating
complex queries (e.g., containing elaborate joins and nested selects)
interleaving calls on probabilistic models and database tables.
A number of PPLs provide declarative syntax for specifying and
conditioning models, but the user must decide which operations
on what conditional distributions to evaluate and then
manually combine the results of these operations.
\iql{} relieves the users of such concerns, reducing the chances of
programming errors.
\item \textbf{\iql{} enables reusable performance optimizations.}
Widely used database management systems (DBMS) have been optimized by many
engineer-hours of effort over several decades.
These optimizations are highly reusable because
they are independent of the application domain
and specific languages that the DBMS interfaces with.
\iql{} enables analogous optimizations for workloads that interleave
ordinary database queries with probabilistic inference and generative
modeling.
\iql{}'s optimizations can carry over to many
domains and workflows, avoiding the need for project-specific
performance optimizations involving probabilistic models of
tabular data.
\end{itemize}

We see two interesting avenues for \iql{} to impact database
applications and design.
\paragraph{Integration of \iql{} with database management systems (DBMS)}
First, \iql{} could serve as a query language,
allowing users to query generative models of tabular data directly
from the DBMS.
One use case of rapidly increasing practical importance is querying
synthetic data, generated on the fly to meet user-specified
privacy-utility trade-offs, instead of querying real data that cannot
be shared due to privacy constraints.
Other potential applications for
synthetic data include testing, performance tuning, and sensitivity
analysis of end-to-end data analysis workflows.
In all these settings,
\iql{} implementations could also draw on performance engineering
innovations from DBMS engines, optimized further using the generative
models themselves (e.g., to reduce variance for stratified sampling
approximations to SQL aggregates \citep{AgarwalMPMMS13}).

\paragraph{Modularized development of queries and models}
\iql{} introduces abstractions that isolate query developers and query
users from model developers.
This separation of concerns is analogous to the
physical data independence property achieved by relational databases
\citep{codd1970relational}.
Most database users do not need to know
the details of how data is stored and indexed to be able to query it
efficiently, but some experts do understand how to tune indices to
ensure that databases meet the necessary performance constraints.
Most \iql{} users need not be experts on the details of the
algorithms, modeling assumptions, and software pipelines that produced
the underlying generative models.
Expert statisticians and generative modelers can
still ensure the models are of sufficient quality and tune trade-offs
between performance, maintenance costs, and accuracy, improving models
without invalidating user workflows.
With \iql{}, both typical users and experts can more easily and
interactively query generative models to test their validity, both
qualitatively and quantitatively.
This division of responsibility between users, generative modelers,
and probabilistic programming system developers could potentially help
our society more safely and productively broaden the deployment of
generative models for tabular data.

\section*{Data-Availability Statement}

An artifact providing a version of \href{https://github.com/OpenGen/GenSQL.query}{GenSQL}, and reproducing our experiments, is available \cite{huot_gensql_artifact}.

\begin{acks}
MIT contributors would like to acknowledge support from
\grantsponsor{DARPA}{DARPA}{10.13039/100006502}, under the Machine
Common Sense (MCS) program (Award ID: \grantnum{DARPA}{030523-00001})
Synergistic Discovery and Design (SD2) program (Contract No.
\grantnum{DARPA}{FA8750-17-C-0239}), and Compositionally Organized
Learning To Reason About Novel Experiences (COLTRANE) grant (Contract No.
\grantnum{DARPA}{140D0422C0045}); and unrestricted gifts from
Google, an anonymous donor, and the Siegel Family Foundation.
F.~Saad acknowledges support from the
\grantsponsor{NSF}{National Science Foundation}{https://doi.org/10.13039/100000001}
(Award ID: \grantnum{NSF}{2311983}).
The authors wish to thank the anonymous referees for their valuable
comments and suggestions.
We have also benefited greatly from conversations with and feedback
from Pierre Glaser, Younesse Kaddar, João Loula, Timothy J. O'Donnell, and Fabian Zaiser.
\end{acks}

\bibliographystyle{ACM-Reference-Format}
\bibliography{cited}

\newpage
\appendix

\section{Further experiments and comparisons}

\subsection{Code comparison with Scikit-learn}
\label{sub:code-comparison-with-scikit-learn}

\revision{ \Cref{fig:scikit-learn} shows the implementation of a
single-line \iql{} query
\begin{equation*}
\select \text{class} \from \generate \text{model} \given \text{petal
width}= 0.1 \limit 100
\end{equation*}
in Scikit-learn~\citep{pedregosa2011scikit} on the Iris data from the
UCI ML repository.
Even though the implementation contains data loading and
preprocessing, model building, and a few comments, the model querying
alone is more than 50 lines long and clearly error prone.
\iql{} offers a significant advantage in simplicity over such
baselines. }

\revision{ The reason is that Scikit-learn is a Python library that
focuses on predictive modeling, not a platform for interactive
querying of probabilistic models.
We now detail some aspects of the implementation in Scikit-learn.
Heterogeneously typed mixture models are closely related to the \ami{}
specification we use for many experiments in the model.
The Scikit-learn implementations does not natively support discrete
columns.
Instead, Scikit-learn advises to apply column transformations like
one-hot encoding.
This makes sense for discriminative machine learning pipelines but
less so for multivariate generative models where users want to refer
to columns and their discrete values repeatedly and in two directions
(what goes into the model for conditioning and what comes out).
More importantly, conditioning models on partial information is not
supported in Scikit-learn.
The \lstinline{predict_proba} method for mixture models in
Scikit-learn returns updated weights for latent components which are
at the core of conditioning such a model.
Yet, the \lstinline{predict_proba} method crashes with partial input.
Users have to implement such methods manually before they can refer to
the sample method and post-process its output for the readability of
discrete columns. }

\begin{figure}
    \footnotesize
\begin{python}
import numpy as np
import pandas as pd
from scipy.stats import norm
from sklearn.compose import ColumnTransformer
from sklearn.preprocessing import OneHotEncoder
from sklearn.mixture import GaussianMixture
from sklearn.pipeline import make_pipeline
from ucimlrepo import fetch_ucirepo

###### Fetch Iris data from UCI ML repo #####
iris = fetch_ucirepo(id=53)
data = iris.data["original"]

###### Model building #####
# Define the preprocessing for the categorical features.
categorical_features = ["class"]
categorical_transformer = OneHotEncoder(handle_unknown="ignore")

# Create the ColumnTransformer to apply the transformations to the correct
# columns.
preprocessor = ColumnTransformer(
    transformers=[("categorical-columns", categorical_transformer, categorical_features)],
    remainder="passthrough")

# Use SKlearn to create a mixture model with heterogenous types;
# Gaussians are a bad choice of primitive here but this is
# supposed to be an illustrative example... SKlearn does not actually
# support heterogenously typed mixtures.
pipeline = make_pipeline(
    preprocessor, GaussianMixture(n_components=3, covariance_type="diag")
    ).fit(data)

###### Model querying: GENERATE class UNDER model GIVEN petal width = 0.1 #####
def get_normal_params(component_idx, pipeline, condition_name):
    # Get index of the column in the sklearn internals. The first columns
    # are the one-hot encoded columns, then we add the original index.
    col_index = data["class"].unique().shape[0] + data.columns.get_loc(condition_name)
    # 2. Read out mu.
    m = pipeline.named_steps["gaussianmixture"].means_[component_idx, col_index]
    # 3. Read out sigma.
    sigma = np.sqrt(
        pipeline.named_steps["gaussianmixture"].covariances_[component_idx, col_index])
    return {"mu": m, "sigma": sigma}

def get_comp_score(val, params):
    return norm.pdf(val, params["mu"], params["sigma"])
\end{python}
    \caption{Scikit-learn implementation of a simple \iql{} query.}
    \label{fig:scikit-learn}
\end{figure}

\begin{figure}\ContinuedFloat
    \footnotesize
\begin{python}
def condition_model(pipeline, condition_name, condition_value):
    unnormalized_updated_weights = pipeline.named_steps["gaussianmixture"].weights_
        * np.asarray([
            get_comp_score(
                condition_value, get_normal_params(0, pipeline, condition_name)),
            get_comp_score(
                condition_value, get_normal_params(1, pipeline, condition_name)),
            get_comp_score(
                condition_value, get_normal_params(2, pipeline, condition_name)),
        ])

    updated_weights = unnormalized_updated_weights / sum(unnormalized_updated_weights)
    pipeline.named_steps["gaussianmixture"].weights_ = updated_weights
    return pipeline

# Sample 100 times from conditional model.
conditioned_pipeline = condition_model(
    pipeline, condition_name="petal width", condition_value=0.1)
synethetic_data_with_unreadable_categoricals, _ = conditioned_pipeline.named_steps[
    "gaussianmixture"].sample(100)

# Back transform one-hot encoded class column to readiable iris names.
print(preprocessor
    .named_transformers_["categorical-columns"]
    .inverse_transform(X=synethetic_data_with_unreadable_categoricals[:, -3:]))
\end{python}
\caption{Scikit-learn implementation of a simple \iql{} query (continued).}
\end{figure}

\subsection{Comparison against a baseline using approximate inference}
\label{sub:comparison-against-baselines-using-approximate-inference}

\revision{
In this section, we present some more simple experiments comparing
\iql{} using an exact backend (SPPL) with a Gen.clj baseline using approximate
inference.
The results obtained are expected and confirm the fact that for simple
but non-trivial models, exact inference can be both faster and more
accurate than approximate inference.
This is especially true for rare events for which getting accurate
estimates can be crucial in domains such as risk assessment, fraud
detection, and rare disease diagnosis.
The fact that in \iql{} one can easily change the backend models is
therefore a significant advantage, where one can use approximate
models when the exact model is too slow, or switch to an exact model
when the approximate model is not accurate enough.
We compare the normalized standard deviation of the estimator and the
runtime of the queries. 
We use the same probabilistic model for both
backends, and we compare the results on a $\probof c^0$ query on a
possibly conditioned model. }

\cref{fig:benchmark-variance-genclj} presents a runtime and normalized
standard deviation of the estimator  comparison. 
We compare an exact and an approximate backend using importance sampling on a query
$$\probof c^0 \under m \given c^0_i$$ with varying $c^0_i$.
The same probabilistic model is used, using exact inference for \iql{}
and importance sampling (blue: 5000, red: 10000 importance samples) in
Gen.clj.
\iql{} is faster by orders of magnitude and more accurate than the
baseline Gen.clj model (runtime: in the presence of at least one
condition).


\begin{figure}
\captionsetup{skip=4pt}
\includegraphics[width=\textwidth]{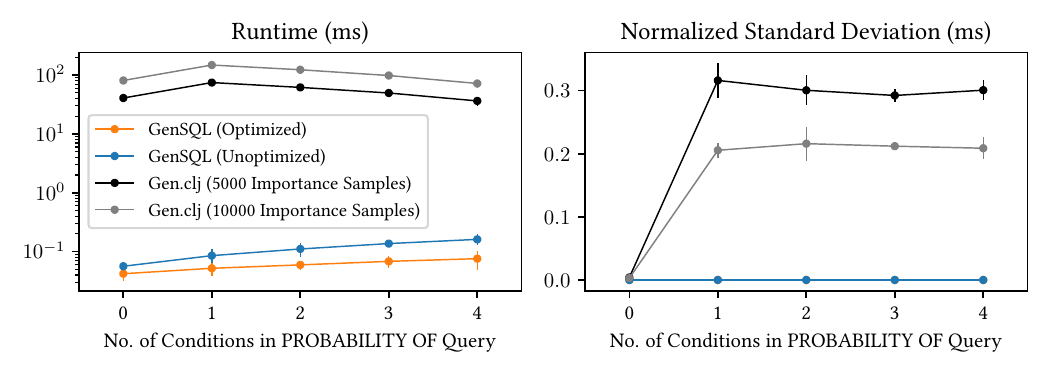}
    \caption{Comparing the effect on runtime (left) and on standard deviation of the estimated probability (right) between \iql{} and Gen.clj~\citep{towner2019gen} on a simple probability query, varying the number of conditions. }
    \label{fig:benchmark-variance-genclj}
\end{figure}

\subsection{Gen.clj code for emission functions}
\label{sub:gen-clj-code-for-emission-functions}

\cref{fig:gen-clj-emission} shows the Gen.clj code for a simple population model and a harmonized model with emission functions that uses age categories as opposed to integers for age.
Such emissions functions are useful for harmonizing models with different data sources. 
For instance, one model can be trained on data with age as an integer, while another model can be trained on data with age categories. 
These two values are not directly comparable, but emission functions can be used to harmonize them, and therefore allow for querying across models trained on different data sources.
While the given example is simple, they are essential for real-world applications where data sources are often heterogeneous.

\begin{figure}
\input{figures/gen-code/emission-function}
    \caption{Gen.clj code for a simple population model and a harmonized model with emission functions.}
    \label{fig:gen-clj-emission}
\end{figure}
\revision{
\section{Details of the optimizations}
\label{sec:correctness-optimizations} }

A dominating computational factor in \iql{} queries for
exact models is often model conditioning, analogously to the role of
join computations in SQL.
For these models, this usually amounts to a program transformation
\citep{saad2021sppl}, or e.g., a costly matrix inversion
\citep{mackay1998introduction}.
Likewise, for both exact and approximate models, a typical \iql{}
workload will involve repeated likelihood evaluations and querying a
conditioned \tableModel{} repeatedly over the rows of a data table.
We can optimize this by caching the results of the likelihood
evaluations and the conditioned \tableModels{} in a way that would not
typically appear in other general purpose PPLs.

\revision{ A second class of optimizations exploits the conditional
independence relations between subsets of the columns of probabilistic
models.
We can simplify a query of the form $\probof c_1 \under \id$ $\given
c_2 \wedge \id.y ~\op~ e$ to $\probof c_1 \under \id \given c_2$ if
$y$ is independent from the joint distribution of the variables
appearing in $c_1$ and $c_2$.
This independence check can sometimes be performed by simply looking
at the structure of the program.
For probabilistic programs produced by CrossCat, we can read that a
variable $y$ is independent from variables $x,z$ if $y$ is not in the
same cluster as $x$ and $z$.
If we see these programs as Bayesian networks, the reason is that no
column-variables have nodes as children, and this is what we exploit
in our implementation.
More generally, one could use more sophisticated algorithms to detect
independence relations, for example in Bayesian networks based on
d-separation \citep{pearl1988probabilistic}.
However, there may be more independence relations that cannot be
directly detected, even in the case of exact models expressed as
sum-product expressions, and in that case more independence relations
can perhaps be detected by using more sophisticated algorithms. }

\revision{ Together, these optimizations can lead to significant
speedups in the evaluation of \iql{} queries.
One reason is that these optimizations can synergize with each other.
To see this, imagine as a first example if $\id \given c \wedge \id.y
= 3$ and $\id \given c \wedge \id.y = 4$ both get simplified to $\id
\given c$.
Then, caching the first conditioned model will hit for the second
query.
As a second example, consider a subquery $\id \given \id.x = (\probof
\id'.y = 3 \under \id')$ appearing at least twice in a query.
If $\id'$ only approximates the computation of the probability query,
then it's unlikely for $\id'$ to compute the same approximation twice.
In that case, caching the result instead allows to compute only one
conditioned model for $\id$, instead of two. }

\revision{
\subsection{Caching}
}

\revision{
\paragraph{Caching conditioned models}
\iql{} is agnostic to the implementation of the \ami{} methods, as
long as they satisfy the interface described in
\cref{sec:impl-table-model}.
In particular, the $\simulate, \logpdf$ and $\prob$ methods are free
to construct and cache any intermediate data structures, such as a
data structure that represents the conditioned model. }

\revision{
\paragraph{Caching exact $\prob$ and $\logpdf$ computations}
A cache hit typically occurs in \iql{} queries when scalar expressions
$e_1$ and $e_2$ in an event reduce to the same value.
To see why, let $\Gamma \vdash c_1 := c' \wedge (\id,i) ~\op~ e_1$
and $\Gamma \vdash c_2 := c' \wedge (\id,i) ~\op~ e_2$ be two lowered
events that differ only in the expressions $e_1$ and $e_2$.
Let $\Gamma \vdash \prob_{\id}(c^0,c^1,c_1)$ and $\Gamma \vdash
\prob_{\id}(c^0,c^1,c_2)$ both be well-typed programs.
Also assume $C[]$ is a given program with a hole such that $\Gamma
\vdash C[\prob_{\id}(c^0,c^1,c_1)]$ and $\Gamma\vdash
C[\prob_{\id}(c^0,c^1,c_2)]$ are well-typed.
Finally, suppose that for some evaluation $\gamma$ of the context
$\Gamma$, $\semex{e_1}(\gamma) = \semex{e_2}(\gamma)$.
Then, we will have $\semex{\prob_{\id}(c^0,c^1,c_1)}(\gamma) =
\semex{\prob_{\id}(c^0,c^1,c_2)}(\gamma)$.
Therefore, we can cache the result of $\prob_{\id}(c^0,c^1,c_1)$ and
reuse it instead of doing the full computation of
$\prob_{\id}(c^0,c^1,c_2)$.
More generally, the following proposition justifies the correctness of
caching in \iql{}. }

\revision{
\begin{proposition}[Correctness of caching (exact computations)] Let
$P$ be either $\prob_{\id}$ or $\logpdf_{\id}$.
Let $C[]$ be a program with a hole such that $\Gamma \vdash
C[P(c^0,c^1,c)]$ is well-typed.
Let $\gamma$ be an evaluation of the context $\Gamma$, and let
$v:=\semex{c}(\gamma)$.
Then, $\Gamma \vdash C[v]$ is well-typed and
$\semex{C[P(c^0,c^1,c)]}(\gamma) = \semex{C[v]}(\gamma)$.
\end{proposition}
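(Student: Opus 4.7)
The plan is to proceed by structural induction on the program context $C[]$, leveraging the compositionality of the exact denotational semantics $\semex{\cdot}$ of the lowered language (\cref{fig:lower-exact-semantics}). The crucial observation is that $P(c^0,c^1,c)$ has a base return type that does not depend on $c$: $\semex{\prob_{\id}(c^0,c^1,c)}$ has type $\ranged(0,1)$ and $\semex{\logpdf_{\id}(c^0,c^1,c)}$ has type $\real$, so the cached value $v$ and the subterm it replaces live in the same semantic domain. Since $v$ is a closed constant of this base type, no variable capture or scoping issue arises on substitution.

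For well-typedness, I would show by induction on the typing derivation $\Gamma \vdash C[P(c^0,c^1,c)] : \tau$ that we can replace the subderivation for the hole with the typing rule for the constant $v$ at the same type, and then rebuild the rest of the derivation unchanged. This yields $\Gamma \vdash C[v] : \tau$. The first step here is to fix once and for all the type $\sigma \in \{\real,\ranged(0,1)\}$ of $P(c^0,c^1,c)$ and observe that the typing rule for the hole in $C[]$ and the typing rule for the constant $v$ both produce a judgment of type $\sigma$ in the same context.

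For the semantic equivalence, I would perform induction on the structure of $C[]$. In the base case, $C[] = [\cdot]$, so $\semex{C[P(c^0,c^1,c)]}(\gamma) = \semex{P(c^0,c^1,c)}(\gamma) = v = \semex{v}(\gamma) = \semex{C[v]}(\gamma)$ by the definition of $v$ and of the constant semantics. In the inductive step, $C[]$ is a syntactic constructor (such as $\map$, $\filter$, $\mapreduce$, $\lowerjoin$, $\singleton$, $\replicate$, $\hostexp$, a tuple former, a projection, a binary operator, or an \ami{} primitive) applied to subterms, exactly one of which is a subcontext $C'[]$ containing the hole. In each case the semantic clause in \cref{fig:lower-exact-semantics} expresses $\semex{C[t]}(\gamma)$ purely in terms of $\semex{C'[t]}(\gamma)$ and the denotations of the sibling subterms (which are independent of $t$). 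Applying the inductive hypothesis $\semex{C'[P(c^0,c^1,c)]}(\gamma) = \semex{C'[v]}(\gamma)$ and substituting into that clause yields the desired equality.

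The main obstacle I anticipate is in the cases where the surrounding constructor is monadic, namely where substitution is interpreted via Kleisli composition for the point-process monad $\measm\bag(-)$ (as noted in \cref{sub:exact-backend}). Here compositionality requires that the measurable kernel associated with $C'[]$ depends only on the denotation of its argument, not on its syntax; this follows from the fact that the Kleisli extension $(-)^\sharp$ is a function on measurable kernels, so replacing $\semex{C'[P(c^0,c^1,c)]}$ by the equal kernel $\semex{C'[v]}$ preserves the result of the outer $\bind$. Once this is observed, the inductive step in the monadic cases reduces to the deterministic case. No other subtlety arises, since all remaining constructors are deterministic first-order operations for which compositionality is immediate.
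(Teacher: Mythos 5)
Your proof takes essentially the same route as the paper's: a structural induction on the context $C[]$, with the base case using that $v$ is a constant of the language at the appropriate base type (reading $v$ as the value of the whole $P(c^0,c^1,c)$ subterm, which is the intended meaning) and the inductive step appealing to compositionality of $\semex{\cdot}$. Your unpacking of the monadic/Kleisli cases is just a more detailed version of the paper's one-line appeal to compositionality, so nothing further is needed.
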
 }

\revision{
\begin{proof} This follows by a straightforward induction on the
structure of the program with a hole $C[]$.
If $C[]:=[]$, then the result follows by the definition of the
semantics and the fact that reals constants $v$ are in the language.
Otherwise, $C[]$ is obtained using a rule from the type system in
\cref{fig:lowered-syntax-full}.
By induction hypothesis, all the subprograms of $C[]$ are well-typed
and the result holds for them.
Therefore, the typing result holds for $C[]$.
In addition, by compositionality of the semantics, the semantic
equality also holds for $C[]$.
\end{proof} }
\revision{ If we take
$v:=\semex{P(c^0,c^1,c_1)}(\gamma)=\semex{P(c^0,c^1,c_2)}(\gamma)$ in
the above proposition, then
\begin{equation*}
\semex{C[P(c^0,c^1,c_1)]}(\gamma) = \semex{C[v]}(\gamma)=
\semex{C[P(c^0,c^1,c_2)]}(\gamma)
\end{equation*}
and we have formally recovered the example from above with events
differing in a scalar $e_1$ and $e_2$.
This is a view on caching as partial evaluation. }

\revision{
\paragraph{Caching of approximate $\prob$ and $\logpdf$ computations} In
the case where $\prob$ or $\logpdf$ are approximated, the argument
using the reduction to a value $v$ above does not directly hold.
Indeed, in general we can have $\semex{c_1}(\gamma) =
\semex{c_2}(\gamma)$ but $\semap{c_1}(\gamma) \neq
\semap{c_2}(\gamma)$.
Caching in this case \emph{does change} the semantics of the program,
but it will not change the asymptotic guarantees of the program. }

\revision{ The key intuition that makes it hold is that if two
sequences of random variables $(x_n)_n$ and $(y_n)_n$ converge to the
same $x$, then for every continuous function $f$, both sequences
$(f(x_n,y_n))_n$ and $(f(x_n,x_n))_n$ will converge to $f(x,x)$.
Here, $x_n$ and $y_n$ are two approximations of the same quantity that
appears twice in a program. The value $x$ is the true value of the
quantity under consideration, and the denotation of the program is
$f(x_n,y_n)$.
With this view, caching consists of using the same approximation $x_n$
twice instead of independently recomputing an approximation $y_n$.
This operation is then valid as long as programs are continuous at
$(x, x)$, which is the case for \emph{safe} (see
\cref{sub:appendix-approx-guarantee}) programs in \iql{}.
In other words, the caching operation may change the approximate
semantics but not the asymptotic guarantees of the program, as long as
the replaced approximation sequence converges to the same value, and
that the query using these approximations is safe. }

\revision{
\begin{proposition}[Correctness of caching (approximate computations)]
Suppose that the following four queries are safe: $\Gamma;\Delta
\vdash t_1$, $\Gamma;\Delta \vdash t_2$, $\Gamma;[] \vdash C[t_1,t_2]$
and $\Gamma;[] \vdash C[t_1,t_1]$.
Let $\gamma$ be an evaluation context for $\Gamma$, and
$\delta,\delta'$ the

obtained evaluations of $\Delta$ for $t_1,t_2$ when evaluating $C[t_1,t_2]$ in $\gamma$. Further assume that $\delta=\delta'$, and that almost surely
\begin{equation*}
\lim_{n\to \infty} ~\semap{\tran{\delta}{t_1}}(\gamma)_n = \lim_{n\to \infty} ~\semap{\tran{\delta}{t_2}}(\gamma).
\end{equation*}
Then, almost surely
\begin{equation*}
\lim_{n\to \infty}~ \semap{\tran{[]}{C[t_1,t_2]}}(\gamma)_n = \lim_{n\to \infty}~\semap{\tran{[]}{C[t_1,t_1]}}(\gamma)_n
\end{equation*}
\end{proposition}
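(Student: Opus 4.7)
The plan is to reduce the claim, via \cref{thm:consistent-ami-guarantee}, to an equality of \emph{exact} semantics and then to establish that equality from compositionality together with an open-term variant of the same guarantee. First, I would apply \cref{thm:consistent-ami-guarantee} directly to the two closed, safe queries $\Gamma;[]\vdash C[t_1,t_2]$ and $\Gamma;[]\vdash C[t_1,t_1]$. This yields that, $\mathbb{P}$-almost surely,
\begin{align*}
\lim_n \semap{\tran{[]}{C[t_1,t_2]}}(\gamma)_n &= \sem{C[t_1,t_2]}(\lim_n\gamma,[]), \\
\lim_n \semap{\tran{[]}{C[t_1,t_1]}}(\gamma)_n &= \sem{C[t_1,t_1]}(\lim_n\gamma,[]).
\end{align*}
The conclusion therefore reduces to the purely exact-semantics equality $\sem{C[t_1,t_2]}(\lim_n\gamma,[])=\sem{C[t_1,t_1]}(\lim_n\gamma,[])$.

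Second, I would exploit compositionality of the exact semantics in \cref{fig:semantics-iql}, which is defined by induction on typing derivations. Since $\delta=\delta'$ by hypothesis, the two holes in $C[-,-]$ receive the same local-context evaluation when $C[t_1,t_2]$ and $C[t_1,t_1]$ are unfolded against $(\gamma,[])$. It therefore suffices to prove the pointwise equality $\sem{t_1}(\lim_n\gamma,\delta)=\sem{t_2}(\lim_n\gamma,\delta)$ and lift it through the measurable combinators (pushforward, $\bind$, $\fold$, etc.) that build up $\sem{C[-,-]}$; each such combinator is a deterministic function of its subterm denotations, so pointwise equality at the holes propagates to equality of the whole.

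Third, I would establish that pointwise equality by the natural open-term extension of \cref{thm:consistent-ami-guarantee}. The induction carried out in \cref{sub:appendix-approx-guarantee} goes through essentially verbatim for a safe open term $\Gamma;\Delta\vdash t$ at a fixed evaluation $(\gamma,\delta)$, giving $\lim_n\semap{\tran{\delta}{t}}(\gamma)_n=\sem{t}(\lim_n\gamma,\delta)$ almost surely. Applying this to $t_1$ and $t_2$ at $(\gamma,\delta)$ and then invoking the hypothesis that their approximate limits coincide a.s., I conclude $\sem{t_1}(\lim_n\gamma,\delta)=\sem{t_2}(\lim_n\gamma,\delta)$ a.s., which combined with the previous two steps completes the plan.

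The main obstacle will be the bookkeeping of random variables and the treatment of duplicated evaluations: under the approximate semantics of \cref{fig:lower-approximate-semantics}, the two occurrences of $t_1$ inside $C[t_1,t_1]$ consume independent randomness, so they produce two sequences $(x_n)_n$ and $(y_n)_n$ with the same a.s.\ limit, matching the $f(x_n,y_n)\to f(x,x)$ intuition of \cref{sec:correctness-optimizations}. Verifying that safety of $C[t_1,t_1]$ is enough to guarantee continuity of the surrounding combinators at the diagonal $(x,x)$, and that the countable intersections of full-measure events used throughout the induction remain full-measure after the substitution, is the delicate part; I expect this to amount to checking that the safety predicate is closed under the syntactic transformation $C[t_1,t_2]\mapsto C[t_1,t_1]$, which should follow from the inductive clauses defining safety in \cref{sub:appendix-approx-guarantee}.
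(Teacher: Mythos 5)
Your proposal is correct and follows essentially the same route as the paper's proof: apply \cref{thm:consistent-ami-guarantee} to the closed queries $C[t_1,t_2]$ and $C[t_1,t_1]$, use compositionality of $\sem{-}$ to expose the subterm denotations at the holes, and then use the (open-term form of the) same guarantee on $t_1$ and $t_2$ together with the hypothesis that their approximate limits agree almost surely. The paper arranges this as a single left-to-right chain of equalities rather than meeting in the middle, but the ingredients are identical, and your explicit remark that the guarantee must be invoked for the open subterms at a fixed $(\gamma,\delta)$ is the same (tacit) step the paper takes.
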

}

\revision{
\begin{proof}
Almost surely,
\begin{align*}
&\lim_n ~\semap{\tran{[]}{C[t_1,t_2]}}(\gamma)_n &\\
&= \sem{C[t_1,t_2]}(\gamma,[]) &\text{\cref{thm:consistent-ami-guarantee} as $C[t_1,t_2]$ safe}\\
&= \sem{C}[\sem{t_1}(\gamma,\delta),\sem{t_2}(\gamma,\delta)] &\text{Compositionality of $\sem{-}$}\\
&= \sem{C}[\sem{t_1}(\gamma,\delta),\lim_{n\to \infty} ~\semap{\tran{\delta}{t_2}}(\gamma)] &\text{\cref{thm:consistent-ami-guarantee} as $t_2$ safe} \\
&= \sem{C}[\sem{t_1}(\gamma,\delta),\lim_{n\to \infty} ~\semap{\tran{\delta}{t_1}}(\gamma)] &\text{Hypothesis on $t_1,t_2$} \\
&= \sem{C}(\gamma,[])[\sem{t_1}(\gamma,\delta),\sem{t_1}(\gamma,\delta)] &\text{\cref{thm:consistent-ami-guarantee} as $t_1$ safe} \\
&= \sem{C[t_1,t_1]}(\gamma,[]) &\text{Compositionality of $\sem{-}$} \\
&= \lim_n~ \semap{\tran{[]}{C[t_1,t_1]}}(\gamma)_n &\text{\cref{thm:consistent-ami-guarantee} as $C[t_1,t_1]$ safe}
\end{align*}
\end{proof}
}

\revision{
\subsection{Independence simplification}
}

\revision{
The formal correctness of independence simplification hinges on the repeated
application of the following proposition.
For ease of expression, we use the following notation in the statement and proof of the proposition.
\begin{itemize}
\item The symbols $\overline{i}$ and $\overline{j}$ are respectively syntactic
sugar for the sequences $i_1, \ldots, i_k$ and $j_1, \ldots, j_k$. Similarly,
$x_{\overline{i}}$ is sugar for the sequence $x_{i_1}, \ldots, x_{i_k}$, and
$\pi_{\overline{i}}$ is sugar for $\pi_{i_1} \otimes \cdots \otimes \pi_{i_k}$.
We also take $\sem{\sigma_{\overline{i}}}$ as sugar for the product
$\sem{\sigma_{i_1}}\times\cdots\times\sem{\sigma_{i_k}}$
and $\sem{t_{\overline{i}}}(\gamma, \delta)$ as sugar for the sequence
$\sem{t_1}(\gamma, \delta), \ldots, \sem{t_k}(\gamma, \delta)$.
\item Given $m \in \sem{M[\id]\{\cols\}}$, we let
\begin{equation*}
p_{\overline{i}|\overline{j}}(x_{\overline{i}} | x_{\overline{j}}) = \Dis(m, \pi_{\overline{j}},x_{\overline{j}})\dotpdf(x_{\overline{i}}).
\end{equation*}
\item As in \cref{sub:language}, $\cols$ is sugar for
$\col_1:\sigma_1,\ldots,\col_n:\sigma_n$. We additionally use
$\cols_{\overline{i}}$ as sugar for
$\col_{i_1}:\sigma_{i_1},\ldots,\col_{i_k}:\sigma_{i_k}$.
\end{itemize}
}
\revision{
\begin{proposition}[Correctness of independence simplification]
Suppose $m \in \sem{M[\id]\{\cols\}}$ and that under $m\dotmeas$ the distributions
of $\cols_{\overline{i}}$ and $\col_j$ are conditionally independent given
$\cols_{\overline{j}}$; i.e. $p_{\overline{i},j|\overline{j}}(x_{\overline{i}}, x_j|x_{\overline{j}})=p_{\overline{i}|\overline{j}}(x_{\overline{i}}|x_{\overline{j}})$, or equivalently, for all measurable functions $f:
\sem{\sigma_{\overline{i}}} \to \RR$ and $g: \sem{\sigma_j}
\to \RR$, we have
\[
\int f(x_{\overline{i}})g(x_j)
    p_{\overline{i},j|\overline{j}}(x_{\overline{i}}, x_j|x_{\overline{j}})dx_{\overline{i}}dx_j =
\left(\int f(x_{\overline{i}})
    p_{\overline{i}|\overline{j}}(x_{\overline{i}}|x_{\overline{j}})dx_{\overline{i}}\right)
\left(\int g(x)p_{j|\overline{j}}(x|x_{\overline{j}})dx\right).
\]
Let $c = \id.\col_{i_1} \op_1 t_1 \land \cdots \land \id.\col_{i_k} \op_k t_k$,
and let $(\Gamma, \Delta)$ be a pair of contexts such that  $\Gamma; \Delta
\vdash c_0 : C^0\{\cols_{\overline{j}}\}$, and $\Gamma, \Delta \vdash c_1 :
C^1\{\cols\}$. Further, let $(\gamma, \delta) \in \sem{\Gamma} \times
\sem{\Delta}$ such that $\sem{\id}(\gamma, \delta) = \mu$.
If $\col_j \not\in \mathbf{vars}(c_0) \cup \mathbf{vars}(c_1)$ and
\[
\Gamma; \Delta \vdash
    (\probof c \under \id \given (\id.\col_j~\op'~t') \land c_0 \given c_1) : \tau
\]
for some $\tau \in \{\posreal, \ranged(0, 1)\}$, then
\begin{align*}
\sem{\probof c \under \id \given (\id.\col_j~\op'~t') \land c_0 \given c_1}(\gamma, \delta) &= \\
&\hspace{-16em}\sem{\probof c \under \id \given c_0 \given c_1}(\gamma, \delta).
\end{align*}
\end{proposition}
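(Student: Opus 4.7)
The approach is to unfold the denotational semantics of both queries using the rules in \cref{subfig:semantics-\tableModel} and \cref{subfig:semantics-scalar}, then invoke the conditional-independence hypothesis to collapse the extra conditioning on $\id.\col_j$. After normalization, both queries have the shape $\probof c \under \id \given c^0_* \given c_1$, where $c^0_* = c_0$ on the right-hand side and $c^0_* = (\id.\col_j~\op'~t') \land c_0$ on the left; since $\op'$ is well-typed at the event-0 level, it must be $=$, so $t'$ denotes a value $v_j$. Using $\sem{\id \given c^0_*} = \Dis(\mu, \pi_{\overline{j}}, x_{\overline{j}})$ on the right and $\Dis(\mu, \pi_{\overline{j},j}, (x_{\overline{j}}, v_j))$ on the left (with $x_{\overline{j}} = \sem{c_0}(\gamma,\delta).2$), both expressions produce admissible models whose associated marginal densities over $\col_{\overline{i}}$ are, respectively, $p_{\overline{i} \mid \overline{j}}(\cdot \mid x_{\overline{j}})$ and $p_{\overline{i} \mid \overline{j}, j}(\cdot \mid x_{\overline{j}}, v_j)$. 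By the integral characterization in the hypothesis (with $f$ ranging over indicators of measurable sets and $g \equiv 1$), these two densities coincide almost everywhere, hence induce the same marginal law on $\col_{\overline{i}}$.

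The next step is to push this equality through the event conditioning $\given c_1$. By the semantics of $\Cond$, each side is renormalized by $\mathds{1}_{\sem{c_1}}$. Because $\col_j \notin \mathbf{vars}(c_1)$, the set $\sem{c_1}(\gamma,\delta)$ is a cylinder in the $\col_j$ coordinate, so $\mathds{1}_{\sem{c_1}}$ factors as a function of $(\col_{\overline{i}}, \col_{\overline{j}})$ alone (the other coordinates in $c_1$ either lie in $\overline{i}$, $\overline{j}$, or are already integrated out in the relevant marginal). Since the marginal densities on $\col_{\overline{i}}$ already agreed and the cylinder's indicator is identical on both sides, the renormalization constants $\mu(\sem{c_1})$ on each side match, and the renormalized marginals remain equal. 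This is the technical core of the argument.

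Finally I would split on $\tau$. If $\tau = \posreal$ then $c$ is an event-0 $\bigwedge_\ell \id.\col_{i_\ell} = t_\ell$, so $\sem{\probof c \under \cdot}$ reads off the marginal density $\dotpdf$ at the point $(v_{i_1},\dots,v_{i_k})$; since the $\col_{\overline{i}}$-marginals of the two conditioned models coincide, so do these densities. If $\tau = \ranged(0,1)$ then $c$ is an event (a disjoint union of rectangles in the $\col_{\overline{i}}$-coordinates, by the type-system restriction that event equalities are only on discrete columns and the independence setup), and $\sem{\probof c \under \cdot}$ is $\mu(\sem{c})$; again the $\col_{\overline{i}}$-marginals coincide, giving the same measure.

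The main obstacle is being careful with the compound conditioning $\given c^0 \given c^1$: one must verify that applying $\Cond$ on top of $\Dis$ preserves the conditional-independence factorization, which requires the explicit observation that $c_1$ does not constrain $\col_j$ (so the indicator is a cylinder) and that the type-system rule $\mathbf{vars}(c^0) \cap \mathbf{condvars}(m) = \emptyset$ forces $j \notin \overline{i}$ so the density lookup at $\sem{c}$ only involves the $\col_{\overline{i}}$-marginal. Once this bookkeeping is set up, the rest of the argument is a direct calculation from the semantic definitions.
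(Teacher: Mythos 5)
Your proof is correct and takes essentially the same approach as the paper's: unfold the denotational semantics of both queries, observe that after conditioning they differ only in using $p_{\overline{i}|j,\overline{j}}$ versus $p_{\overline{i}|\overline{j}}$, apply the conditional-independence hypothesis to identify these, and split into the event-0 ($\posreal$) and event ($\ranged(0,1)$) cases, which are exactly the paper's Cases I and II. The only difference is presentational: the paper substitutes directly into the explicit density formula, where the normalizer $\mu(\sem{c_1}(\gamma,\delta))$ and the indicator factors are syntactically identical on both sides, so your separate (and somewhat delicate) argument that the renormalization constants agree is not actually needed.
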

\begin{proof}
For ease of notation suppose
$
c_0 = \id.\col_{j_1} = t_1' \land \cdots \land \id.\col_{j_l} = t_l'
$.
We have two cases:
\begin{description}
\item[Case I:] $c : C^0$. In this case we have
\begin{align*}
\sem{\probof c \under \id \given (\id.\col_j~\op'~t') \land c_0 \given c_1}(\gamma, \delta) & \\
&\hspace{-26em}=
\frac{1}{\mu(\sem{c_1}(\gamma,\delta))}
\prod_{\alpha=1}^k
    \mathbf{1}_{\pi_{i_\alpha}(\sem{c_1}(\gamma,\delta))}(\sem{t_\alpha}(\gamma, \delta))
\prod_{\beta=1}^l
    \mathbf{1}_{\pi_{j_\beta}(\sem{c_1}(\gamma,\delta))}(\sem{t_\beta'}(\gamma, \delta))\\
&\hspace{-20em}\times
    p_{\overline{i}|j,\overline{j}}(\sem{t_{\overline{i}}}(\gamma, \delta)
        |\sem{t'}(\gamma, \delta), \sem{t_{\overline{j}}'}(\gamma, \delta)) \\
&\hspace{-26em}=
\frac{1}{\mu(\sem{c_1}(\gamma,\delta))}
\prod_{\alpha=1}^k
    \mathbf{1}_{\pi_{i_\alpha}(\sem{c_1}(\gamma,\delta))}(\sem{t_\alpha}(\gamma, \delta))
\prod_{\beta=1}^l
    \mathbf{1}_{\pi_{j_\beta}(\sem{c_1}(\gamma,\delta))}(\sem{t_\beta'}(\gamma, \delta))\\
&\hspace{-20em}\times
    p_{\overline{i}|\overline{j}}(\sem{t_{\overline{i}}}(\gamma, \delta)
        |\sem{t_{\overline{j}}'}(\gamma, \delta)) \quad\text{by conditional independence} \\
&\hspace{-26em}=
\sem{\probof c \under \id \given c_0 \given c_1}(\gamma, \delta).
\end{align*}
\item[Case II:] $c : C^1$.
Let
\begin{align*}
A_\alpha &:=
    \left\{x \in \sem{\sigma_{i_\alpha}} \middle|
\sem{\op_\alpha}(\gamma, \delta)(x, \sem{t_\alpha}(\gamma, \delta)) = \true\right\}
& 1 \leq \alpha \leq k \\
A &:= A_1 \times \cdots \times A_k.
\end{align*}
We have
\begin{align*}
\sem{\probof c \under \id \given (\id.\col_j~\op'~t') \land c_0 \given c_1}(\gamma, \delta) & \\
&\hspace{-26em}=
\frac{1}{\mu(\sem{c_1}(\gamma,\delta))}
\int \prod_{\alpha=1}^k
    \mathbf{1}_{\pi_{i_\alpha}(\sem{c_1}(\gamma,\delta))}(\sem{t_\alpha}(\gamma, \delta))
\prod_{\beta=1}^l
    \mathbf{1}_{\pi_{j_\beta}(\sem{c_1}(\gamma,\delta))}(\sem{t_\beta'}(\gamma, \delta))\\
&\hspace{-20em}\times
\mathbf{1}_A(x_{\overline{i}})
    p_{\overline{i}|j,\overline{j}}(x_{\overline{i}}
     |\sem{t'}(\gamma, \delta), \sem{t_{\overline{j}}'}(\gamma, \delta))dx_{\overline{i}}\\
&\hspace{-26em}=
\frac{1}{\mu(\sem{c_1}(\gamma,\delta))}
\int \prod_{\alpha=1}^k
    \mathbf{1}_{\pi_{i_\alpha}(\sem{c_1}(\gamma,\delta))}(\sem{t_\alpha}(\gamma, \delta))
\prod_{\beta=1}^l
    \mathbf{1}_{\pi_{j_\beta}(\sem{c_1}(\gamma,\delta))}(\sem{t_\beta'}(\gamma, \delta))\\
&\hspace{-20em}\times
\mathbf{1}_A(x_{\overline{i}})
    p_{\overline{i}|\overline{j}}(x_{\overline{i}}|\sem{t_{\overline{j}}'}(\gamma, \delta))dx_{\overline{i}} \quad\text{by conditional independence}\\
&\hspace{-26em}=\sem{\probof c \under \id \given c_0 \given c_1}(\gamma, \delta).
\end{align*}
\end{description}
\end{proof}
}

\revision{
\section{Implementations of the \ami{}}
\label{sec:implementations-atmi}
}

\revision{
\subsection{\ami{} in terms of sum-product expressions}
Sum-product networks (SPNs) are a useful family of distributions that are
closed under marginalization and conditioning. 
Sum-product expressions (SPEs)~\citep{saad2021sppl} generalize SPNs to gain more expressivity while still allowing for exact marginalization and conditioning.
The idea is that if finitely many distributions are closed under marginalization and conditioning, so are their independent products and weighted mixtures.
%
%
On top of having an API method for sampling, SPEs also have the API methods
$\mathbf{prob}$ for computing the probability of an event and
$\mathbf{logpdf}$ for computing the (marginal) log-density of the distribution
represented by the SPE.
SPEs also have access to the following functions, which are typically implemented as program transforms: 
}
\revision{
\begin{align*}
\cond_0 &: C^0\{\cols'\} \to \SPE \to \SPE \\
\cond_1 &: C^1\{\cols\} \to \SPE \to \SPE \\
\marginalize_{\cols'}&: \SPE \to \SPE.
\end{align*}
The functions $\cond_0$ and $\cond_1$ take in an SPE and an event-0 or event,
respectively, and return a new SPE representing the conditional distribution of
the input SPE given the provided event or event-0.
The family of methods $\marginalize_{\cols'}$ take an SPE and return a new SPE whose distribution is the marginal distribution of $\cols'$ under the input SPE.
They can for instance be used to implement marginal densities in terms of densities.
}

\revision{
With these, we can implement the \ami{} methods as follows:
\begin{align*}
\simulate_\id(c^0, c^1) &= \mathbf{sample}~\$~\cond_0~c^0~\$~\cond_1~c^1~M_\id \\
\logpdf_\id(c^0, c^1, c^0_2) &= \mathbf{logpdf}~c^0_2~\$~\cond_0~c^0~\$~\cond_1~c^1~M_\id \\
\prob_\id(c^0, c^1, c^1_2) &= \mathbf{prob}~c^1_2~\$~\cond_0~c^0~\$~\cond_1~c^1~M_\id.
\end{align*}
}

\revision{
\subsection{\ami{} in terms of truncated multivariate Gaussians}
It is well-known that the family of multivariate Gaussian distributions is
closed under conditioning on a subset of the Gaussian variates. Moreover, if
$\mathbf{x}\sim \mathcal{N}(\mu, \Sigma)$ is a $d$-dimensional multivariate
Gaussian random variable, $A$ is a $m \times d$ matrix, and $\mathbf{u},
\mathbf{l} \in \RR^m$, then conditional on the event $\{\mathbf{l} \leq
A\mathbf{x} \leq \mathbf{u}\}$ the random variable $\mathbf{x}$ follows a
truncated multivariate Gaussian distribution.
}

\revision{
If we focus on the set of events of the form $\{\mathbf{l} \leq A\mathbf{x}
\leq \mathbf{u}\}$, then similar to sum-product expressions, the family of
truncated multivariate Gaussians will be closed under conditioning. This
results from the fact that multivariate Gaussians are closed under conditioning
on event-0s and conditioning a truncated multivariate Gaussian on an event of
the above form is equivalent to truncating it further. That is, if we represent
a truncated multivariate Gaussian by the tuple $(\mu, \Sigma, A, \mathbf{l},
\mathbf{u})$, we can define the events
\begin{align*}
\cond &: C^0\{\cols'\} \to \TMVG \to \TMVG \\
\truncate &: C^1\{\cols\} \to \TMVG \to \TMVG,
\end{align*}
where $\TMVG$ denotes the type of truncated multivariate Gaussian terms.
}

\revision{
Moreover, it is possible to define distribution functions for multivariate
Gaussians distributions allowing us to calculate the probability mass assigned
to hyperrectangles in the domain of the distribution. 
We note that such calculations require the evaluation of the distribution function at every
corner of the hyperrectangles, and the number of these corners can be 
exponential in the dimension of the hyperrectangle. 
Furthermore, there are several exact methods for exactly calculating the normalizing constant of truncated multivariate Gaussian variables. 
We refer the reader to the textbook
\cite{genz2009computation} for the details, and assume access to methods
$\mathbf{logpdf}$ and $\mathbf{prob}$ for calculating log-marginal densities and probabilities, respectively.
}

\revision{
Using the functions $\truncate$, $\cond$, $\mathbf{logpdf}$, and
$\mathbf{prob}$ the implementation of the \ami{} methods for truncated
multivariate Gaussians is almost identical to that of SPEs, as follows.
\begin{align*}
\simulate_\id(c^0, c^1) &= \mathbf{sample}~\$~\cond_0~c^0~\$~\truncate~c^1~M_\id \\
\logpdf_\id(c^0, c^1, c^0_2) &= \mathbf{logpdf}~c^0_2~\$~\cond~c^0~\$~\truncate~c^1~M_\id \\
\prob_\id(c^0, c^1, c^1_2) &= \mathbf{prob}~c^1_2~\$~\cond~c^0~\$~\truncate~c^1~M_\id.
\end{align*}
}

\revision{
\subsection{\ami{} in terms of ancestral sampling}
}

\revision{
In general, exact Bayesian inference in probabilistic models is intractable.
As such, practitioners often rely on variational or Monte Carlo methods for
approximate inference. Probabilistic programming languages that support
programmable inference allow their users to seamlessly incorporate
approximate inference methods into their workflow. We now describe how \iql{}'s
\ami{} can be implemented in such probabilistic programming languages. As a
simple representative example of an approximate inference algorithm we chose
ancestral sampling, a simple sequential Monte Carlo method.
}

\revision{
We do not introduce ancestral sampling in detail here, and refer the unfamiliar
reader to \cite{robert1999monte}. 
As in the previous sections, we describe our implementation in a superset of the lowered language. 
We assume the implementation language contains a parametric type
$\Model\{\cols\}$ representing a probabilistic model describing a joint
distribution on the columns $\cols$. 
Given this type, the type signature of the ancestral sampling algorithm is given by
$$
\ancestral: \Model\{\cols\} \to C^0\{\cols'\} \to C^1\{\cols\} \to
((\sigma_1, \ldots, \sigma_n), \real).
$$
In the above type signature, the first argument denotes the probabilistic
model under consideration. 
The remaining arguments denote the event-0 and event on which the model is being conditioned. 
The algorithm returns a weighted sample $(x, w)$, where $w$ is the log-importance weight of $x$.
}

\revision{
Recall that the approximate implementations of the \ami{} methods are indexed
by a ``compute budget'' $n$. Here, $n$ will denote the number of independent
ancestral samples we will generate to perform the desired computation.
}

\revision{
Throughout, we suppose $M_\id$ is a probabilistic model in the host language
implementing the row model specified by $\id$.
}

\revision{
To implement $\simulate_\id(c^0, c^1)$, we first generate $n$ independent
ancestral samples by calling $\ancestral~ M_\id~ c^0~c^1$ for $n$ times. 
If the importance weights of all the generated particles are zero (i.e. the return second element of all the returned tuples are $-\infty$) then the generated sample is incompatible with the event described by $c^1$. 
As such, in this case we return a null sample $(\star, \ldots, \star)$. 
Note that as the number of particles $n$ increases, the probability of generating such a collection of
particles goes to zero, and so the implementation of simulate will generate a
non-null sample almost surely. 
On the other hand, if at least one importance
weight is non-zero, we re-sample one of the generated particles based on its
weight and return it. 
The sampling distribution of the return value of this
procedure converges to the conditioned row model by an elementary statistical
argument \cite{robert1999monte}. The Haskell-style pseudo-code of the
$\simulate$ method is given below.
\begin{align*}
\simulate_\id(c^0, c^1) &= \llet \mathrm{particles} = \replicate~n~\$~\ancestral~M_\id~c^0~c^1~\iin \\
&\hspace{4mm}\llet \mathrm{logits} = \map \pi_2~\mathrm{particles}~\iin \\
&\hspace{8mm}\iif \mathbf{all}~ (==-\infty)~ \mathrm{logits}~\tthen (\star,\ldots,\star) \eelse \\
&\hspace{12mm}\llet i = \mathbf{Categorical}~\$~\mathbf{logitsToProbs}~\mathrm{logits}~\iin\\
&\hspace{16mm}\pi_1\ \mathrm{particles}_i
\end{align*}
}

\revision{
To implement $\logpdf(c^0, c^1, c^0_2)$ we rely on the fact that the expected
value of an importance weight is equal to the joint density of the variables
whose values are given \cite{lew2023probabilistic}. 
Using this fact, we first
define a helper function $\mathbf{logmarginal}$ which estimates the log
marginal density of a given assignment of variables to values expressed as an
event-0. That is, we define
$$
\mathbf{logmarginal}: \Model\{\cols\} \to C^0\{\cols'\} \to C^1\{\cols\} \to
\real,
$$
$$
\mathbf{logmarginal}~M~c^0~c^1 =
\mathbf{logmeanexp}~\$~\map~\pi_2~(\replicate~n~\$~\ancestral~M~c^0~c^1)
$$
where
$
\mathbf{logmeanexp}~\mathrm{xs} := 
(\mathbf{logsumexp}~\mathrm{xs}) - (\log~\$~\mathbf{length}~\mathrm{xs})
$.
Now, as conditional densities are given as a ratio of marginal densities we
will have the implementation of $\logpdf_\id(c^0,c^1, c^0_2)$ as follows
$$
\logpdf_\id(c^0,c^1, c^0_2) = (\mathbf{logmarginal}~M_\id~(c^0\land c^0_2)~c^1)
- (\mathbf{logmarginal}~M_\id~c^0~c^1).
$$
}

\revision{
Note that as $n \to \infty$, the return value of this implementation will
almost surely converge to the correct value. 
This is because by the strong law
of large numbers the results of $\mathbf{logmarginal}$ will almost surely
converge to the correct value, and as division and logarithm are continuous
functions the value of $\logpdf_\id(c^0,c^1, c^0_2)$ will converge to the
correct value.
}

\revision{
Lastly, to implement the $\prob$ method we rely on the fact that we can perform
Monte Carlo integration using importance samples. 
Note that the probability of
an event under a probability measure is given by the integral of the indicator
function of that event under the same measure. 
We have the implementation
\begin{align*}
\prob_\id(c^0, c^1, c^1_2) &= \llet \mathrm{particles} = \replicate~n~\$~\ancestral~M_\id~c^0~c^1~\iin \\
&\hspace{8mm}\exp\left(
\mathbf{logmeanexp}~[w \times (\mathbf{logindicator}~c^1~x) | (x, w) \in \mathrm{particles}]\right.\\
&\hspace{16mm}- \left.\mathbf{logmarginal}~M_\id~c^0~c^1\right),
\end{align*}
where
$\mathbf{logindicator}~c^1~x := \iif~x \in c^1~\tthen~0~\eelse -\infty$.

Note that we need the correction term $\mathbf{logmarginal}~M_\id~c^0~c^1$, as
Monte Carlo integral estimators obtained via ancestral sampling estimate the
integral under consideration up to a constant factor which is the joint density
of the given variables.
}

\section{Lowering Details}
\label{sec:lowering-appendix}

\subsection{Normalization of \iql{} Queries}
\label{sub:normalization}

Our implementation allows arbitrary nested conditionings on events and event-0 which do not fit the restriction from the type system presented in \cref{sec:language}.
Our normalization has two stages (\cref{fig:algo-norm,fig:algo-norm-2}), which rewrite queries so they satisfy the formalization presented in \cref{sec:language}, and simplifies the queries to a normal form.

The first part of the normalization does three things.
\begin{enumerate}
    \item It partially evaluates$\rename$ clauses on models.
    \item It aggregates events in$\given$clauses.
    \item It removes self-contradictory statements. For instance, $\probof \id.x=7 \wedge \id.y=8 \under \id \given
    \id.y=4$ should technically return 0 as the event is impossible, but
    the normalization will return $\probof \id.x=7 \under \id \given
    \id.y=4$ instead.
    These can also appear through nested$\given$ clauses, e.g. $(\id
    \given \id.\col_1 = 7) \given \id.\col_1 = 8$.
    In this case, the normalization pass will remove the second$\given$
    clause.
    In other words, variables which are conditioned by a 0-event behave in
    the density of the model like variables which have been
    marginalized-out and are not present in the model anymore.
    The same logic applies to the probability of an event under a model,
    but this case does not need a special treatment.
    Ideally, an implementation should issue a warning to the user letting
    them know that there was probably a mistake in the query.
 \end{enumerate}

The second part of the normalization pass is only needed for the implementation and ensures that we do not evaluate a density at a point where some of the indices have been conditioned on.
This avoids another subtle issue when conditioning on event-0.
Briefly, when conditioning on an event-0 such as $\id.\col = 7$, we
are changing the base measure on $\col$ to be a Dirac $\delta$ measure
at $7$, and the density evaluation would incorrectly assume that it is
still the original base measure.
In the second part of the normalization, a temporary name $\true$ is
used which is not part of the language syntax. 
It gets eliminated during the normalization.
After applying the normalization passes, we obtain the following normal
forms:

\begin{proposition}
    \label{prop:norm_form_\tableModel}
    The rewrites from \cref{fig:algo-norm} are confluent, terminating, and lead to the following normal forms, where$\rename$and$\given$clauses are optional:
    \begin{itemize}
        \item \tableModels:$\rename (\id \given c^0 \given c^1) \as \id'$.
        \item Probability queries:$\probof c^i_1 \under (\id \given c^0 \given c^1)$.
        \item Generate queries:$\generate (\id \given c^0 \given c^1) \limit e$ and \\ $t \modjoin (\id \given c^0 \given c^1)$.
    \end{itemize}
    After the rewrites from \cref{fig:algo-norm-2}, we can further assume that the variables in $c^0_1$ and $c^0$ are disjoint.
\end{proposition}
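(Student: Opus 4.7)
The plan is to prove termination first, then derive confluence via Newman's lemma after checking local confluence, and finally characterize the set of normal forms by inspecting which syntactic shapes admit no further rewrite step. Because both figures define rewrite systems on the grammar of \cref{fig:syntax-iql} restricted to \tableModel{}, probability, and generate expressions, the argument is syntactic and proceeds by induction on term structure.

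\paragraph{Termination.}
For the first pass (\cref{fig:algo-norm}), I would define a lexicographic measure $\mu(e) = (r(e), g(e), s(e))$ on expressions, where $r(e)$ counts the number of $\rename$ constructors appearing inside a \tableModel{} subterm, $g(e)$ counts the number of $\given$ nodes whose immediate child is itself of the form $m \given c$ (i.e.\ nested conditionings not yet grouped into the canonical $\id \given c^0 \given c^1$ shape), and $s(e)$ is the total syntactic size. Each rewrite either pushes $\rename$ inward/eliminates it (strictly decreasing $r$), aggregates two adjacent conditionings of the same flavor (strictly decreasing $g$ while leaving $r$ unchanged), or removes a contradictory subterm (strictly decreasing $s$ while leaving $r,g$ unchanged). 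Hence $\mu$ strictly decreases in the lexicographic order $\NN^3$, which is well founded, so the rewrite relation terminates. The second pass (\cref{fig:algo-norm-2}) is even simpler: each step strictly decreases the cardinality of $\vars(c^0_1)\cap\vars(c^0)$, another well-founded measure.

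\paragraph{Local confluence and confluence.}
By Newman's lemma it suffices to check local confluence, i.e.\ that every critical pair closes. The rewrite rules act on disjoint syntactic layers (a $\rename$-pushing rule operates on a $\rename$ node; an aggregation rule on two stacked $\given$ nodes; a contradiction-removal rule inside an event expression), so the only overlaps are at configurations like $\rename(m \given c^1) \as \id'$ or $(m \given c^0 \given c^0') \given c^1$. In each case both rewrite orders lead to the same canonical shape up to associativity and commutativity of $\wedge$ on events (and events-0), which I assume built into the syntactic equality following standard practice for multisets of constraints. A finite case analysis over the handful of overlapping redex pairs in \cref{fig:algo-norm} discharges local confluence. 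Combined with termination, this yields confluence and, hence, unique normal forms.

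\paragraph{Characterization of normal forms and the second pass.}
For the normal-form claim I would show the contrapositive: any \tableModel{} expression that is not of the form $\rename(\id \given c^0 \given c^1) \as \id'$ (with the $\rename$ and $\given$ parts optional) contains either a redex for $\rename$-elimination (if a $\rename$ sits under anything but the outermost position), a redex for aggregation (if two $\given c^1$ or two $\given c^0$ are stacked, or if the relative order is not $c^0$-inside, $c^1$-outside), or a contradictory subterm (if conditioning variables recur). By induction on structure, these cases cover every non-normal shape, so the only irreducible forms are those listed. The analogous argument for $\probof$ and $\generate/\modjoin$ expressions reduces to the \tableModel{} case because their model subterm must itself be in \tableModel{}-normal form. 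Finally, the second pass is applied to the result: by termination of \cref{fig:algo-norm-2}, each step deletes an equality in $c^0_1$ that mentions a variable already fixed by $c^0$, so the fixed point has $\vars(c^0_1)\cap\vars(c^0)=\emptyset$, as claimed. The main obstacle I anticipate is the bookkeeping required to formalize contradiction-removal so that it both terminates and is locally confluent with the aggregation rule; I expect this to follow from normalizing the conjunction of event-0 expressions as a finite partial map from column identifiers to values, which makes the rewrite inherently monotone and deterministic.
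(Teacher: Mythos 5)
Your overall strategy (a well-founded measure for termination, Newman's lemma plus a critical-pair check for confluence, and a redex-exhaustion argument for the normal forms) is the same as the paper's, but your termination measure does not work on the actual rule set of \cref{fig:algo-norm}. The key problem is the commutation rule $(m \given c^1) \given c^0 \leadsto (m \given c^0) \given c^1$: it changes neither the number of $\rename$ constructors, nor the number of stacked $\given$ nodes, nor the syntactic size, so your lexicographic triple $(r(e),g(e),s(e))$ is invariant under it and termination does not follow. A similar issue arises with $(\rename m \as \id') \given c^i \leadsto \rename (m \given c^i[\id'/\id]) \as \id'$: contrary to your description, the rules push $\rename$ \emph{outward} rather than inward, so $r$ is unchanged (the $\rename$ is only eliminated later, at a $\probof$/$\generate$/$\modjoin$ boundary), and your $g$ component can even \emph{increase} when $m$ is itself a conditioning. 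The paper avoids both problems with a single integer valuation in which $\val(m \given c^0) = 2\val(m)$ and $\val(m \given c^1) = 2\val(m)+1$ (and $\val(\rename m \as \id') = 1+\val(m)$); this binary-style encoding makes the swap rule drop the value from $4\val(m)+2$ to $4\val(m)+1$ and makes the rename-commutation rule drop it by one, so every rule strictly decreases $\val$. You would need to either adopt such a valuation or add a component to your measure that orders $c^1$-conditionings before $c^0$-conditionings.

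Your confluence argument also leans on the claim that the rules ``act on disjoint syntactic layers,'' which is not true here: the aggregation rules for $\given c^0$ and $\given c^1$ overlap with the swap rule (e.g.\ $((m \given c^1) \given c^{1\prime}) \given c^0$ admits two distinct redexes), so those critical pairs must be checked explicitly; the paper's (admittedly terse) justification is precisely that the rule set was generated to close all such pairs and that the self-overlaps close by associativity of the aggregation. The remainder of your plan --- the contrapositive characterization of normal forms and the observation that each step of \cref{fig:algo-norm-2} removes one variable from $\vars(c^0_1)\cap\vars(c^0)$ --- is sound and matches the paper's (sketched) argument.
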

\begin{proof}
Termination: we define the following valuation function $\val$ on expressions:
\begin{itemize}
\item $\val(\id) = 1$
\item $\val(\rename m \as \id') =1+\val(m)$
\item $\val(m \given c^0) = 2* \val(m)$
\item $\val(m \given c^1) = 2* \val(m)+1$
\item $\val(\probof c^i \under m) =\val(m)$
\item $\val(\generate m \limit e) =\val(m)$
\item $\val(t \modjoin m) =\val(m)$
\end{itemize}
Every rewrite rule strictly decreases the valuation of the expression, and thus the rewrite system terminates.

Confluence (sketch): this stems from the fact that the rewrites are generated from all critical pairs and the rules for the self-critical pairs are associative, which can be checked by inspection. 

Disjointness (sketch): a straightforward induction on the normalization rules including $P$ show that the variables in $c^0_1$ and $c^0$ are disjoint after $P$ is applied.
\end{proof}

\begin{figure}
    \fbox{
        \parbox{\textwidth}{
    \centering
    \small
   \begin{tabular}{lcl}
    $(\rename m \as \id') \given c^i$ & $\leadsto$ & $\rename (m \given c^i[\id'/\id]) \as \id' $\\
        $\rename (\rename m \as \id') \as \id''$ & $\leadsto$ & $\rename m \as \id''$ \\
       $(m \given c^0) \given {c^0}'$ & $\leadsto$ & $m \given S(c^0, {c^0}')$\\
        $(m \given c^1) \given {c^1}'$ & $\leadsto$ & $m \given c^1\wedge {c^1}'$\\
         $(m \given c^1) \given c^0$ & $\leadsto$ &  $(m \given c^0) \given c^1$ \\
       $\probof c^i \under (\rename m \as \id') $ & $\leadsto$ &  $\probof c^i[\id'/\id] \under m$ \\
        $\generate (\rename m \as \id) \limit e$ & $\leadsto$ &  $\generate m \limit e $ \\
        $t \modjoin (\rename m \as \id) $ & $\leadsto$ &  $ t \modjoin m$ \\
        $S(c^0, [])$ & $\leadsto$ & $c^0$ \\
        $S(c^0:\densitytype\{\cols\}, (\id.\col = e) \wedge {c^0}')$ & $\leadsto$ & $S(c^0 \wedge (\id.\col = e), {c^0}')$ \quad if $\col \not\in \cols$ \\
        && $S(c^0, {c^0}')$ \hspace{2.2cm} otherwise \\
   \end{tabular}
        }}
    \caption{Normalization rules for \tableModels{} (first phase).}
    \label{fig:algo-norm}
\end{figure}

\begin{figure}
    \fbox{
        \parbox{\textwidth}{
            \centering
\begin{tabular}{lcl}
    $\probof c^0_2 \under (\id \given c^0 \given c^1)$ & $\leadsto$ & $\probof P(c^0_2,c^0) \under$ \\
    && $\id \given c^0 \given c^1$ \\
    $P(c^0_2 \wedge (\id.\col = e), c^0:\densitytype\{\cols\})$ & $\leadsto$ &   $P(c^0_2, c^0)$ \hspace{2.2cm} if $\col \in \cols$ \\
    && $P(c^0_2, c^0)\wedge (\id.\col = e)$ \quad otherwise \\
    $P([], c^0)$ & $\leadsto$ & $\true$ \\
    $\true \wedge c^0$ & $\leadsto$ & $c^0$ \\
    $\probof \true \under m$ & $\leadsto$ & $1$
\end{tabular}
        }}
\caption{Normalization rules for \tableModels{} (second phase).}
    \label{fig:algo-norm-2}
\end{figure}

\Cref{fig:lowered-syntax-full} shows the full syntax of the lowered
language.
\Cref{fig:lower-exact-semantics} shows the measure semantics of the
lowered language for \tableModels{} satisfying the exact \ami{}.

\subsection{Lowered language}

\cref{fig:lower-exact-semantics} shows the full syntax of the lowered language. Compared to the abbrieviated syntax in \cref{fig:lowered-syntax}, we have added the builtin functions. 
\cref{fig:lower-exact-semantics} presents the exact semantics of the lowered language. 
The map $l:\measm \bag\ X\to \bag~\measm X$ in the semantics of $\mapreduce$ is the distributive law for the point process monad
\citep{dash2021monads}. 
The semantics of builtin functions is their standard mathematical counterpart.

\begin{figure}
    \fbox{
        \parbox{\textwidth}{
            \small
\begin{align*}
    \text{base type } \sigma &::= \sigma_c \gor  \sigma_d  
    &\hspace{-2mm}\text{ground type } \sigma_g&::= \sigma \gor (\sigma_1, \ldots, \sigma_n) 
    &\text{event type } \evtype &::= \eventtype[\sigma_g] \gor \densitytype[\sigma_g]
    \\
    \text{type } \tau &::= \bag[\sigma_g]
    &\text{operator } \op\ &::= + \gor  - \gor  \times \gor \div \gor \land \gor  \lor \gor\ = 
    &\text{\tableModel{} } \modtype &::= M[\sigma_g]
\end{align*}
\begin{align*}
    \text{builtin function } f\ &::= \mapreduce \gor \map \gor \filter \gor \replicate\gor \hostexp\gor \singleton\gor \lowerjoin\\
    \text{special primitives } &::=\simulate\gor \prob\gor \logpdf \\
    \text{term } t &::= c \gor \id \gor f(t_1, \ldots, t_n) \gor x \gor (t_1, \ldots, t_n) \gor \pi_i\ t \gor t_1\ op\ t_2 \\
    \end{align*}
    \vspace{-8mm}
    $$
    \frac{\Gamma \vdash t:\sigma_i}
    {\Gamma,\id: M[(\sigma_1, \ldots,\sigma_n)] \vdash (\id, i) = t : \densitytype[\sigma_i]}
    \quad
    \frac{\Gamma\vdash t_1 : \densitytype[\sigma_g^1]\quad \Gamma\vdash t_2 : C^0[\sigma_g^2]}
    {\Gamma \vdash t_1 \land t_2: \densitytype[\sigma_g^1, \sigma_g^2]}
    $$
    $$
    \frac{\Gamma\vdash t:\sigma_i\quad \op\in\{=,<,>\}\quad (\sigma_i,\op)\neq (\sigma_c,=)}
         {\Gamma, \id: M[(\sigma_1, \ldots,\sigma_n)] \vdash (\id, i)\ \op\ t : \eventtype[(\sigma_1,\ldots,\sigma_n)]}
    \quad
    \frac{\Gamma\vdash t_1 : \eventtype[\sigma_g]\quad \Gamma\vdash t_2 : \eventtype[\sigma_g]\quad \op\in\{\land,\lor \}}
    {\Gamma \vdash t_1\ \op\ t_2: \eventtype[\sigma_g]} 
    $$
    $$
    \frac{\Gamma\vdash t_1: \nat\ \ \Gamma\vdash t_2: \bag[\sigma_g]}
         {\Gamma\vdash \replicate(t_1, t_2): \bag[\sigma_g]}
\quad
         \frac{\Gamma\vdash t_1: \bag[(\sigma_1, \ldots, \sigma_n)]\quad \Gamma\vdash t_2: \bag[(\sigma_{n+1}, \ldots, \sigma_{n+m})]}
         {\Gamma\vdash\lowerjoin(t_1,t_2) : \bag[(\sigma_1, \ldots, \sigma_{n+m})]}
    $$
    $$
    \frac{\Gamma, x:\sigma_g^2\vdash t_1: \sigma_g^1 \quad
          \Gamma \vdash t_2: \bag[\sigma_g^2]}
         {\Gamma\vdash \map (x.t_1)\ t_2: \bag[\sigma_g^1]}
         \quad
         \frac{\Gamma, \id: M[\sigma_g]\vdash 
         c^i: C^i[(\sigma_1, \ldots,\sigma_n)]}
              {\Gamma, \id: M[\sigma_g]\vdash \simulate_{\id}(c^0,c^1): \bag[\sigma_g]}
         $$
         $$
        \frac{}{\Gamma, \id: \bag[\sigma_g]\vdash \id: \bag[\sigma_g]}
         \quad
         \frac{\Gamma, \id: M[\sigma_g]\vdash 
         c^i: C^i[\sigma_g]\quad \Gamma, \id: M[\sigma_g]\vdash c^0_1: \densitytype[\sigma_g]}
              {\Gamma, \id: M[\sigma_g]\vdash \logpdf_{\id}(c^0, c^1,c^0_1): \real}
         $$
         $$
         \frac{\Gamma, \id: M[\sigma_g]\vdash 
         c^i: C^i[\sigma_g]\quad \Gamma, \id: M[\sigma_g]\vdash c^1_1: \eventtype[\sigma_g]}
              {\Gamma, \id: M[\sigma_g]\vdash \prob_{\id}(c^0, c^1,c^1_1): \ranged(0,1)}
              \quad 
              \frac{\Gamma,x:\sigma_g \vdash t_1:\bool\quad \Gamma\vdash t_2:\bag[\sigma_g]}
              {\Gamma\vdash \filter (x.t_1)\ t_2: \bag[\sigma_g]}
              $$ 
    $$
    \frac{\Gamma \vdash t : \sigma_g}
         {\Gamma \vdash \singleton(t) : \bag[\sigma_g]}
    \quad
    \frac{\Gamma,x:\sigma_g^2 \vdash t_1 : \bag[\sigma_g^1] \quad
          \Gamma \vdash t_2 : \bag[\sigma_g^2]}{
          \Gamma \vdash \mapreduce (x.t_1)\ t_2 : \bag[\sigma_g^1]
    }
    \quad \frac{\Gamma \vdash t_i: \sigma}{\Gamma \vdash t_1\ \op\ t_2: \sigma}(\op:\sigma,\sigma\to\sigma)
    $$
    $$
    \frac{}{\Gamma \vdash c:\sigma}
    \quad
    \frac{}{\Gamma, x:\tau \vdash x:\tau}
    \quad
    \frac{\Gamma\vdash x:\real}{\Gamma\vdash \hostexp(x):\posreal}
    \quad 
    \frac{\Gamma\vdash t:(\sigma_1,\ldots,\sigma_n)}{\Gamma\vdash \pi_i(t): \sigma_i}
    \quad 
    \frac{\Gamma\vdash t_i:\sigma_i\quad i=1..n}{\Gamma\vdash (t_1,\ldots,t_n):(\sigma_1,\ldots,\sigma_n)}
    $$
         }}
    \caption{Full syntax and type system of the lowered language.}
    \label{fig:lowered-syntax-full}
\end{figure}

\begin{figure}
\fbox{
\parbox{\textwidth}{
\small
 \begin{center}
\textbf{Semantics of types and contexts}
\end{center}
\begin{align*}
&\semex{\sigma} = \sem{\sigma} \qquad \semex{(\sigma_1,\ldots,\sigma_n)} = \semex{\sigma_1} \times \ldots \times \semex{\sigma_n} \qquad
\semex{\bag[\sigma_g]} = \measm\bag(\sem{\sigma_g}) \\
&\semex{M[\sigma_g]} = \admissible \left(\semex{\sigma_g}\right) \qquad
\semex{\Gamma, x:\tau} = \semex{\Gamma} \times \semex{\tau} \qquad \semex{[]} = 1
\end{align*}
\begin{center}
\textbf{Semantics of terms}
\end{center}
\begin{align*}
\semex{(t_1,\ldots, t_n)}(\gamma) &=
        (\semex{t_1}(\gamma),\ldots, \semex{t_n}(\gamma)) \\
    \semex{\pi_i\ t}(\gamma) &= \pi_i(\semex{t}(\gamma)) \\
    \semex{\op(t_1, t_2)}(\gamma) &= \op(\semex{t_1}(\gamma), \semex{t_2}(\gamma)) \\
\semex{\id}(\gamma) &= \gamma(\id) \\
\semex{\simulate_\id(c^0,c^1)}(\gamma) &= 
\llet (\pi,v) = \semex{c^0}(\gamma) \ \iin \llet m = \Dis(\gamma(\id),\pi,v)\ \iin\\
&\quad \llet E = \semex{c^1}(\gamma)\  \iin \cond(m,E)\\
\semex{\logpdf_\id(c^0,c^1,c^0_2)}(\gamma) &= \llet (\pi,v) = \semex{c^0_2}(\gamma) \ \iin \log(\semex{\simulate_\id(c^0,c^1)}(\gamma)\dotpdf(v)) \\
\semex{\prob_\id(c^0,c^1,c^1_2)}(\gamma) &=  \semex{\simulate_\id(c^0,c^1)}(\gamma)\dotmeas(\semex{c^1_2}(\gamma)) \\
\semex{(\id,i) ~\op~ t: \mathcal{C}^1[\sigma_g]}(\gamma) &= \{(x_1,\ldots,x_n) \in \semex{\sigma_g} ~\mid~ x_i ~\op_l~ \semex{t}(\gamma)\}\\
\semex{(\id,i) = t :\mathcal{C}^0[\sigma_g]}(\gamma) &= (\pi_i,\semex{t}(\gamma))\\
\semex{c^1_1\wedge c^1_2 }(\gamma) &= \semex{c^1_1}(\gamma) \cap \semex{c^1_2}(\gamma)\\
\semex{c^1_1\vee c^1_2}(\gamma) &= \semex{c^1_1}(\gamma) \cup \semex{c^1_2}(\gamma)\\
\semex{c^0_1\wedge c^0_2 }(\gamma) &= \llet_{1\leq i\leq 2}(f_i,v_i) = \semex{c^0_i}(\gamma)\ \iin\ \big(\lambda x.(f_1(x),f_2(x)),(v_1,v_2)\big) \\
\semex{\mapreduce (x.t_1)\ t_2}(\gamma) &= \begin{aligned}[t]
    l \big[
        \semex{t_2}(\gamma)
        \bind \left(\lambda S.\left\{\lambda x'.\semex{t_1}(\gamma[x\mapsto x'])\ y\ \middle|\ y \in S\right\}\right)
    \big]
    \end{aligned}\\
\semex{\map (x.t_1)\ t_2}(\gamma) &= \begin{aligned}[t]
    \semex{t_2}(\gamma)
    \bind \left(\lambda S. \return \left\{\lambda x'.\semex{t_1}(\gamma[x\mapsto x'])\ y\ \middle|\ y \in S\right\}\right)
    \end{aligned}\\
\semex{\filter (x.t_1)\ t_2}(\gamma) &= \begin{aligned}[t]
    \semex{t_2}(\gamma)
    \bind \left(\lambda S.\return \left\{y \in S \middle| \lambda x'.\semex{t_1}(\gamma[x\mapsto x'])\ y\right\}\right)
    \end{aligned}\\
\semex{\replicate(t_1, t_2)}(\gamma) &= \begin{aligned}[t]
    &\llet m = \semex{t_1}(\gamma)\  \iin \\
    &\semex{t_2}(\gamma)
        \bind \lambda \mu. \textstyle \bigotimes_{i=1}^m \mu
        \bind \left(\lambda (A_1, \ldots, A_m). \return \cup_{i=1}^m A_i\right)
    \end{aligned} \\
\semex{\hostexp(t_1)}(\gamma) &= \exp\left(\semex{t_1}(\gamma)\right) \\
\semex{\singleton(t_1)}(\gamma) &= \return \left\{\semex{t_1}(\gamma)\right\} \\
\semex{\lowerjoin(t_1, t_2)}(\gamma) &= \begin{aligned}[t]
    &\semex{t_1}(\gamma) \otimes \semex{t_2}(\gamma)
    \bind \left(\lambda S, S'. \return (\maptwo \splat S\ S')\right)
    \end{aligned}\\
\end{align*}
}}
\caption{Exact semantics of the lowered language.}
\label{fig:lower-exact-semantics}
\end{figure}

\subsection{Proof of the exact guarantee theorem}
\label{sub:proof-exact}

First note that the interpretation of the contexts $\Gamma$ in \iql{} and the lowered language for closed expressions $\Gamma, [] \vdash t:\tau$ are the same, i.e. $\semex{\tranty{\Gamma}} = \sem{\Gamma}$.
This is checked by direct inspection. 
Given a local context $\Delta$ and an evaluation $\delta$ of that context, we write $\delta' = \tranty{\delta}$ to denote the corresponding evaluation of the lowered context $\tranty{\Delta}$. 
$\tranty{\delta}$ is defined as the identity on the table keys, and removes model keys.
For general expressions $\ctx \vdash t:\tau$, the context $\Gamma$ of the lowered and \iql{} program is not necessarily the same. 
In this case, $\tran{\delta_1}{t}$ will be in a context $\Gamma'$ that consists of $\tranty{\Gamma}$ as well as a new variable for each table in $\Delta$. 
That is, $\Gamma'$ adds a renamed version of each table type in $\Delta$ to $\tranty{\Gamma}$.
Likewise, an evaluation context $\gamma,\delta$ will translate to an evaluation context $\gamma'$ in the lowered language, where $\gamma'$ is the same as $\gamma$ on the variables in $\Gamma$, and for every key $k$ at position $i$ in $\delta_1$, $\gamma'(k)=\delta(\delta.\text{key}[i])$.
Then, the proof of the theorem is by induction on the structure of the \iql{} program. 
More precisely, we show the following. 
Let $\ctx \vdash t: \tau$ be a \iql{} program. 
Then, for all evaluation of the context $\gamma, \delta$, we have that $\sem{t}(\gamma, \delta)= \sem{\tran{\delta'}t}(\gamma')$, where $\delta'= \tranty{\delta}$.
With no loss of generality, we simplify the problem by assuming that the queries are normalized, i.e. models $m$ can be assumed to be of the form $\id \given c^0 \given c^1$, where$\given$clauses are optional.
In addition, we will only cover the case where these clauses are present, as the other cases are immediate simplifications.

\begin{itemize}
\item $t\equiv \generate (\id \given c^0 \given c^1) \limit e$ 
\begin{align*}
    &\semex{\tran{\delta'}{\generate (\id \given c^0 \given c^1) \limit e}}(\gamma') \\
    &=\semex{\replicate(\tran{\delta'}{ e}, \simulate_\id(\tran{\delta'}{ c^0},\tran{\delta'}{ c^1}))}(\gamma') \quad\quad\text{def }\tranty{-}\\
    &= \begin{aligned}[t]
        &\llet m = \semex{\tran{\delta'}{ e}}(\gamma')\  \iin \quad\quad\text{def }\semex{\replicate}\\
        &\semex{\simulate_\id(\tran{\delta'}{ c^0},\tran{\delta'}{ c^1})}(\gamma')
            \bind \lambda \mu. \textstyle \bigotimes_{i=1}^m \mu
            \bind \left(\lambda (A_1, \ldots, A_m). \return \cup_{i=1}^m A_i\right)
        \end{aligned} \\
    &= \begin{aligned}[t]
        &\llet m = \sem{e}(\gamma,\delta)\  \iin \quad\quad\text{I.H.}\\
        &\sem{\id \given c^0 \given c^1}(\gamma,\delta)
            \bind \lambda \mu. \textstyle \bigotimes_{i=1}^m \mu
            \bind \left(\lambda (A_1, \ldots, A_m). \return \cup_{i=1}^m A_i\right)
        \end{aligned} \\
    &= \sem{\generate (\id \given c^0 \given c^1) \limit e}(\gamma,\delta)\quad\text{def }\sem{\generate}
\end{align*}
\item $t\equiv t_1 \where e$.
\begin{align*}
    & \semex{\tran{\delta'}{t_1 \where e}}(\gamma') & \\
    & = \big(\lambda x.\filter(\lambda r. \semex{\tran{\delta'[\id\to r]}{e}}(\gamma'),x)\big)_*\semex{\tran{\delta'}{t_1}}(\gamma') &\text{def }\semex{-},\tranty{-} \\
    & =\big(\lambda x.\filter(\lambda r.\sem{e}(\gamma,\delta[\id\to r]),x)\big)_*\sem{t_1}(\gamma,\delta) &\text{I.H.}\\
    &= \sem{t_1 \where e}(\gamma,\delta) &\text{def }\sem{-}
\end{align*}
\item $t\equiv \select e \from t_1$
\begin{align*}
&\semex{\tran{\delta'}{\select e \from t_1}}(\gamma') \\
&=  \semex{\map(\lambda r. \tran{\delta'[\id\mapsto r]}{\overline{e}}, \tran{\delta'}{ t})}(\gamma') &\text{def }\tranty{} \\
&=\begin{aligned}[t]
    &\semex{\tran{\delta}{t_1}}(\gamma')
    \bind  \\
    &\left(\lambda S. \return \left\{\lambda x'.\semex{\tran{\delta'}{e}}(\gamma'[x\mapsto x'])\ y\ \middle|\ y \in S\right\}\right) 
    \end{aligned}&\text{def }\semex{\map}\\
&= \begin{aligned}[t]
    \sem{t_1}(\gamma,\delta)
    \bind \left(\lambda S. \return \left\{\lambda x'.\sem{e}(\gamma,\delta[x\mapsto x'])\ y\ \middle|\ y \in S\right\}\right) 
    \end{aligned}&\text{I.H.}\\
&= \sem{\select e \from t_1}(\gamma,\delta) &\text{def }\sem{\select}
\end{align*}
\item $t\equiv t_1 \modjoin m$ is similar to the $\generate$ case, and is omitted for brevity.
\item $t\equiv \id: T[\id]\{\cols\}$ is immediate from the definition of $\gamma$ and of the $\id$ rules in \iql{} and the lowered language.
\item $t\equiv \id \given c^0 \given c^1$. y induction hypothesis, we have that $\sem{c^0}(\gamma,\delta)=\semex{\tran{\delta'}{c^0}}(\gamma')$, and similarly for $c^1$. Then, by inspection of the semantics of \iql{} and the lowered language, we conclude that the semantics are the same.
\item $t\equiv \probof c^i_1 \under \id \given c^0 \given c^1$. 
By induction hypothesis, we have that $\sem{c^i_1}(\gamma,\delta)=\semex{\tran{\delta'}{c^i_1}}(\gamma')$, and similarly for $c^0$ and $c^1$. 
Then, by inspection of the semantics of \iql{} and the lowered language, we conclude that the semantics are the same.
\item $t\equiv t_1 \wedge t_2, t_1 \vee t_2, \id.\col = e, \id.\col ~\op~ e, op(t_1,\ldots,t_n), \rename m \as \id', \rename t \as \id'$. These cases are straightforward and immediately follow from the induction hypothesis.
\item $t\equiv t_1 \join t_2$. This case also simply follow from the induction hypothesis, as the $\lowerjoin$ operator mimics the behavior of the$\join$operator.
\end{itemize}

\subsection{Guarantee for approximate backend}
\label{sub:appendix-approx-guarantee}

In this section we describe assumptions on approximate \tableModels{} and
the approximate semantics of the lowered language. 
We then state and prove a soundness guarantee for the lowering transform with
respect to the approximate semantics.
As discussed in \cref{sub:approximate-backend}, queries with$\where$clauses that use approximate values do not necessarily converge to the correct limit
even as the approximations converge to the correct value. 
In this section, we present modifications to the \iql{} and lowered language
semantics that detect the types of queries for which this type of
behavior can happen.
We then give a proof by logical relations for a guarantee that precisely captures the notion of correctness in this setting.

\paragraph{Key insights in the proof}
The proof of the guarantee for the approximate backend is based on the following key points:
\begin{enumerate}
    \item A safe query ($\safe$ macro defined in \cref{fig:safe-macro} returns $\true$) is one for which the approximate semantics of the lowered language converges to the exact semantics almost surely.
    \item A query will be safe if its approximations are used in a continuous way. The key theorem is the continuous mapping theorem for random variables which states that if a sequence of random variables $x_n$ converges to $x$, then for every continuous function $f$, $f(x_n)$ converges to $f(x)$.
    \item To ensure this, we define macros $\safe, \continuous, \exact$ (\cref{fig:safe-macro}) that track sufficient restrictions on queries to ensure the above. We also define appropriate topologies and metrics on the various spaces of random variables to ensure that the continuous mapping theorem holds.
\end{enumerate}

\paragraph{Topological preliminaries}
To apply the continuous mapping theorem, we need to define topologies on the different spaces on which the random variables take values.
For the discrete spaces $\BB$, $\ZZ$, $\NN$, and $\str$, we use the discrete topology.
For the real spaces $\RR$ and $\RR^+$, we use the usual metric topology.
As the spaces also interpret $\Null$, we use the discrete topology on $\Null$, and the coproduct topology for interpreting base types.
For the space of bags, we use the topology of symmetric products \citep{hatcher2002algebraic}.
For the product spaces, we use the product topology.
All these spaces are metrizable. 
This is known to be sufficient for the continuous mapping theorem to hold \citep{billingsley2017probability}[Chapter 5, Section 26].
More precisely, we take the usual metric on base types, extend each of these metrics $d$ to $\Null$ by defining $d(\Null, \Null) = 0$, and $d(\Null, x) = 1$ for $x \neq \Null$.
We then take the product metric on product spaces, and the metric on bags as the symmetric product metric. 
The space of bags with topology of symmetric products on a metric space is known to be metrizable \citep{hatcher2002algebraic}.
With these topologies, scalar functions on continuous types are continuous if they're continuous in the usual way.
All operations on discrete types are continuous.
The extended scalar operations $\op_s$ are continuous if the corresponding scalar operation $\op$ is continuous (suffices to check that the preimage of the open $\{\Null\}$ is the open $\{\Null\}$).
Likewise, the extended scalar operations $\op_l$ are continuous if the corresponding scalar operation $\op$ is continuous (suffices to check that preimage of the open $\{\Null\}$ is the open $\{\}$).
Projections on the $i$-th component $\id.\col_i$ are continuous.
It remains to show that the operations on bags are continuous. 
Using the universal property of the bag construction as a colimit in the category of topological spaces and continuous maps, we can show that the bag construction is an endofunctor on the category of topological spaces and continuous maps. This directly implies that $\map$ is continuous. 
$\singleton$ is one of the cocone injection and is therefore continuous by construction. 
$\mapreduce$ is the composition of a map and a reduce operation, and is therefore continuous if union is continuous.
Likewise, for every $n$, $\replicate~n$ is simply an $n$-fold union of the input and is continuous if union is continuous.
To show that union is continuous, first note that bag is the countable coproduct of the bags of size $n$ for all $n$. 
As finite products distribute over coproducts in the category of topological spaces and continuous maps, we will have a map $\bag (X) \times \bag (X) \to \bag (X)$ if we have a map $\bag_n(X) \times \bag_m (X) \to \bag_{n+m} (X)$ by the universal property of colimits, and where $\bag_n (X)$ are bags of size $n$ of elements of $X$.
In addition, there's a natural isomorphism between $\bag_n (X) \times \bag_m (X)$ and $\bag_{n+m} (X)$, coming from the isomorphism $X^n \times X^m \simeq X^{n+m}$, which therefore extends to a map from $\bag_n (X) \times \bag_m (X) \to \bag X$, and we are done.

The map $\splat$ defined by
\begin{align*}
\splat: (X_1 \times\ldots\times X_n)\times(Y_1 \times\ldots\times Y_m) &\to (X_1 \times \ldots \times X_n \times Y_1\times \ldots \times Y_m) \\
(x_1, \ldots, x_n), (y_1, \ldots, y_m) &\mapsto (x_1, \ldots, x_n, y_1,
\ldots, y_m)
\end{align*}
is used to give the semantics of join. This map is a composition of the product
of the identity maps $\mathrm{id}_{X_1 \times\cdots\times X_n}$ and
$\mathrm{id}_{Y_1 \times\cdots\times Y_m}$ and the
isomorphism 
$$
(X_1 \times\cdots\times X_n)\times(Y_1 \times\cdots\times Y_m) \simeq
X_1 \times\cdots\times X_n\times Y_1 \times\cdots\times Y_m.
$$
Hence, it is continuous. 
More generally, given natural numbers $n,m$, there is a function $X^n\times Y^m \to (X\times Y)^{n\times m}$ which returns all possible combinations of the elements of $X$ and $Y$ in the $n\times m$-tuple. 
This function is continuous as a composition of identity, duplication, and permutation maps.
Post-composing with the injection $(X\times Y)^{n\times m} \to \bag(X\times Y)$, we get a continuous map $X^n\times Y^m \to \bag(X\times Y)$ which respects the symmetric quotient and therefore extends to a map $\bag_n(X)\times \bag_m(Y) \to \bag(X\times Y)$. 
We conclude as in the case for union that this extends to the continuous map $\lowerjoin:\bag(X)\times \bag(Y) \to \bag(X\times Y)$.

\paragraph{Static analysis for detecting safe terms \cref{fig:safe-macro}}
The $\safe$ macro detects safe terms for which the correctness guarantee for
the approximate \ami{} holds. 
The $\exact$ macro detects those scalar terms
whose value does not depend on the approximation used to implement the
\ami{} method. 
As we formalize below, for such terms the exact and approximate
semantics should be almost surely the same. 
The $\continuous$ macro detects those scalar terms that are continuous functions of the results of $\prob$ or $\logpdf$ methods. 

\begin{figure}[t]
    \captionsetup{skip=4pt}
\footnotesize
\setlength{\abovedisplayskip}{0pt}
\setlength{\belowdisplayskip}{0pt}
\captionsetup[subfigure]{skip=0pt,belowskip=1pt,aboveskip=1pt}

\setlength{\FrameSep}{1pt}
\begin{framed}
\begin{subfigure}[t]{\linewidth}
    \caption{\textbf{Safeness of Scalar Expressions}}
    \begin{align*}
    \exact(c) &= \continuous(c) = \safe(c) = \true \\
    \exact(\op(e_1, \ldots, e_n)) &=
        \exact(e_1)\land\ldots\land\exact(e_n) \\
    \safe(\op(e_1, \ldots, e_n)) &= \safe(e_1)\land\ldots\land\safe(e_n) \\
    \continuous(\op(e_1, \ldots, e_n)) &=
        \continuous(\op)\land \bigwedge_{1\leq i\leq n}\continuous(e_i)\\
    \exact(\id.\col) &= \false \\
    \continuous(\id.\col) &= \safe(\id.\col) = \true \\
    \exact(\probof c \under \id \given c^0 \given c^1) &= \false \\
    \safe(\probof c \under \id \given c^0 \given c^1) &= \safe(c) \land 
\safe(c^0) \land \safe(c^1) \\
    \continuous(\probof c \under \id \given c^0 \given c^1) &= \safe(c) \land
\safe(c^0) \land \safe(c^1)
    \end{align*}
\end{subfigure}

\begin{subfigure}[t]{\linewidth}
    \caption{\textbf{Safeness of Event and Event-0 Expressions}}
    \begin{align*}
    \safe(\id.\col_i ~\op~ e) &= \exact(e) \\
    \safe(c ~\op~ c') &= \safe(c) \land \safe(c') \\
    \end{align*}
\end{subfigure}

\begin{subfigure}[t]{\linewidth}
\caption{\textbf{Safeness of Table Expressions}}
\centering
\begin{align*}
    \safe(\id) &= \true \\ 
\safe(\rename t \as \id) &= \safe(t) \\
\safe(t_1 \join t_2) &= \safe(t_1) \land \safe(t_2) \\
\safe(t \modjoin \id \given c^0 \given c^1) &= \safe(t) \land \safe(c^0) \land
\safe(c^1) \\
\safe(t \where e) &= \false \\
\safe(\select \overline{e} \as \overline{\col} \from t) &=
\safe(t)\land\continuous(e_1)\land\ldots\land\continuous(e_n)
\end{align*}
\end{subfigure}
\end{framed}

\caption{Macros for detecting safe terms for the approximate \ami{} guarantee}
\label{fig:safe-macro}
\end{figure}

\paragraph{Random variable semantics}  
The random variable semantics for the approximate case is given in
\cref{fig:lower-approximate-semantics}. 
If $(\mathcal{X}, \Sigma)$ is a measurable space, then by $\stseq(\mathcal{X})$ we denote the set of sequences of $\mathcal{X}$-valued random variables.

\begin{figure}
    \captionsetup{skip=4pt}
\footnotesize

\setlength{\abovedisplayskip}{-5pt}
\setlength{\belowdisplayskip}{0pt}
\captionsetup[subfigure]{skip=0pt,belowskip=0pt,aboveskip=0pt}
\begin{framed}
\begin{subfigure}[t]{\linewidth}
\caption{\bfseries Semantics of Types and Contexts}
\begin{align*}
\semap{\Gamma, x:\tau} &= \semap{\Gamma} \times \semap{\tau}
    &\semap{\sigma_g} &= \stseq(\sem{\sigma_g}) \\
\semap{\bag[\sigma_g]} &= \stseq(\measm\bag\ \sem{\sigma_g})
    &\semap{M[\sigma_g]} &= \aadmissible \sem{\sigma_g}
\end{align*}

\bigskip

\caption{\bfseries Semantics of Terms}
\begin{align*}
\semap{(t_1,\ldots, t_n)}(\gamma, \omega)_n &=
    (\semap{t_1}(\gamma, \omega)_n,\ldots, \semap{t_n}(\gamma, \omega)_n) \qquad \semap{c}(\gamma, \omega)_n =c \\
\semap{\pi_i\ t}(\gamma, \omega)_n &= \pi_i(\semap{t}(\gamma, \omega)_n) \qquad \semap{x}(\gamma, \omega)_n = \gamma(x)(\omega)_n \\
\semap{\op(t_1, t_2)}(\gamma, \omega)_n &= \op(\semap{t_1}(\gamma, \omega)_n, \semap{t_2}(\gamma, \omega)_n) \\
\semap{\mapreduce (x.t_1)\ t_2}(\gamma, \omega)_n &= \begin{aligned}[t]
    l \big[
        \semap{t_2}(\gamma,\omega)_n
        \bind \left(\lambda S.\left\{\lambda x'.\semap{t_1}(\gamma[x\mapsto x'], \omega)_n\ y\ \middle|\ y \in S\right\}\right)
    \big]
    \end{aligned}\\
\semap{\map (x.t_1)\ t_2}(\gamma, \omega)_n &= \begin{aligned}[t]
    \semap{t_2}(\gamma,\omega)_n
    \bind \left(\lambda S. \return \left\{\lambda x'.\semap{t_1}(\gamma[x\mapsto x'], \omega)_n\ y\ \middle|\ y \in S\right\}\right)
    \end{aligned}\\
\semap{\filter (x.t_1)\ t_2}(\gamma, \omega)_n &= \begin{aligned}[t]
    \semap{t_2}(\gamma,\omega)_n
    \bind \left(\lambda S.\return \left\{y \in S \middle| \lambda x'.\semap{t_1}(\gamma[x\mapsto x'], \omega)_n\ y\right\}\right)
    \end{aligned}\\
\semap{\replicate(t_1, t_2)}(\gamma, \omega)_n &= \begin{aligned}[t]
    &\llet m = \semap{t_1}(\gamma,\omega)_n\  \iin \\
    &\semap{t_2}(\gamma,\omega)_n
        \bind \lambda \mu. \textstyle \bigotimes_{i=1}^m \mu
        \bind \left(\lambda (A_1, \ldots, A_m). \return \cup_{i=1}^m A_i\right)
    \end{aligned} \\
\semap{\hostexp(t_1)}(\gamma, \omega)_n &= \exp\left(\semap{t_1}(\gamma,\omega)_n\right) \\
\semap{\singleton(t_1)}(\gamma, \omega)_n &= \return \left\{\semap{t_1}(\gamma,\omega)_n\right\} \\
\semap{\lowerjoin(t_1, t_2)}(\gamma, \omega)_n &= \begin{aligned}[t]
    &\semap{t_1}(\gamma, \omega)_n \otimes \semap{t_2}(\gamma, \omega)_n
    \bind \left(\lambda S, S'. \return (\maptwo \splat S\ S')\right)
    \end{aligned}\\
     \semap{(\id,i) ~\op~ t: \mathcal{C}^1[\sigma_g]}(\gamma, \omega)_n &= \{(x_1,\ldots,x_n) \in \sem{\sigma_g} ~\mid~ x_i ~\op_l~ \semap{t}(\gamma, \omega)_n\}\\
     \semap{(\id,i) = t :\mathcal{C}^0[\sigma_g]}(\gamma, \omega)_n &= (\pi_i,\semap{t}(\gamma, \omega)_n)\\
     \semap{c^1_1\wedge c^1_2 }(\gamma, \omega)_n &= \semap{c^1_1}(\gamma, \omega)_n \cap \semap{c^1_2}(\gamma, \omega)_n\\
     \semap{c^1_1\vee c^1_2}(\gamma, \omega)_n &= \semap{c^1_1}(\gamma, \omega)_n \cup \semap{c^1_2}(\gamma, \omega)_n\\
     \semap{c^0_1\wedge c^0_2 }(\gamma, \omega)_n &= \llet_{1\leq i\leq 2}(f_i,v_i) = \semap{c^0_i}(\gamma, \omega)_n\ \iin\ \big(\lambda x.(f_1(x),f_2(x)),(v_1,v_2)\big) \\
     \semap{\simulate_\id(c^0,c^1)}(\gamma, \omega) &=
   \left\{\mu^n_{\id; \semap{c^0}(\gamma)_n, \semap{c^1}(\gamma)_n}(\omega)\right\}\\
\semap{\logpdf_\id(c^0,c^1,c^0_2)}(\gamma, \omega) &=
   \left\{L^n_{\id;\semap{c^0}(\gamma)_n, \semap{c^1}(\gamma)_n,
    \semap{c^0_2}(\gamma)_n}(\omega)\right\} \\
\semap{\prob_\id(c^1_2, c^0,c^1)}(\gamma, \omega) &=
     \left\{P^n_{\id;\semap{c^0}(\gamma)_n, \semap{c^1}(\gamma)_n, \semap{c_2^1}(\gamma)_n}(\omega)\right\}
\end{align*}
\end{subfigure}
\end{framed}
\caption{Approximate semantics of the lowered language.}
\label{fig:lower-approximate-semantics}
\end{figure}

For the case of the \ami{} methods, we assume that the sequences of random
variables denoting each method converge to the correct value in the following
sense. 

\begin{definition}[Asymptotically Sound Approximate \ami{} Implementation]
We say that an implementation of the \ami{} is asymptotically sound, if for all
environments $\Gamma$, terms $\Gamma \vdash c^i_j: C^i[(\sigma_1, \ldots, \sigma_n)]$, and all $\gamma \in \sem{\Gamma}$, we have
\begin{align*}
\semap{\logpdf_\id(c^0_1, c^1_1, c^0_2)}(\gamma)_n &\to
\semex{\logpdf_\id(c^0_1, c^1_1, c^0_2)}(\gamma) \\
\semap{\prob_\id(c^0_1, c^1_1, c^1_2)}(\gamma)_n &\to
\semex{\prob_\id(c^0_1, c^1_1, c^1_2)}(\gamma) \\
\semap{\simulate_\id(c^0_1, c^1_1)}(\gamma)_n &\to
\semex{\simulate_\id(c^0_1, c^1_1)}(\gamma),
\end{align*}
$\mathbb{P}$-almost surely.
\end{definition}
The implementation of such asymptotically sound estimators in presence of
nested conditioning is considered in the Bayesian inference literature
\citep{lew2023probabilistic,rainforth2018nesting}.
We denote by $\aadmissible X$ the set of asymptotically sound approximate \ami{} implementation where $\id$ is a model on $X$. 
Given an evaluation context $\gamma$, we denote by $\gamma_n$ the following mapping. If $\gamma = \{\id_1 \mapsto a^1, \ldots, \id_m \mapsto a^m\}$, then $\gamma_n := \{\id_1 \mapsto a^1_n, \ldots, \id_m \mapsto a^m_n \ldots\}$, where $a^i_n$ is $a^i$ if $\id_i$ is a table type, and the $n$-th element of the sequence $a^i$ if $\id_i$ is a model type.
We can then prove the following lemma by simple induction on the structure of the term.
\begin{lemma}
If a term $\Gamma;\Delta\vdash t:\tau $ is exact then $\mathcal{P}$-almost surely
$\semap{\tran{\delta}{t}}(\gamma)_n =
\semex{\tran{\delta}{t}}(\gamma_n)$ for all $n$ and all evaluation contexts $\gamma$, $\delta$.
\label{lem:sa-safe-exact}
\end{lemma}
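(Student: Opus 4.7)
The plan is to proceed by structural induction on the derivation of $\Gamma;\Delta \vdash t : \tau$, observing that the $\exact$ predicate defined in \cref{fig:safe-macro} is only populated on scalar expressions, and that its two positive clauses (constants and primitive operators with exact arguments) are precisely the constructors whose approximate semantics does not invoke an \ami{} method or a column lookup that could depend on an approximation in $\gamma$. Thus the restriction that $t$ be exact forces $t$ to be built entirely from constants and operations $\op$ applied to exact subterms.

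For the base case, if $t = c$ is a constant, then by inspection of the clauses for constants in \cref{fig:lower-exact-semantics} and \cref{fig:lower-approximate-semantics} we have $\semap{\tran{\delta}{c}}(\gamma,\omega)_n = c = \semex{\tran{\delta}{c}}(\gamma_n)$ for every $n$ and every $\omega$, so the claim holds surely (not merely almost surely). For the inductive step, if $t = \op(e_1,\ldots,e_k)$ with $\exact(e_i)$ for each $i$, then by the induction hypothesis there is a $\mathbb{P}$-null set $N_i$ outside of which $\semap{\tran{\delta}{e_i}}(\gamma)_n = \semex{\tran{\delta}{e_i}}(\gamma_n)$ for all $n$; taking the finite union $N = \bigcup_i N_i$ still has measure zero, and on $\Omega \setminus N$ the defining clauses
\[
\semap{\op(e_1,\ldots,e_k)}(\gamma,\omega)_n = \op(\semap{e_1}(\gamma,\omega)_n,\ldots,\semap{e_k}(\gamma,\omega)_n)
\]
and analogously for $\semex{\cdot}(\gamma_n)$ immediately yield equality of the two sides after applying the induction hypothesis componentwise. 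The remaining syntactic forms for scalar expressions, namely $\id.\col$ and $\probof c \under \id \given c^0 \given c^1$, satisfy $\exact = \false$ and therefore do not occur under the hypothesis of the lemma; so the induction is exhausted.

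The only subtlety worth flagging is the interaction between the ``for all $n$'' quantifier and the $\mathbb{P}$-almost-surely quantifier: since $\semap{\cdot}$ produces a sequence indexed by $n$ while almost-sure equality allows one exceptional null set, we must ensure the null set can be chosen uniformly in $n$. This is automatic here because the two base-case and inductive-case equalities hold for every $n$ simultaneously once one fixes the $\omega$-almost-sure coincidence of the finitely many subterm denotations. I do not expect any genuine obstacle, since exactness was designed precisely to rule out the very constructors ($\id.\col$, $\probof$) whose approximate denotation is a nontrivial random sequence and hence could disagree with its exact counterpart at any finite $n$.
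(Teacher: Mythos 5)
Your proof is correct and takes essentially the same route the paper intends: the paper only remarks that the lemma follows ``by simple induction on the structure of the term,'' and your argument fills in exactly that induction, correctly observing that $\exact$ restricts $t$ to constants and primitive operations on exact subterms (since $\exact(\id.\col)=\exact(\probof\cdots)=\false$) and handling the uniformity of the null set over $n$ by taking a finite union over subterms.
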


The proof of soundness of the translation $\tranty{\cdot}$ relies on logical
relations given in \cref{fig:logical-relations}.
The logical relations are in terms of convergence of sequences taking values in
$\sem{\tau}$ for different types $\tau$. 
We assume $\BB$, $\ZZ$, $\NN$, and $\str$ have the
discrete topology and $\RR$ and $\RR^+$ have their usual metric topology.
Product spaces are endowed with the usual product topology and the space of
bags is endowed with the topology of symmetric products \citep{hatcher2002algebraic}.
The fundamental lemma of logical relations, which implies the soundness of our translation, is as follows.

\begin{figure}

\begin{framed}
\small
\centering
\textbf{Logical Relations}
\begin{align*}
(x, \{X_n\}) \in R_\sigma & \iff _n \to x \ \ \text{a.s.} \\
((f, v), (g, \{X_n\})) \in R_{C^0[\cols]} &\iff
f = g \land (v = X_n\ \text{almost surely for all $n \in \NN$}) \\
(E, \{E_n\}) \in R_{C^1[\cols]} &\iff
\text{$E = E_n$ almost surely for all $n \in \NN$} \\
(\mu, \{\mu_n\}) \in R_{\tabtype} & \iff 
\mu_n(\omega) \Rightarrow \mu\ \text{for $\mathbb{P}$-almost all $\omega$}
\end{align*}
\end{framed}
\caption{Logical relations used for the proof of soundness of $\tranty{\cdot}$
with respect to the approximate semantics.}
\label{fig:logical-relations}
\end{figure}

\begin{lemma}[Fundamental lemma of logical relations]
Suppose
$$
\Gamma := x_1:\tau_1, \ldots, x_n:\tau_n
\qquad\qquad
\Delta := x_1':\tau_1', \ldots, x_m':\tau_m'
$$
and $\Gamma; \Delta \vdash t:\tau$ where $t$ is a normalized \iql{} term which is safe (continuous for a scalar type, safe for a table, event or event-0 type). 
Take any $a_i \in \sem{\tau_i}$, $a_i' \in \sem{\tau_i'}$, $b_i \in
\semap{\tranty{\tau_i}}$, and $b_i' \in \semap{\tranty{\tau_i'}}$, such that
$(a_i, b_i) \in R_{\tau_i}$ and $(a_i', b_i') \in R_{\tau_i'}$.  
Moreover, suppose that the implementation of the \ami{} methods are asymptotically sound.
Then, for all $\delta_{\mathcal{T}}$ of the form
$$
\delta_{\mathcal{T}} = \{x_i' \mapsto r_i | 1 \leq i \leq m\}
$$
where the $r_i$s are variable names in the lowered language, if we let
$$
\gamma := \{x_i \mapsto a_i | 1\leq i\leq n\}
\qquad
\delta := \{x_i' \mapsto a_i' | 1\leq i\leq m\}
\qquad
\delta' := \{ r_i \mapsto b_i' | 1\leq i\leq m\}
$$
we will have
$$
(\sem{t}(\gamma, \delta),
\semap{\tran{\delta_{\mathcal{T}}}{t}}(\tranty{\gamma}\cup \delta'))
\in R_{\tau}.
$$
\end{lemma}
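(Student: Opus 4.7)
The plan is to proceed by structural induction on the typing derivation of $t$, establishing the appropriate relation $R_\tau$ for each syntactic form using the inductive hypotheses on the immediate subterms. The two workhorses will be (i) the \emph{continuous mapping theorem} for almost-sure convergence on the metrizable spaces defined for our base, product, and bag types (with the symmetric product topology on $\bag(X)$), and (ii) the assumed asymptotic soundness of the \ami{} primitives $\simulate$, $\logpdf$, $\prob$. Lemma~\ref{lem:sa-safe-exact} (about $\exact$ terms) will handle the subterms that must be \emph{pointwise} equal, as required by the event and event-0 relations.

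I would first dispatch the easy cases. Constants and variables trivially satisfy $R_\tau$ since the assumed $(a_i, b_i) \in R_{\tau_i}$ and $(a_i', b_i') \in R_{\tau_i'}$ supply exactly what is needed; table identifiers $\id$ in $\Gamma$ are related by assumption and $\rename$ just relabels. For scalar compound expressions $\op(e_1, \ldots, e_n)$ where $\continuous(t)$ holds, the IH gives a.s. convergence of each $\semap{\tran{\delta_\transymbol}{e_i}}$ to $\sem{e_i}$, and since $\continuous(\op)$ holds (with the null-propagating $\op_s$ treated as continuous via the coproduct topology on $\sem{\sigma}$), the continuous mapping theorem assembles the conclusion. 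The $\probof$ case uses asymptotic soundness of $\prob_\id$ or $\logpdf_\id$: their arguments are safe event/event-0 terms, so by IH the sequences $\{\semap{c^0}_n\}$, $\{\semap{c^1}_n\}$, $\{\semap{c^i_j}_n\}$ are eventually constant at the correct values almost surely, and soundness gives a.s. convergence of the returned real-valued estimator.

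The event and event-0 cases exploit the stronger relations $R_{C^1[\cols]}$, $R_{C^0[\cols]}$, which require \emph{almost sure equality at every $n$}. For $\id.\col_i \op e$ with $\safe = \exact(e)$, Lemma~\ref{lem:sa-safe-exact} guarantees that $\semap{\tran{\delta_\transymbol}{e}}(\ldots)_n$ equals $\semex{\tran{\delta_\transymbol}{e}}(\gamma_n) = \sem{e}(\gamma,\delta)$ almost surely for each $n$, so the constructed subsets (resp. $(\pi_i, v)$ pairs) coincide almost surely. Conjunctions and disjunctions are preserved because intersection/union of almost-surely-equal sets remain almost-surely-equal (countable intersection of null sets is null).

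For table expressions, $R_\tabtype$ is weak convergence of the random measures, and I would verify continuity of each bag-level operation used in the lowering: $\lowerjoin$ comes from a continuous map $X^n \times Y^m \to \bag(X\times Y)$ lifted through the coproduct structure; $\map$ is continuous because $\bag$ is a functor on $\mathbf{Top}$; $\mathbf{replicate}$ and $\mathbf{mapreduce}$ reduce to $\map$ and continuous union. Combined with the portmanteau theorem (weak convergence is preserved under pushforward by continuous maps), the IH on subterms yields the required weak convergence. The $\generate$ and $\modjoin$ cases combine this with asymptotic soundness of $\simulate_\id$ (now interpreted as weak convergence of the returned measure on tuples). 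Crucially, $\where$ is absent from the $\safe$ macro, so we never need to handle discontinuous indicator-based filtering against approximated scalars---this is the structural reason the lemma is true and also the main hazard: the main obstacle is really bookkeeping the topologies consistently and checking that each operation is continuous at the \emph{limit points}, especially where null values and the coproduct topology interact. Once that case analysis is done, the theorem of the paper follows by instantiating the lemma at $\Delta = []$ and extracting weak convergence of the resulting measure on tables from $R_\tabtype$.
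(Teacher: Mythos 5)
Your proposal is correct and follows essentially the same route as the paper's proof: structural induction over normalized terms, the continuous mapping theorem on the metrizable (symmetric-product) topologies for the table cases, asymptotic soundness of the \ami{} primitives for $\probof$/$\generate$/$\modjoin$, the exactness lemma to get the pointwise almost-sure equality demanded by $R_{C^0}$ and $R_{C^1}$, and the observation that $\where$ is excluded by the $\safe$ macro so the discontinuous filtering case never arises. The only nitpick is that the event relations require equality for \emph{all} $n$ almost surely rather than "eventually constant," but this does not affect the argument.
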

\begin{proof}[proof sketch]
We prove the lemma case by case for normalized \iql{} queries.

\textbf{Scalar Expressions}
\begin{description}
\item[constants] If $t$ is a constant of type $\sigma$, then then
$\tran{\delta_\mathcal{T}}{t} = t$ by the lowering rules and so
$$
\sem{t}(\gamma, \delta) = t = \semap{t}(\tranty{\gamma}) = 
\semap{\tran{\delta_\mathcal{T}}{t}}.
$$
As constant sequences are almost surely convergent to their value, we can
conclude that $(\sem{t}(\gamma, \delta),
\semap{\tran{\delta_\mathcal{T}}{t}}(\tranty{\gamma}\cup\delta')) \in
R_\sigma$.

\item[column values] If $t$ is of the form $\id.\col$, then by the typing rules
of \iql{} the local context $\Delta$ should contain $T[\id]\{\cols\}$ as its
last entry. Hence, either by our assumption on the context, or by the induction hypothesis, $\id$ is a safe term. As projection is a continuous operation, we have that 
\begin{align*}
&\semap{\tran{\delta_\mathcal{T}}{\id.\col_i}}(\tranty{\gamma})_n \\
&= \pi_i(\semap{\tran{\delta_\mathcal{T}}{\id}}(\tranty{\gamma})(\tranty{\gamma}))_n \\
&\toas \pi_i(\sem{\id}(\gamma, \delta)) \\
&= \sem{\id.\col_i}(\gamma, \delta)
\end{align*}

\item[mathematical operations] If $t$ is of the form $\op(e_1, \ldots, e_n)$
where $\Gamma;\Delta\vdash e_i:\sigma_i$ and $\continuous(\op)$, by definition of $R_{\sigma_i}$ and our assumption we must have
$\semap{\tran{\delta_\mathcal{T}}{e_i}}(\tranty{\gamma})_n \toas \sem{e_i}$.
By the lowering rules we have
$$
\semap{\tran{\delta_\mathcal{T}}{\op(e_1, \ldots, e_n)}}(\tranty{\gamma}) =
\op(\semap{\tran{\delta_\mathcal{T}}{e_1}}(\tranty{\gamma}),
\ldots,
\semap{\tran{\delta_\mathcal{T}}{e_n}}(\tranty{\gamma})),
$$
As $\op$ is continuous, we can conclude that
$$
\semap{\tran{\delta_\mathcal{T}}{\op(e_1, \ldots, e_n)}}(\tranty{\gamma})_n
\toas 
\sem{\op(e_1, \ldots, e_n)}(\gamma, \delta).
$$
Hence, by definition of $R_\sigma$ our claim holds.

\item[probability of an event] If $t$ is of the form $\probof c \under m$ with
$\Gamma, \Delta\vdash c: C^1\{\cols\}$, then by normalization assumption $t$
must necessarily be in the normal form
$$
\probof c \under \id \given c^0 \given c^1.
$$
Now,
{\small
\begin{align*}
&\semap{\tran{\delta_\mathcal{T}}{\probof c \under \id \given c^0 \given c^1}}
    (\tranty{\gamma})_n \\
&\hspace{2em}= \semap{\prob\tran{\delta_\mathcal{T}}{c^0}
    \tran{\delta_\mathcal{T}}{c^1}
    \tran{\delta_\mathcal{T}}{c}}(\tranty{\gamma})_n
& \text{(lowering rules)} \\
&\hspace{2em}\toas \semex{\prob\tran{\delta_\mathcal{T}}{c^0}
    \tran{\delta_\mathcal{T}}{c^1}
    \tran{\delta_\mathcal{T}}{c}}(\tranty{\gamma})
& \text{(sound \ami{} implementation)} \\
&\hspace{2em}= \semex{\tran{\delta_\mathcal{T}}{\probof c \under \id \given c^0 \given c^1}}
    (\tranty{\gamma})
& \text{(lowering rules)} \\
&\hspace{2em}= \sem{\probof c \under \id \given c^0 \given c^1}(\gamma).
& \text{(exact guarantee)} \\
\end{align*}
}
This is a base case for the induction.

\item[likelihood of an event-0] If $t$ is of the form $\probof c \under m$ with
$\Gamma, \Delta\vdash c: C^0\{\cols'\}$, then the proof is almost identical to
the case where $\Gamma, \Delta\vdash c: C^1\{\cols\}$.
\end{description}

\textbf{Event Expressions}
\begin{description}
\item[primitive events] Suppose $\Gamma; \Delta\vdash t: C^1\{\cols\}$ is of
the form $\id.\col_i\ \op\ e$ where
$$
\cols = \col_1:\sigma_1, \ldots, \col_m:\sigma_m.
$$
For ease of notation we let $\sigma_g := (\sigma_1, \ldots, \sigma_m)$. By our
safety assumption and the definition of the $\safe$ macro, the subterm $e$ must
be exact. By Lemma \ref{lem:sa-safe-exact}, we must then have
$\semap{\tran{\delta_\mathcal{T}}{e}}(\tranty{\gamma})_n = 
\semex{\tran{\delta_\mathcal{T}}{e}}(\tranty{\gamma})$ for all $n$ almost
surely. Hence, by definition of the lowering rules and the approximate
semantics
\begin{align*}
\semap{\tran{\delta_\mathcal{T}}{\id.\col_i\ \op\ e}}(\tranty{\gamma})_n &= 
\semap{(\id, i)\ \op\ \tran{\delta_\mathcal{T}}{e}}(\tranty{\gamma})_n \\
&= \{(x_1, \ldots, x_m) \in \sem{\sigma_g} |
    x_i\ \op\ \semap{\tran{\delta_\mathcal{T}}{e}}(\tranty{\gamma})_n\} \\
&= \{(x_1, \ldots, x_m) \in \sem{\sigma_g} |
    x_i\ \op\ \semex{\tran{\delta_\mathcal{T}}{e}}(\tranty{\gamma})\} \\
&= \semex{(\id, i)\ \op\ \tran{\delta_\mathcal{T}}{e}}(\tranty{\gamma}) \\
&= \semex{\tran{\delta_\mathcal{T}}{\id.\col_i\ \op\  e}}(\tranty{\gamma}),
\end{align*}
almost surely. Applying the soundness theorem for the exact \ami{}
implementations, we are done.

\item[conjunctions and disjunctions] If $\Gamma;\Delta\vdash t:C^1\{\cols\}$ is
of the form $c_1^1 \land c_2^1$, then by our inductive assumption we must have
$$
\sem{c_i^1}(\gamma, \delta) = 
\semap{\tran{\delta_\mathcal{T}}{c_i^1}}(\tranty{\gamma})_n
$$
almost surely for all $n$. Hence,
\begin{align*}
\semap{\tran{\delta_\mathcal{T}}{c_1^1\land c_2^2}}(\tranty{\gamma})_n
&= \semap{\tran{\delta_\mathcal{T}}{c_1^1}\land\tran{\delta_\mathcal{T}}{c_2^1}}
(\tranty{\gamma})_n \\
&= \semap{\tran{\delta_\mathcal{T}}{c_1^1}}(\tranty{\gamma})_n
\cap\semap{\tran{\delta_\mathcal{T}}{c_2^1}}(\tranty{\gamma})_n \\
&= \sem{c_1^1}(\gamma, \delta) \cap \sem{c_2^1}(\gamma, \delta) \\
&= \sem{c_1^1 \land c_2^1}(\gamma, \delta)
\end{align*}

The proof is almost identical when $\Gamma;\Delta\vdash t:C^1\{\cols\}$ is
of the form $c_1^1 \lor c_2^1$.
\end{description}

\textbf{Event-0 Expressions}
\begin{description}
\item[primitive event-0s] Suppose $\Gamma; \Delta\vdash t: C^0\{\col_i\}$ is of
the form $\id.\col_i=e$ where $\col_i:\sigma$. By our safety assumption and
the definition of the $\safe$ macro, the subterm $e$ must
be exact. By Lemma \ref{lem:sa-safe-exact}, we must then have
$\semap{\tran{\delta_\mathcal{T}}{e}}(\tranty{\gamma})_n = 
\semex{\tran{\delta_\mathcal{T}}{e}}(\tranty{\gamma})$ for all $n$ almost
surely. Hence, by definition of the lowering rules and the approximate
semantics
\begin{align*}
\semap{\tran{\delta_\mathcal{T}}{\id.\col_i= e}}(\tranty{\gamma})_n &= 
\semap{(\id, i)=\tran{\delta_\mathcal{T}}{e}}(\tranty{\gamma})_n \\
&= (\pi_i, \semap{\tran{\delta_\mathcal{T}}{e}}(\tranty{\gamma})_n) \\
&= (\pi_i, \semex{\tran{\delta_\mathcal{T}}{e}}(\tranty{\gamma})) \\
&= \semex{(\id, i)=\tran{\delta_\mathcal{T}}{e}}(\tranty{\gamma}) \\
&= \semex{\tran{\delta_\mathcal{T}}{\id.\col_i\ \op\  e}}(\tranty{\gamma}),
\end{align*}
almost surely. By the soundness theorem for the exact \ami{} implementations,
we are done.

\item[conjunctions] If $\Gamma;\Delta\vdash t:C^0\{\cols\}$ is
of the form $c_0^1 \land c_0^1$, then by our inductive assumption we must have
$$
\sem{c_i^0}(\gamma, \delta) = 
\semap{\tran{\delta_\mathcal{T}}{c_i^0}}(\tranty{\gamma})_n
$$
almost surely for all $n$. Hence,
\begin{align*}
\semap{\tran{\delta_\mathcal{T}}{c_1^0\land c_2^0}}(\tranty{\gamma})_n
&= \semap{\tran{\delta_\mathcal{T}}{c_1^0}\land\tran{\delta_\mathcal{T}}{c_2^0}}
(\tranty{\gamma})_n \\
&= \llet_{1\leq i\leq 2} (f_i,v_i) = \semap{
    \tran{\delta_\mathcal{T}}{c_i^0}
}(\tranty{\gamma})_n\ \iin\\
&\hspace{2em} \big(\lambda x.(f_1(x),f_2(x)),(v_1,v_2)\big) \\
&= \llet_{1\leq i\leq 2} (f_i,v_i) = \sem{c_i^0}(\gamma, \delta)\ \iin\\
&\hspace{2em} \big(\lambda x.(f_1(x),f_2(x)),(v_1,v_2)\big) \\
&= \sem{c_1^0\land c_2^0}(\gamma, \delta),
\end{align*}
almost surely.
\end{description}

\textbf{Table Expressions}
\begin{description}
\item[loaded tables] If $\Gamma;\Delta\vdash \id:T[\id]\{\cols\}$, then by assumption $\id$ is safe and there's nothing to show.

\item[renamed tables] If $t$ is of the form $\rename t' \as \id'$ then by the
typing rules of \iql{} we must have $\Gamma;\Delta\vdash t':T[?\id]\{\cols\}$,
and so by our inductive assumption
$$
\semap{\tran{\delta_\mathcal{T}}{t'}}(\tranty{\gamma})_n \toas 
\sem{t'}(\gamma, \delta).
$$
By definition of $\tran{\delta_\mathcal{T}}{-}$ we have 
\begin{align*}
\semap{\tran{\delta_\mathcal{T}}{\rename t' \as \id'}}(\tranty{\gamma})_n &=
\semap{\tran{\delta_\mathcal{T}}{t'}}(\tranty{\gamma})_n \\
&\toas \sem{t'}(\gamma, \delta) \\
&=\sem{\rename t' \as \id'}(\gamma, \delta).
\end{align*}

\item[joins] If $t$ is of the form $t_1 \join t_2$ then by the typing
rules of \iql{} we must have $\Gamma;\Delta\vdash t_i:T[?\id_i]\{\cols_i\}$,
and so by our inductive assumption
$$
\semap{\tran{\delta_\mathcal{T}}{t_i}}(\tranty{\gamma})_n \toas 
\sem{t}(\gamma, \delta).
$$
For ease of notation, we let $\sigma_g = (\sigma_1, \ldots, \sigma_n)$ and 
$\sigma_g' = (\sigma_1', \ldots, \sigma_m')$ where 
Now,
\begin{align*}
\semap{\tran{\delta_\mathcal{T}}{t_1 \join t_2}}(\tranty{\gamma})_n &=
\semap{\lowerjoin(\tran{\delta_\mathcal{T}}{t_1},
                  \tran{\delta_\mathcal{T}}{t_2})}(\tranty{\gamma})_n \\
&= \semap{t_1}(\tranty{\gamma})_n \otimes \semap{t_2}(\tranty{\gamma})_n
    \bind \\
&\qquad\left(\lambda S, S'. \return (\maptwo \splat S\ S')\right)
\end{align*}
As described in our topological preliminaries above, the mappings $\splat$, $\lowerjoin$ and $\maptwo$ are continuous.
Using the continuous mapping theorem and the basic properties of weak
convergence of measures we can pass the above to the limit and apply our
inductive hypothesis to get
\begin{align*}
\semap{\tran{\delta_\mathcal{T}}{t_1 \join t_2}}(\tranty{\gamma})_n
&\Rightarrow
\sem{t_1}(\gamma, \delta) \otimes \sem{t_2}(\gamma, \delta)
    \bind \\
&\qquad\left(\lambda S, S'. \return (\maptwo \splat S\ S')\right) \\
&= \sem{t_1 \join t_2}(\gamma, \delta).
\end{align*}

\item[filtered tables] If $t$ is of the form $t \where e$ then by definition of
the $\safe$ macro $t$ has to be unsafe. Hence, there is nothing to show in this
case.

\item[generated tables] If $t$ is of the form $\generate m \limit e$, by
the normalization assumption $t$ must be of the form $\generate \id \given c^0
\given c^1 \limit e$. By the lowering rules,
\begin{align*}
\semap{\tran{\delta_\mathcal{T}}{\generate\id \given c^0 \given c^1 \limit
e}}(\tranty{\gamma})_n &= \\
&\hspace{-20em}\semap{\replicate
    \tran{\delta_\mathcal{T}}{e},
    \simulate_\id(\tran{\delta_\mathcal{T}}{c^0},
                  \tran{\delta_\mathcal{T}}{c^1})}(\tranty{\gamma})_n
\end{align*}
Now, by definition of $\safe$ and $\exact$ and Lemma \ref{lem:sa-safe-exact}
we must have
\begin{align*}
\semap{\tran{\delta_\mathcal{T}}{c^0}}(\tranty{\gamma})_n &=
\semex{\tran{\delta_\mathcal{T}}{c^0}}(\tranty{\gamma}) \\
\semap{\tran{\delta_\mathcal{T}}{c^1}}(\tranty{\gamma})_n &=
\semex{\tran{\delta_\mathcal{T}}{c^1}}(\tranty{\gamma}) \\
\semap{\tran{\delta_\mathcal{T}}{e}}(\tranty{\gamma})_n &=
\semex{\tran{\delta_\mathcal{T}}{e}}(\tranty{\gamma}),
\end{align*}
almost surely for all $n$. 
Hence, using the definition of the approximate semantics we can see that
\begin{align*}
&\semap{\replicate
    \tran{\delta_\mathcal{T}}{e},
    \simulate_\id(\tran{\delta_\mathcal{T}}{c^0},
                  \tran{\delta_\mathcal{T}}{c^1})}(\tranty{\gamma})\\
&= \semex{\replicate
    \tran{\delta_\mathcal{T}}{e},
    \simulate_\id(\tran{\delta_\mathcal{T}}{c^0},
                  \tran{\delta_\mathcal{T}}{c^1})}(\tranty{\gamma})_n
\end{align*}
almost surely for all $n$. Applying the correctness result of the exact \ami{}
we are done.
\item[selects] By assumption, the term $e$ in the select is continuous, and by assumption the term $t$ is safe. Hence, the continuous mapping theorem directly applies as the semantics of select is the pushforward by the semantics of $e$ of the semantics of $t$.
\item[generative joins] by assumption, $\simulate_\id$ converges to the exact semantics almost surely. Singleton and pairing are continuous, as well as join and mapreduce, and so the whole translation of the generative join is continuous and converges to the exact semantics almost surely.

\end{description}

\textbf{Row Model Expressions} 
As the lowering is defined on normalized queries, all \tableModel{} terms
are of the form $\id\given c^0\given c^1$, where the$\given$clause are optional.
Therefore, these terms are never directly returned by the lowering transformation, and are only used in$\probof$and $\generate$expressions.
Hence, the proof is complete.
\end{proof}

Using the fundamental lemma, we can prove the following result on the soundness
of the lowering transformation in presence of approximate semantics.
\begin{theorem}[Consistent \ami{} Guarantee]
    \label{thm:consistent-ami-guarantee-appx}
Let $\Gamma, [] \vdash t: T[?\id]\{\cols\}$ be a safe query and
suppose the \ami{} methods have asymptotically sound implementations.
Then, for all evaluation context $\gamma$, we have 
$$
\lim_n (\semap{\tran{[]}{t}}(\tranty{\gamma})) = \sem{t}(\lim_n \gamma)
$$
$\mathbb{P}$-almost surely.
\end{theorem}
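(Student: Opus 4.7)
The plan is to prove Theorem~\ref{thm:consistent-ami-guarantee} by a standard logical-relations argument, where the relation at each type pairs an exact-semantics value with a sequence of random variables that almost surely converge (in an appropriate topological sense) to it. Concretely, I would define a family of relations $R_\tau$ as in Figure~\ref{fig:logical-relations}: at a base type $\sigma$, $(x,\{X_n\}) \in R_\sigma$ iff $X_n \to x$ almost surely; at a table type, the sequence of measures converges weakly almost surely; at an event type, the random sets must equal the exact set almost surely for every $n$; and at an event-0 type, the random witness values must almost surely match the exact ones. The theorem would then follow from a fundamental lemma: if all free variables are related and the term satisfies the appropriate $\safe$/$\continuous$/$\exact$ predicates, then the exact and approximate denotations are related.

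To run the induction, I first need to set up the topologies carefully so that the continuous mapping theorem applies to each syntactic construct. Base types get their standard topology (discrete on $\BB,\ZZ,\NN,\str$; Euclidean on $\RR,\RR^+$; with $\{\star\}$ attached by coproduct), products get the product topology, and bags are equipped with the topology of symmetric products. Then I would verify once and for all that the semantic primitives $\map$, $\filter$, $\singleton$, $\lowerjoin$, $\mapreduce$, $\replicate$, $\splat$, and the lifted $\op_s$, $\op_l$ are continuous in this setup. This is routine but must be done up-front so that every inductive case reduces to an application of the continuous mapping theorem and basic properties of weak convergence.

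The inductive proof of the fundamental lemma proceeds by cases on the typing derivation. Constants, column projections, and arithmetic $\op$ are immediate from the continuity facts above. The key base case is for $\probof$, $\simulate$, $\logpdf$, $\prob$: here I appeal directly to asymptotic soundness of the AMI implementation, combined with Lemma~\ref{lem:sa-safe-exact} to ensure that the event/event-0 subterms fed into the AMI methods are in fact $\exact$ (enforced by the $\safe$ macro) and therefore match the exact semantics almost surely for every $n$, not merely in the limit. For joins, selects, generates, and generative-joins, the inductive hypotheses supply almost-sure weak convergence of the input measures, and the continuity facts from the second paragraph let me pass to the limit. Conjunction/disjunction of events and conjunction of events-0 reduce to equalities of random sets/pairs, which propagate exactly.

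The hard part will be the interaction between the$\where$clause and the rest: if the predicate depended on an approximated value, the indicator function in $\filter$ would typically be discontinuous at exactly the threshold, breaking the continuous mapping argument (this is precisely the pathological example discussed before the theorem statement). This is why the $\safe$ macro rules out$\where$entirely, and why$\select$requires $\continuous$ of its scalar expressions and primitive events require $\exact$ of their scalar RHS. I would make this restriction essential by checking that each inductive case uses exactly the precondition the $\safe$ macro guarantees --- so the proof is really a bookkeeping exercise justifying that the $\safe$ predicate is strong enough to close the induction, and no stronger. Finally, the top-level theorem follows by instantiating the fundamental lemma at the empty local context $\Delta = []$, where $\delta$ and $\delta'$ are trivial, yielding $(\sem{t}(\gamma,[]),\semap{\tran{[]}{t}}(\gamma)) \in R_{T[?\id]\{\cols\}}$, which unfolds to the desired almost-sure limit statement.
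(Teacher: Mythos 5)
Your proposal follows essentially the same route as the paper's proof: the same logical relations (almost-sure convergence at base and table types, almost-sure equality for events and events-0), the same topological preliminaries establishing continuity of the bag primitives so the continuous mapping theorem applies, the same fundamental lemma proved by induction on normalized terms with the AMI base cases discharged by asymptotic soundness and the exactness lemma, and the same final instantiation at the empty local context. The structure and key ideas match the paper's argument throughout.
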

\begin{proof}
Since the \ami{} methods are assumed to have
asymptotically sound implementations and the queyr safe, by the fundamental lemma we must have, for all $\gamma$ and $\delta=[]$, that
\[
(\sem{t}(\gamma, []),
\semap{\tran{[]}{t}}(\tranty{\gamma}, [])) \in R_{T[?\id]\{\cols\}}
\]
By the second part of the definition of the logical relation for table types, we have that almost surely $\lim_n (\semap{\tran{[]}{t}}(\gamma)) = \sem{t}(\lim_n \gamma)$, as required.
\end{proof}

We finally note that we gave a sufficient but not necessary condition for the soundness of the approximate semantics of the lowered language, and leave the development of a more permissive safe macro for future work.

\section{Full language}
\label{sec:full-language}

Our implementation of \iql{} contains several more primitives and
operations than the core language described in the main text.
Here, we present a fuller language more representative of the
implementation.
Yet, we still omit several constructs such as SQL's LEFT JOIN, RIGHT
JOIN, and OUTER JOIN which can be expressed with the primitives we
present.
These constructs are omitted for brevity and because they are not
essential to the core functionality of the language, but they can of
course be convenient and have optimized implementations in practice.

We assume given a set of aggregates $A\in \mathcal{A}$.
For aggregates $A$, we write $A:\sigma_1\to\sigma_2$ to indicate that
the aggregate operates on elements on type $\sigma_1$ and returns
elements of type $\sigma_2$.
\Cref{fig:aggr-type} shows a list of common aggregates and their
types.$\dedup$removes duplicates rows in a table, while$\duplicate$creates copies of each row in a table.
$\with t_1 \as \id': t_2$ is a convenient notation for introducing a
new table name $\id'$ bound to the value of $t_1$ in the context of a
table expression $t_2$.
It is a SQL equivalent of a let binding.
It can be used jointly with$\modjoin$to generate multiple possible
completions of a row under a model.
$\mi$computes the conditional mutual information between two sets of
columns in a table, conditioned on an event or event-0 $c^i$ under a
model $m$.
In general, event if marginal densities can be computed exactly for a
model $m$, the (conditional) mutual information may not be computable
in closed form, and is typically approximated using Monte Carlo
methods~\citep{saad2017dependencies}.
We can for instance use$\mi$as a compact way to express the
computation presented in \cref{fig:example-cmi-section2}.

The example in \cref{fig:example-cmi-section2} uses a syntax closer to SQL in its use of$\group\by$. It is possible to express the same query using the syntax in \cref{fig:syntax-iql-full} by using its$\group\by$operator, as follows.
Let $t$ be the query from lines 2-17 in \cref{fig:example-cmi-section2}. Then the query in \cref{fig:example-cmi-section2} can be expressed as
$\group t \by [\text{table.weight} \as \text{weight}] \aggr \Avg(\text{table.log\_pxy\_div\_px\_py}) \as \text{mutual\_info}$.

\begin{figure}
    \centering
\begin{tabular}{|c|c|}
    \hline
    Aggregate & Types \\
     $\Sum$ & $\sigma_c\to\real$, $\inte\to\inte$, $\nat\to\nat$ \\
      $\Avg$ & $\inte\to\real$, $\nat\to\real$,  $\sigma_c\to\sigma_c$  \\
      $\Max, \Min$ & $\sigma_c\to\sigma_c,\inte\to\inte$,  $\nat\to\nat$, $\bool\to\bool$ \\
      $\Count,\Countdis$ & $\sigma\to\nat$ \\
      $\Concat$ & $\str\to\str$ \\ \hline
\end{tabular}
    \caption{Supported types of aggregate operators.}
    \label{fig:aggr-type}
\end{figure}

\begin{figure}
    \fbox{
        \parbox{\textwidth}{
\begin{align*}
    \text{Continuous base types}\ \sigma_c &::= \real \gor \posreal \gor \ranged(a,b) \\
    \text{Discrete base types}\ \sigma_d   &::= \inte \gor \str \gor \nat\gor  \bool \gor \categorical(\textsc{n}_1,\ldots,\textsc{n}_k) \\
    \text{Base Types} \ \sigma &::= \sigma_c \gor \sigma_d \\
    \text{Table Types} \ \tabtype &::= T[?\id]\tabty{\col_1: \sigma_1,\ldots,\col_n: \sigma_n} \\
    \text{\tableModel{} Types}\ \modtype &::=   M[?\id]\tabty{\col_1: \sigma_1,\ldots,\col_n: \sigma_n}\\
    \text{Event Types}\ \evtype &::= \eventtype\tabty{\col_1: \sigma_1,\ldots,\col_n: \sigma_n} \gor \densitytype\tabty{\col_1: \sigma_1,\ldots,\col_n: \sigma_n} \\
    \text{Aggregates} \ A &::= \Sum \gor \Avg \gor \Max \gor \Min \gor \Count \gor \Concat \gor \Countdis\\
    \text{Operations}\ \op &::= + \gor * \gor \div \gor \log \gor \exp \gor \sqrt{\cdot} \gor\ >\ \gor\ <\ \gor\ =\ \gor \ldots \\
    \text{Table Expressions} \ t &::= \id
    \gor t_1\union t_2  \gor t_1\ \join\ t_2 \gor \rename t \as \id  \\
    &\quad \gor \dedup t \gor t \duplicate e \timess  \gor t\ \where\ e \gor \with t_1 \as \id' : t_2\\
    \oldcons{&\quad\gor \select e_1 \as \col_1,\ldots,e_n \as \col_n \from t\\}
    \oldcons{&\quad\gor \group t \by [e_1 \as \col_1,\ldots,e_n\as \col_n] \\
    &\quad\quad\aggr A_1(e'_1)\as\col'_1,\ldots,A_m(e'_m)\as\col'_m\\}
    &\quad\gor \generate m \limit e \gor t \modjoin m \given c^i\\
    \newcons{& \quad\gor \select e_1 \as \col_1,\ldots,e_n \as \col_n \from \ttable t, \model m\\}
    \text{\tableModel{} Expressions}\ m &::= \id \gor m \given c^i
    \gor \rename m \as \id\\
     \text{Scalar Expressions}\ e &::=  \id.\col \gor  \op(e_1,\ldots,e_n) \gor\probof c^i \under m \\
    &\quad \gor \mi (\id.\cols, \id.\cols', c^i) \under m \\
     \text{Event Expressions}\ c^1&::= \bigwedge_{1\leq i\leq 2} c^1_i \gor  \bigvee_{1\leq i\leq 2} c^1_i \gor \id.\col\ \op\ e \\
      \text{Event-0 Expressions}\ c^0 &::=  \bigwedge_{1\leq i\leq 2} c^0_i \gor \id.\col = e
\end{align*}
        }}
    \caption{Full syntax of \iql{}.}
    \label{fig:syntax-iql-full}
\end{figure}

\begin{figure}
    \fbox{
        \parbox{\textwidth}{
        \centering
    \begin{tabular}{c}
        $\ctx \vdash t:T[?\id]\{\cols\}$
        \\ \hline
        $\ctx \vdash \dedup t:T[?\id]\{\cols\}$
     \end{tabular}
     \quad
     \begin{tabular}{c}
        $\ctx \vdash t_1:T[\id_1]\{\cols\}$ \quad  $\ctx \vdash t_2:T[\id_2]\{\cols\}$
      \\ \hline
      $\ctx \vdash t_1 \union t_2:T[]\{\cols\}$
\end{tabular}
 \medskip

 \begin{tabular}{c}
    $\ctx \vdash t:T[?\id]\{\cols\}$    \quad
    $\ctx \vdash e:\nat$
    \\ \hline
    $\ctx \vdash t \duplicate e \timess :T[?\id]\{\cols\}$
 \end{tabular}
 \medskip

 \begin{tabular}{c}
    $\ctx \vdash t_1:T[?\id]\{\cols\}$ \quad
    $\Gamma, T[\id'']\{\cols\}; \Delta \vdash t_2:T[?\id']\{\cols\}$
    \quad $\id''$ fresh
    \\ \hline
    $\ctx \vdash \with t_1 \as \id'' : t_2 :T[?\id']\{\cols\}$
 \end{tabular}
   \medskip

   \oldcons{\begin{tabular}{c}
    $\ctx \vdash t:T[?\id]\{\cols\}$ \quad
    $\ctx, T[?\id]\{\cols\}\vdash e_i:\sigma_i$ for $1\leq i\leq n$\\
    $\ctx, T[?\id]\{\cols\}\vdash e'_j:\sigma'_j$ \quad $A_j:\sigma_j'\to\sigma_j''$ for $1\leq j\leq m$  \\
    $\widebar{e} \as \widebar{\col} := [e_1 \as \col_1,\ldots,e_n\as \col_n] $ \quad $\widebar{\col}\cap \widebar{\col'}=\emptyset$ \\
    $\widebar{A(e')} \as \widebar{\col'}:= A_1(e'_1)\as\col'_1,\ldots,A_m(e'_m)\as\col'_m$
    \\ \hline
    $\ctx \vdash \group t \by \widebar{e} \as \widebar{\col} \aggr \widebar{A(e')} \as \widebar{\col'}$ \\
    $:T[]\{\col_1:\sigma_1,\ldots,\col_n:\sigma_n,\col'_1:\sigma''_1,\ldots,\col'_m:\sigma''_m\}$
 \end{tabular}
 \medskip}

 \begin{tabular}{c}
        $\ctx \vdash m:M[?\id]\{\cols\}$ \quad
        $\ctx,  M[?\id]\{\cols\} \vdash c^i: \evtype $\quad $\cols', \cols''\subseteq \cols$
       \\ \hline
       $\ctx \vdash \mi (\id.\cols', \ \id.\cols'', c^i)\under m :\real$
    \end{tabular}
        }}
    \caption{Type system for the \iql{} expressions omitted in the main text.}
    \label{fig:type-system-iql-full}
\end{figure}

\begin{figure}
    \fbox{
        \parbox{\textwidth}{
    \begin{align*}
        \sem{\dedup t}(\gamma,\delta) &= \Big(\lambda x.\fold \big(\lambda r,y. \{r\} \cup \filter (\lambda r'. r=r')\ y\big)\ \emptyset \ x\Big)_*
        \sem{t}(\gamma,\delta) \\
        \sem{t_1 \union t_2} (\gamma,\delta) &=
        (\lambda x,y.\ x\cup y)_*
        (\sem{t_1}(\gamma,\delta) \otimes  \sem{t_2}(\gamma,\delta))\\
        \sem{t \duplicate e \timess} (\gamma,\delta) &=
         \llet n=\sem{e}(\gamma,\delta)\ \iin
         \Big(\lambda y.\bigcup_{1\leq i\leq n} y\Big)_*
         \sem{t}(\gamma,\delta) \\
         \sem{\with t_1 \as \id'' : t_2} (\gamma,\delta) &= \sem{t_1}(\gamma,\delta) \bind (\lambda x. \sem{t_2}(\gamma[\id''\mapsto x], \delta))
    \end{align*}
    \begin{align*}
        &\sem{ \group t:T[?\id]\{\cols\} \by \widebar{e} \as \widebar{\col} \aggr \widebar{A(e')} \as \widebar{\col'}} (\gamma,\delta) = \\
        &\Big(\lambda x.\llet y= \map (\lambda r.\big(\sem{\widebar{e}}(\gamma, \delta[\id\mapsto r]),\sem{\widebar{e'}}(\gamma, \delta[\id\mapsto r])\big))\ x \\
        &\quad \llet keys = \map (\lambda (a,b). b)\ y \\
        &\quad \llet bags = \map (\lambda a. (a,\fold (\lambda (a',b'). \iif a=a'\tthen \{b'\}\ \eelse \{\})\ \{\}\ y))\ keys \\
        &\quad \map (\lambda (a,s). (a,\widebar{A}(s)))\ bags \Big)_* \sem{t}(\gamma,\delta)
        \end{align*}
        where $\sem{\widebar{e}}(\gamma, \delta) = (\sem{e_1}(\gamma, \delta),\ldots,\sem{e_n}(\gamma, \delta))$ and
         $\sem{\widebar{e'}}(\gamma, \delta) = (\sem{e_1'}(\gamma, \delta),\ldots,\sem{e_m'}(\gamma, \delta))$
        }}
    \caption{Denotational semantics for the \iql{} expressions omitted in the main text.}
    \label{fig:semantics-iql-full}
\end{figure}

\section{List of \iql{} queries}
\label{appx:list-of-queries}

\Cref{fig:examples} showcases the modelling capabilities of \iql{} on
a variety of data analysis examples. \Cref{benchmark-queries-1} shows
the queries used from the benchmark in \cref{tab:runtime}.

\begin{figure}[!t]
    \captionsetup[subfigure]{skip=0pt,belowskip=0pt,aboveskip=0pt}
\captionsetup[sub]{skip=2pt}

\begin{subfigure}[b]{.48\linewidth}
    \centering
        \input{figures/iql-code/mi-est}
\caption{Entropy on conditioned model}
\label{fig:entropy}
\end{subfigure}
\hfill
\begin{subfigure}[b]{.48\linewidth}
    \centering
        \input{figures/iql-code/kl-est}
\caption{Kullback-Leibler divergence}
\label{fig:kl}
\end{subfigure}

\medskip

\begin{subfigure}[b]{.48\linewidth}
    \centering
        \input{figures/iql-code/nested-query}
\caption{Conditioning a model on output of another}
\label{fig:cond-on-cond}
\end{subfigure}
\hfill
\begin{subfigure}[b]{.48\linewidth}
    \centering
        \input{figures/iql-code/map-est}
\caption{Maxiumum-a-posteriori}
\label{fig:map}
\end{subfigure}

\medskip

\begin{subfigure}[b]{.48\linewidth}
    \centering
        \input{figures/iql-code/anom-detec}
\caption{Anomaly detection}
\label{fig:anom-detection}
\end{subfigure}
\hfill
\begin{subfigure}[b]{.48\linewidth}
    \centering
        \input{figures/iql-code/prediction-new}
\caption{Prediction}
\label{fig:preiction}
\end{subfigure}

\medskip

\begin{subfigure}[b]{.48\linewidth}
    \centering
        \input{figures/iql-code/likely-synthetic-data}
\caption{Likely synthetic data}
\label{fig:likely-data}
\end{subfigure}
\hfill
\begin{subfigure}[b]{.48\linewidth}
    \centering
        \input{figures/iql-code/cond-synth-gen}
\caption{Conditional synthetic data generation}
\label{fig:cond-synthesis-gen}
\end{subfigure}

\medskip

\begin{subfigure}[b]{.48\linewidth}
    \centering
        \input{figures/iql-code/imputation}
\caption{Imputation}
\label{fig:imputation}
\end{subfigure}
\hfill
\begin{subfigure}[b]{.48\linewidth}
    \centering
        \input{figures/iql-code/cmi-est}
\caption{Conditional mutual information}
\label{fig:cmi}
\end{subfigure}

\caption{Example queries in \iql{} for a variety of data analysis tasks.}
\label{fig:examples}
\end{figure}

\begin{figure}
            \centering
\begin{tabular}{cc}
    logpdf 1 &  
    \begin{minipage}{0.8\textwidth}
        \centering
    \begin{lstlisting}[style=iql-small]
        SELECT 
          PROBABILITY OF Period_minutes = 98.6 
          UNDER model GIVEN Country_of_Operator
        FROM data
\end{lstlisting}
    \end{minipage} \\ 
    logpdf 2 & 
    \begin{minipage}{0.8\textwidth}
        \centering
    \begin{lstlisting}[style=iql-small]
        SELECT
            PROBABILITY OF Period_minutes = 98.6 AND
                        Type_of_Orbit = \"Sun-Synchronous\"
            UNDER model GIVEN Country_of_Operator AND
                            Launch_Mass_kg
        FROM data
\end{lstlisting}
    \end{minipage} \\  
    logpdf 3 & 
    \begin{minipage}{0.8\textwidth}
        \centering
    \begin{lstlisting}[style=iql-small]
        SELECT
            PROBABILITY OF Period_minutes = 98.6 AND
                        Type_of_Orbit = \"Sun-Synchronous\" AND
                        Contractor = \"Lockheed Martin\"
            UNDER model GIVEN Country_of_Operator AND
                            Launch_Mass_kg AND
                            Inclination_radians
        FROM data
\end{lstlisting}
    \end{minipage} \\  
    logpdf 4 & 
    \begin{minipage}{0.8\textwidth}
        \centering
    \begin{lstlisting}[style=iql-small]
        SELECT
            PROBABILITY OF Period_minutes = 98.6 AND
                        Type_of_Orbit = \"Sun-Synchronous\" AND
                        Contractor = \"Lockheed Martin\" AND
                        Eccentricity = 0.001
            UNDER model GIVEN Country_of_Operator AND
                            Launch_Mass_kg AND
                            Inclination_radians AND
                            Apogee_km
        FROM data
\end{lstlisting}
    \end{minipage} \\  
    logpdf 5 &
    \begin{minipage}{0.8\textwidth}
        \centering
    \begin{lstlisting}[style=iql-small]
        SELECT
            PROBABILITY OF Period_minutes = 98.6 AND
                        Type_of_Orbit = \"Sun-Synchronous\" AND
                        Contractor = \"Lockheed Martin\" AND
                        Eccentricity = 0.001 AND
                        Purpose = \"Communications\"
            UNDER model GIVEN Country_of_Operator AND
                            Launch_Mass_kg AND 
                            Inclination_radians AND
                            Apogee_km AND
                            Power_watts
        FROM data
\end{lstlisting}
    \end{minipage} \\  
\end{tabular}
\caption{Queries from \cref{tab:runtime} in \iql{}.}
\label{benchmark-queries-1}
\end{figure}

\begin{figure}\ContinuedFloat
            \centering
    \begin{tabular}{cc}
    logpdf 6 & 
    \begin{minipage}{0.8\textwidth}
        \centering
    \begin{lstlisting}[style=iql-small]
        SELECT
            PROBABILITY OF
                Contractor = \"Microsat Systems Canada Inc\"
            UNDER model GIVEN 
                Country_of_Contractor
        FROM data
\end{lstlisting}
    \end{minipage} \\  
    logpdf 7 & 
    \begin{minipage}{0.8\textwidth}
        \centering
    \begin{lstlisting}[style=iql-small]
        SELECT
            PROBABILITY OF 
                Inclination_radians = 5.52 AND
                Operator_Owner = \"AMSAT-UK\"
            UNDER model GIVEN 
                Launch_Vehicle AND
                Eccentricity
        FROM data
\end{lstlisting}
    \end{minipage} \\  
    logpdf 8 & 
    \begin{minipage}{0.8\textwidth}
        \centering
    \begin{lstlisting}[style=iql-small]
        SELECT
            PROBABILITY OF 
                Purpose = \"Earth Observation/Research\" AND
                Period_minutes = 5512.43 AND
                Launch_Vehicle = \"Tsyklon 3\"
            UNDER model GIVEN 
                Eccentricity AND
                Dry_Mass_kg AND
                Launch_Mass_kg
        FROM data
\end{lstlisting}
    \end{minipage} \\  
    logpdf 9 & 
    \begin{minipage}{0.8\textwidth}
        \centering
    \begin{lstlisting}[style=iql-small]
        SELECT
            PROBABILITY OF 
                longitude_radians_of_geo = 2.19 AND
                Eccentricity = 0.00319 AND
                Inclination_radians = 20.67 AND
                Type_of_Orbit = \"Molniya\"
            UNDER model GIVEN 
                Launch_Mass_kg AND
                Launch_Vehicle AND
                Purpose AND
                Launch_Site
        FROM data
\end{lstlisting}
    \end{minipage} \\
        logpdf 10 & 
        \begin{minipage}{0.8\textwidth}
            \centering
        \begin{lstlisting}[style=iql-small,belowskip=0pt]
            SELECT
                PROBABILITY OF 
                    Period_minutes = 19529.87 AND
                    Type_of_Orbit = \"Deep Highly Eccentric\" AND
                    Launch_Site = \"Kodiak Launch Complex\" AND
                    Dry_Mass_kg = 5093.73 AND
                    Inclination_radians = 8.17
                UNDER model GIVEN 
                    Contractor AND
                    Launch_Mass_kg AND
                    Purpose AND
                    Perigee_km AND 
                    Power_watts
            FROM data
    \end{lstlisting}
        \end{minipage}
\end{tabular}
\caption{Queries from \cref{tab:runtime} in \iql{} (continued).}
\label{benchmark-queries-2}
\end{figure}


\end{document}